\documentclass[journal]{IEEEtran}
\addtolength{\topmargin}{9mm}
\usepackage{amsmath, amssymb,amsthm,cite}
\usepackage[dvips]{graphics}
\usepackage[utf8x]{inputenc}
\usepackage{graphicx}
\usepackage{psfrag}
\usepackage{bbm}
\usepackage{subfigure}
\usepackage{dblfloatfix}
\usepackage{mathtools}
\usepackage{algpseudocode}
\usepackage{algorithm}
\usepackage{hyperref}
\usepackage{color}
\usepackage{tcolorbox}
\usepackage{etoolbox}
\usepackage{mleftright}
\DeclareMathOperator*{\argmax}{\arg\max} 
\DeclareMathOperator*{\nn}{\nonumber}

\allowdisplaybreaks
\newcommand{\RNum}[1]{\uppercase\expandafter{\romannumeral #1\relax}}
\newtheorem{lemma}{Lemma}
\newtheorem{theorem}{Theorem}

\theoremstyle{definition}
\newtheorem{remark}{Remark}

\newtheorem{definition}{Definition}
\makeatletter
\def\blfootnote{\gdef\@thefnmark{}\@footnotetext}
\makeatother
\def\cX{{\mathcal X}}
\def\cY{{\mathcal Y}}
\def\cS{{\mathcal S}}

\def\cQ{{\mathcal Q}}

\newcommand{\pr}[1]{\left(#1\right)}
\DeclarePairedDelimiterX{\infdivx}[2]{(}{)}{%
  #1\;\delimsize\|\;#2%
}
\newcommand{\infdiv}{D\infdivx}

\title{Computable Upper Bounds on the Capacity of Finite-State Channels}
\author{Bashar~Huleihel,~\IEEEmembership{Student~Member,~IEEE,}
        Oron~Sabag,~\IEEEmembership{Member,~IEEE,}
        Haim~H.~Permuter,~\IEEEmembership{Senior~Member,~IEEE,}
        Navin~Kashyap,~\IEEEmembership{Senior~Member,~IEEE,}
        and~Shlomo~Shamai~(Shitz),~\IEEEmembership{Life~Fellow,~IEEE}}

\begin{document}
\maketitle
\begin{abstract}
We consider the use of the well-known dual capacity bounding technique for deriving upper bounds on the capacity of indecomposable finite-state channels (FSCs) with finite input and output alphabets. In this technique, capacity upper bounds are obtained by choosing suitable test distributions on the sequence of channel outputs. We propose test distributions that arise from certain graphical structures called $Q$-graphs. As we show in this paper, the advantage of this choice of test distribution is that, for the important classes of unifilar and input-driven FSCs, the resulting upper bounds can be formulated as a dynamic programming (DP) problem, which makes the bounds tractable. We illustrate this for several examples of FSCs, where we are able to solve the associated DP problems explicitly to obtain capacity upper bounds that either match or beat the best previously reported bounds. For instance, for the classical trapdoor channel, we improve the best known upper bound of $0.661$ (due to Lutz (2014)) to $0.584$, shrinking the gap to the best known lower bound of $0.572$, all bounds being in units of bits per channel use.
    
\end{abstract}

\blfootnote{This work was supported in part by the DFG via the German Israeli Project Cooperation (DIP), in part by the Israel Science Foundation (ISF), in part by the Cyber Center at Ben-Gurion University of the Negev, and in part by the WIN consortium via the Israel minister
of economy and science. The work of O. Sabag has been partially supported by the ISEF postdoctoral fellowship. The work of S. Shamai has also been supported by the European Union's Horizon 2020 Research And Innovation Programme, grant agreement no. 694630. The work of N. Kashyap was supported in part by a MATRICS grant (no. MTR/2017/000368) administered by the Science and Engineering Research Board (SERB), Govt. of India. This paper was presented in part at the 2019 IEEE
International Symposium on Information Theory \cite{8849776}.

B. Huleihel and H. H. Permuter are with the Department
of Electrical and Computer Engineering, Ben-Gurion University of the
Negev, Beer-Sheva 84105, Israel (e-mail: basharh@post.bgu.ac.il; haimp@post.bgu.ac.il).

O. Sabag is with the Department of Electrical Engineering, California Institute of Technology, Pasadena, CA 91125 USA (e-mail: oron@caltech.edu).

N. Kashyap is with the Department of Electrical Communication Engineering, Indian Institute of Science, Bangalore 560012, India (nkashyap@iisc.ac.in).

S. Shamai is with the Department of Electrical Engineering, Technion–Israel Institute of Technology, Haifa 3200003, Israel (e-mail:
sshlomo@ee.technion.ac.il).}

\begin{IEEEkeywords}
Channel capacity, dual capacity bound, dynamic programming (DP), finite state channels (FSCs).
\end{IEEEkeywords}

\section{Introduction}\label{sec:intro}
A finite-state channel (FSC) is a mathematical model for a discrete-time channel in which the channel output depends statistically on both the channel input and an underlying channel state, the latter taking values in a finite set. This model can represent a channel with memory since it allows the channel output to depend on past inputs and outputs via the channel state. In this paper, we investigate two important classes of FSCs, namely, unifilar and input-driven FSCs.

Finding a computable characterization of the capacity of these fundamental channels is a long-standing open problem in information theory. The investigation of FSCs dates back to classical works from the 1950s \cite{McMillan1953TheBT,Shannon_FSC,Blackwell58}. Besides their theoretical importance, these channels appear in many practical applications of wireless communication \cite{FSC_Wirless1,FSC_Wirless2}, and magnetic recording \cite{FSC_Magnetic}. 
Except for a few special cases where a closed-form single-letter capacity formula can be obtained, for general FSCs, only a multi-letter capacity formula exists \cite{Blackwell58,Gallager68}.

This paper advances the research on FSCs by providing a new technique to derive simple, analytical upper bounds on their capacity. For instance, consider the trapdoor channel (Fig. \ref{fig:TrapdoorChannel}) that was introduced by David Blackwell in 1961 \cite{Blackwell_trapdoor}. While its zero-error capacity \cite{Ahl_kaspi87,Ahlswede98} and its feedback capacity \cite{PermuterCuffVanRoyWeissman08} are known exactly, its channel capacity (without feedback and allowing a vanishingly small error probability) is still unknown. The best lower and upper bounds known are from \cite{kobayashi2003} and \cite{Trapdoor_Lutz}, respectively:
$$0.572\leq \mathsf{C} \leq 0.661,$$ 
where the capacity is measured in bits per channel use. In this work, we will show a novel upper bound, $\mathsf{C} \leq \log_2\left(\frac{3}{2}\right)(\approx 0.5849),$ that improves significantly upon the previous best upper bound. We will establish a general technique by which such specific bounds are relatively easy to obtain.

\begin{figure}[t]
    \centering
    \psfrag{B}[][][1]{$1$}
    \psfrag{C}[][][1]{$0$}
    \psfrag{D}[][][1]{$1$}
    \psfrag{E}[][][1]{$0$}
    \psfrag{F}[][][1]{$1$}
    \psfrag{G}[][][1]{$1$}
    \psfrag{J}[][][1]{$x_{t+2}$}
    \psfrag{K}[][][1]{$x_{t+1}$}
    \psfrag{L}[][][1]{$x_t$}
    \psfrag{M}[][][1]{$s_{t-1}$}
    \psfrag{N}[][][1]{$y_{t-1}$}
    \psfrag{O}[][][1]{$y_{t-2}$}
    \psfrag{Q}[][][1]{Channel}
    \includegraphics[scale = 1]{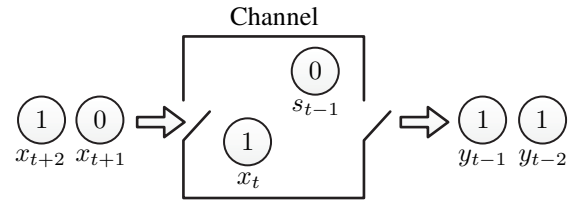}
    \caption{The trapdoor channel. The channel starts with a ball $s_{t-1}$ already in it, while a new ball $x_t$ is inserted. The channel output $y_t$ is $s_{t-1}$ or $x_t$ with equal probability, and the new channel state is the remaining ball.}
    \label{fig:TrapdoorChannel}
\end{figure}

Our upper bounds are based on a known technique called the dual capacity bounding technique, attributed to Tops{\o}e \cite{Topsoe67} --- see \cite[p.\ 147, Problem~1] {Dual_capacity}. This technique was used in \cite{MIMO_dual,Amos_Dual,Dual_Andrew_J,Duality_Kremer,Andrew_ISI,Vontobel_Dual} to obtain upper bounds for channel capacity in various contexts. In this technique, an upper bound on capacity is obtained by specifying a test distribution on the channel output process. The resulting bound is tight if the chosen test distribution is equal to the output distribution induced by the capacity-achieving input distribution. For an FSC, this output distribution is, in general, not i.i.d.. As a result, it is important to develop a systematic means of specifying a test distribution that has memory but which gives rise to a computable upper bound.

A standard choice of test distribution for channels with memory are Markov distributions of some finite order \cite{Dual_Andrew_J,Duality_Kremer,Vontobel_Dual}. However, we will use test distributions that belong to a more general class of finite-state processes. The distributions we consider are defined by a (strongly) connected\footnote{Here, by ``(strongly) connected'', we mean that between any pair of nodes $u,v$, there is a directed path from $u$ to $v$, and vice versa.} directed graph on finitely many nodes, in which each edge is labeled by a symbol from the channel output alphabet in such a way that the outgoing edges from any given node get distinct labels. For each node of the graph, we specify a probability distribution on the set of its outgoing edges. Then, walks on the graph starting from some distinguished initial node form a random process over the channel output alphabet. Following \cite{Sabag_UB_IT}, we call the underlying labeled directed graph a \emph{$Q$-graph}.\footnote{In \cite{Sabag_UB_IT,Sabag_twoway}, $Q$-graphs were used to specify mappings from channel output sequences into a finite set using directed graphs. The letter `$Q$' stands for `Quantized', as the set of nodes of the graph may be viewed as a quantization or binning of finite-length strings over the edge-label alphabet, i.e., over the output alphabet of the channel.} Note that the random process defined in this manner is a finite-state process, but it need not be Markov of any fixed order. On the other hand, it is easy to see that any Markov process of fixed order, say $m$, over a finite alphabet $\mathcal{A}$ can be defined on a certain $Q$-graph with $|\mathcal{A}|^m$ nodes, the set of nodes being in one-to-one correspondence with the set of strings of length $m$ over $\mathcal{A}$. As we will demonstrate, there is utility in going from the class of Markov test distributions to the more general class of test distributions defined on $Q$-graphs. For the specific case of the dicode erasure channel, we will show that a $Q$-graph on $3$ nodes yields an output distribution that outperforms all Markov distributions of order up to $2$.

For an FSC, the dual capacity upper bound obtained from a given test distribution is, in general, a multi-letter expression. One of the main theoretical contributions of our paper is showing that, for any test distribution defined on a $Q$-graph, the evaluation of this multi-letter expression can be formulated as an infinite-horizon average-reward dynamic programming (DP) problem. This formulation immediately gives us numerical as well as analytical tools to compute the multi-letter expression, thus yielding an explicit upper bound on capacity. Indeed, a well-known approach to handling DP optimization problems is by solving the associated Bellman equation --- see e.g., \cite{Arapos93_average_cose_survey}. Computer-based simulations of the dynamic program provide important insights into the solution of this equation.

In this paper, we use $Q$-graph based test distributions to bound from above the capacity of several well-known FSCs, namely, the trapdoor \cite{Blackwell_trapdoor}, Ising \cite{Berger90IsingChannel}, Previous Output is STate (POST) \cite{POSTchannel}, and dicode erasure \cite{henry_dissertation} channels. For each of these channels, we use the insights gained from numerical methods to arrive at an explicit analytical solution to the corresponding average-reward DP problem. In this manner, we obtain upper bounds on the capacities of these channels.

The relationship between channel capacity and DP was first observed in Tatikonda’s thesis \cite{Tatikonda00}, where it was shown that the \emph{feedback capacity} of a class of FSCs can be formulated as a DP problem. This approach was further developed in  \cite{Yang05,TatikondaMitter_IT09,PermuterCuffVanRoyWeissman08}, and yielded several new feedback capacity results for FSCs  \cite{PermuterCuffVanRoyWeissman08,Ising_channel,Sabag_BEC,Sabag_BIBO,OronBasharfeedback,PeledSabag}. However, in the case of capacity without feedback, except for the POST channel \cite{POSTchannel}, exact results are known only for certain FSCs with strict symmetry conditions, all with an i.i.d. capacity-achieving input distribution \cite{Gillbert_Mush,Gillbert_style,Song_Burst}.

The remainder of this paper is organized as follows. Section \ref{sec:prelimi} introduces our notation and defines the model. Section \ref{sec:main_results} introduces the dual capacity upper bound, gives some background on $Q$-graphs, and states our main result. Section \ref{DP_Main} gives a brief review of infinite-horizon DP and introduces the DP formulation of the dual capacity upper bound for FSCs. Section \ref{sec:analytic} presents our bounds on capacity for several specific FSCs. Finally, our conclusion appears in Section \ref{sec:conclusion}. To preserve the flow of the presentation, most of the proofs are given in the appendices.

\section{Notation and Model Definition}\label{sec:prelimi}
In this section, we introduce our notation and define our FSC model.

\subsection{Notation}
Throughout this paper, we use the following notations. The set of natural numbers (which does not include $0$) is denoted by $\mathbb{N}$, while $\mathbb{R}$ denotes the set of real numbers. Random variables will be denoted by capital letters and their realizations will be denoted by lower-case letters, e.g., $X$ and $x$, respectively. Calligraphic letters denote sets, e.g.,  $\mathcal{X}$. We use the notation $X^n$ to denote the random vector $(X_1,X_2,\dots,X_n)$ and $x^n$ to denote the realization of such a random vector. For a real number $\alpha\in[0,1]$, we define $\bar{\alpha}=1-\alpha$. The binary entropy function is denoted by $H_2(\alpha) = -\alpha\log_2(\alpha)-\bar{\alpha}\log_2(\bar{\alpha})$, where $\alpha\in[0,1]$. The probability mass function (pmf) of $X$ is denoted by $P_X$, the conditional probability of $X$ given $Y$ is denoted by $P_{X|Y}$, and the joint distribution of $X$ and $Y$ is denoted by $P_{X,Y}$. 
The probability $\Pr[X=x]$ is denoted by $P_X(x)$. When the random variable is clear from the context, we write it in shorthand as $P(x)$. 
For a conditional pmf $P_{Y|X}$, $P_{Y|X} \succ 0$ denotes that $P_{Y|X}(y|x)>0$ for all $x\in\mathcal{X}$ and $y\in\mathcal{Y}$.

Let $P_Y$ and $R_Y$ be two discrete probability measures on the same probability space. Then, $P_Y\ll R_Y$ denotes that $P_Y$ is absolutely continuous with respect to $R_Y$. The relative entropy between $P_Y$ and $R_Y$ is denoted by $D\left(P_Y\|R_Y\right)$. The conditional relative entropy is defined as $D(P_{Y|X}\|R_Y|P_X) = \mathbb{E}_X \left\{D(P_{Y|X}\|R_Y)\right\}$, where $\mathbb{E}_X\{\cdot\}$ denotes the expectation operator over $X$.

\begin{figure}[t]
\centering
    \psfrag{A}[][][0.92]{Encoder}
    \psfrag{C}[][][0.92]{Decoder}
		\psfrag{B}[][][0.85]{$P_{S^+,Y|X,S}$}
    \psfrag{E}[][][0.87]{$M$}
    \psfrag{F}[][][0.87]{$X^n$}
    \psfrag{G}[][][0.87]{$Y^n$}
		\psfrag{J}[][][0.87]{$\hat{M}$}
    \includegraphics[scale = 0.52]{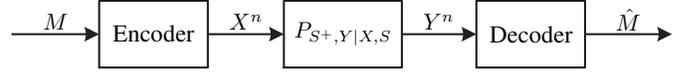}
    \caption{A finite-state channel (FSC) setting.}
    \label{fig:channelI}
\end{figure}
\subsection{FSCs}
We consider the standard finite-state channel, described in Fig. \ref{fig:channelI}. The channel is defined with finite input and output alphabets $\cX$ and $\cY$, respectively, and a finite set of states $\cS$. The input, output and state at time $t$ are denoted by $x_t$, $y_t$ and $s_t$, respectively. The defining property of an FSC is that, given $x_t$ and $s_{t-1}$, the pair $(s_t,y_t)$ is conditionally independent of all previous inputs, outputs and states, as well as of the message $m$ to be transmitted. To be precise,
\begin{align}\label{eq:FSC}
    P(s_t,y_t|x^t,s^{t-1},y^{t-1},m) = P_{S^+,Y|X,S}(s_t,y_t|x_t,s_{t-1}),
\end{align}
where $S$ denotes the channel state at the beginning of the transmission and $S^+$ represents the channel state at the end of the transmission. In particular, the transition probability kernel $P_{S^+,Y|X,S}$ is time-invariant, i.e., it does not depend on $t$. 
Furthermore, if there is no feedback, the conditional probability $P_{S^t,Y^t|X^t,S_0}$ decomposes as
\begin{align*}
    P_{S^t,Y^t|X^t,S_0}(s^t,y^t|x^t,s_0) = \prod_{i=1}^t P_{S^+,Y|X,S}(s_i,y_i|x_i,s_{i-1}).
\end{align*}

The following definition presents the indecomposability property of FSCs.
\begin{definition} \label{def:ind}
[\!\cite{Gallager68}, Ch. 4.6] An FSC is \emph{indecomposable} if for any $\epsilon>0$, there exists an $N$ such that, for all $n\geq N$,
\begin{align} \label{eq: indec: property}
	|P(s_n|x^n,s_0)-P(s_n|x^n,s^{\prime}_0)|\leq \epsilon
\end{align}
for any channel states $s_n$, $s_0$, $s^{\prime}_0$, and any input sequence $x^n$.
\end{definition}

Loosely speaking, for an indecomposable FSC, the effect of the initial channel state becomes negligible as time evolves. An alternative characterization of indecomposability \cite[Theorem~4.6.3]{Gallager68} is that for some $n$ and each input sequence $x^n \in \cX^n$, there is a choice of state $s_n$ at time $n$ ($s^n$ may depend on $x^n$) such that $P(s_n | x^n,s_0) > 0$ for all initial states $s_0$.

The capacity of an indecomposable channel is presented in the following theorem.
\begin{theorem}[\!\cite{Gallager68}, Ch. 4.6] \label{FSC_Capacity}
 The capacity of an indecomposable FSC is
\begin{align*}
	\mathsf{C} = \lim_{n\to\infty}\max_{P(x^n)}\frac{1}{n}I\pr{X^n;Y^n|S_0 = s_0},
\end{align*}
for any $s_0\in \mathcal{S}$.
\end{theorem}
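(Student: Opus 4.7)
The plan is a sandwich argument. Define the per-initial-state information rate $C_n(s_0) := \max_{P(x^n)} \frac{1}{n} I(X^n; Y^n | S_0 = s_0)$ together with its worst- and best-case versions
$$\underline{C}_n = \min_{s_0\in\cS} C_n(s_0), \qquad \overline{C}_n = \max_{s_0\in\cS} C_n(s_0).$$
The goal is to show that (a) both $\lim_n \underline{C}_n$ and $\lim_n \overline{C}_n$ exist, (b) they coincide with a common value $\mathsf{C}$, and (c) $\mathsf{C}$ equals the operational capacity. Since $\underline{C}_n \le C_n(s_0) \le \overline{C}_n$, item (b) immediately delivers the stated $s_0$-independent limit.

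For (a), I would argue that $n\underline{C}_n$ is (nearly) superadditive. Given optimal input distributions of lengths $n_1$ and $n_2$, consider their concatenation. A chain-rule decomposition together with the Markov property of the state gives
$$I(X^{n_1+n_2}; Y^{n_1+n_2} | S_0 = s_0) \ge I(X^{n_1}; Y^{n_1} | S_0 = s_0) + \min_{s\in\cS}\, I(\tilde X^{n_2}; \tilde Y^{n_2} | S_{n_1} = s),$$
provided the second-block distribution is chosen to be uniformly good across intermediate states. Fekete's lemma then yields $\lim_n \underline{C}_n = \sup_n \underline{C}_n$. A parallel argument, with a little more bookkeeping, handles $\overline{C}_n$.

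The main obstacle, and where indecomposability is essential, is (b): the vanishing of the gap $\overline{C}_n - \underline{C}_n$. Here I would invoke Definition 1 directly: for every $\epsilon > 0$ there exists $N$ such that $|P(s_N | x^N, s_0) - P(s_N | x^N, s_0')| \le \epsilon$ uniformly in $x^N$ and $s_0, s_0'$. Splitting any block of length $n > N$ at time $N$ and marginalizing out the first $N$ outputs, one obtains a total-variation bound on the joint law $P(x^n, y^n | s_0)$ against $P(x^n, y^n | s_0')$ that is $O(\epsilon)$, uniformly in $n$ and in the input distribution. A continuity-of-entropy inequality (Pinsker's or a Fannes-type bound) then translates this into $|C_n(s_0) - C_n(s_0')| = o(1)$ as $n\to\infty$ and $\epsilon\downarrow 0$, which closes the sandwich. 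This is the technical heart of the argument, and the alternative characterization of indecomposability (a state reachable from every initial state under every input sequence of some fixed length) is the most convenient tool for carrying out the TV estimate.

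Finally, for (c): the converse follows from Fano's inequality, which forces any achievable rate $R$ to satisfy $R \le \tfrac{1}{n} I(X^n; Y^n | S_0 = s_0) + o(1) \le \overline{C}_n + o(1) \to \mathsf{C}$. Achievability is via random coding using the distribution attaining $\underline{C}_n$; the point is that with this choice the average error probability can be controlled uniformly over the unknown true $s_0$, again by leveraging indecomposability. Combining, the operational capacity equals $\mathsf{C} = \lim_n \underline{C}_n = \lim_n \overline{C}_n = \lim_n C_n(s_0)$ for every $s_0\in\cS$, which is precisely the statement of the theorem.
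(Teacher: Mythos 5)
This theorem is cited directly from Gallager's textbook; the paper does not reprove it, so your task was effectively to reconstruct Gallager's classical argument, and your overall plan (sandwich, super/subadditivity, indecomposability to close the gap, Fano plus random coding) is the right one. The one substantive flaw is a definitional mismatch that propagates through (a) and (c). You define $\underline{C}_n = \min_{s_0}\max_{P(x^n)}\tfrac{1}{n}I(X^n;Y^n\mid s_0)$, a min-max. But the superadditivity step you describe requires ``the second-block distribution to be chosen to be uniformly good across intermediate states'': that is precisely the property of the max-min quantity $\max_{P(x^n)}\min_{s_0}\tfrac{1}{n}I(X^n;Y^n\mid s_0)$, which is in fact what Gallager uses as his lower quantity. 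Likewise, achievability in (c) needs a single codebook whose error probability is controlled no matter what the true $s_0$ is; the $P$ achieving your min-max is tied to a single worst-case $s_0$ and gives no such guarantee by itself, whereas the max-min optimizer does by construction. Since $\max_P\min_{s_0}\tfrac{1}{n}I \le \min_{s_0}\max_P\tfrac{1}{n}I \le C_n(s_0) \le \overline{C}_n$ for every $s_0$, the sandwich still closes if you simply adopt the max-min definition throughout; you should do so, otherwise steps (a) and (c) do not actually prove what was claimed.

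There is also a smaller inaccuracy in step (b). The claim of an $O(\epsilon)$ total-variation bound on the full joint law $P(x^n,y^n\mid s_0)$ versus $P(x^n,y^n\mid s_0')$ is false: the first $N$ outputs depend directly on $s_0$ and that dependence does not decay. What indecomposability actually gives, after the marginalization you mention, is a uniform (in $n$ and $P$) TV bound on the law of $(S_N, Y_{N+1}^n)$ given $x^n$. The contribution of the first $N$ symbols to $\tfrac{1}{n}I$ is then dismissed trivially as at most $\tfrac{N}{n}\log|\cY|$, and the tail is compared after conditioning on $S_N$, which perturbs the mutual information by at most $\log|\cS|$ and makes the conditional channel law identical for $s_0$ and $s_0'$. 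This is exactly the route Gallager takes in Theorem 4.6.4 (and the route the present paper mimics in Lemma~\ref{lemma: initial_indep}, where the quantity controlled by Lemma~\ref{lemma: diff} plays the role of the $\log|\cS|$ conditioning cost); a Fannes-type bound over the tail alphabet would also work but needs more care than your sketch indicates. With these two corrections your argument matches the standard proof.
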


Throughout this paper, the capacity (and bounds on it) are measured in bits per channel use. We investigate the following two important classes of FSCs:
\begin{enumerate}
    \item \textbf{Unifilar FSCs:} For these channels, the state evolution is given by a deterministic function. Specifically, \eqref{eq:FSC} is simplified to:
    \begin{align}
        &P_{S^+,Y|X,S}(s_t,y_t|x_t,s_{t-1}) \nn\\&= \mathbbm{1}\{s_t=f(x_t,y_t,s_{t-1})\}P_{Y|X,S}(y_t|x_t,s_{t-1}),
    \end{align}
    where $f:\mathcal X\times \mathcal{Y}\times\mathcal{S}\to \mathcal S$. Since the channel state can be computed recursively, we may use $s_t = f^t(x^t,y^t,s_0)$ to denote $t$ applications of $f(\cdot)$. 
    \item \textbf{Input-driven FSCs:} For these channels, the channel state does not depend on past outputs. Specifically,
       \begin{align}\label{eq:input_def}
        &P_{S^+,Y|X,S}(s_t,y_t|x_t,s_{t-1}) \nn\\&= P_{S^+|X,S}(s_t|x_t,s_{t-1})P_{Y|X,S}(y_t|x_t,s_{t-1}).
    \end{align}
    Note that this definition generalizes that of FSCs with input-dependent states \cite{FSC_Neri}, in which the next state is a deterministic function of the input and the previous state.
\end{enumerate}

\section{Main Result via Dual Capacity Formula} \label{sec:main_results}
In this section, we present the dual capacity upper bound, $Q$-graphs and our main result.
\subsection{Dual capacity upper bound}
The dual capacity upper bound \cite{Topsoe67,Dual_capacity} is a simple upper bound on channel capacity that has been utilized in many works \cite{MIMO_dual,Amos_Dual,Dual_Andrew_J,Duality_Kremer,Andrew_ISI,Vontobel_Dual}. For any memoryless channel, $P_{Y|X}$, and test distribution $R_Y$, the dual capacity upper bound is given by
\begin{align}\label{DB_MC}
\mathsf{C}\leq \max_{x\in\mathcal{X}} D\left(P_{Y|X=x} \middle\| R_{Y}\right).
\end{align}
The proof follows from the following steps:
\begin{align}\label{DB_step}
I(X;Y)&= D\left(P_{Y|X}\middle\|R_Y \middle|P_X \right) - D\left(P_Y\|R_Y\right)\nn
\\&\leq D\left(P_{Y|X}\middle\|R_Y \middle|P_X \right)\nn
\\&\leq \max_{x\in\mathcal{X}}D\left(P_{Y|X=x}\middle\|R_Y \right).
\end{align}
The bound is tight if $R_Y$ is equal to the output distribution, $P_Y^*$, induced by an optimal (i.e., capacity-achieving) input distribution.

For FSCs, where the aim is to maximize the $n$-letter mutual information $I(X^n;Y^n)$, one may replace the test distribution in \eqref{DB_step} with $R_{Y^n}$ and obtain
\begin{align} \label{co.DB}
I(X^n;Y^n)&\leq \max_{x^n\in\mathcal{X}^n} D\left(P_{Y^n|X^n=x^n}\middle\|R_{Y^n}\right). 
\end{align}
Again, this bound is tight when $R_{Y^n}=P_{Y^n}^*$, the output distribution induced by an input distribution that maximizes $I(X^n;Y^n)$. Naturally, the choice of the test distribution will affect the tightness of the bound, and we would like to choose test distributions that are close, in some sense, to $P_{Y^n}^*$. The output distribution $P_{Y^n}^*$ is, in general, not i.i.d.. A common choice of a test distribution is a Markov distribution of some finite order \cite{Dual_Andrew_J,Duality_Kremer,Vontobel_Dual}, but here we use an extension of this notion. The mathematical structure needed to define this extension is called a $Q$-graph, which is presented in the next section.

\subsection{The $Q$-graph}
A $Q$-graph, introduced in \cite{Sabag_UB_IT}, is a directed and (strongly) connected graph on a finite set of nodes $\cQ$, in which each node has $|\mathcal{Y}|$ outgoing edges with distinct labels. Due to the distinct labeling, the graph defines a mapping $\phi:\cQ\times\cY\to \cQ$, where $\phi(q,y)$ is the unique node pointed to by the edge from $q$ labeled with $y$. Further, given a distinguished initial node $q_0 \in \cQ$, we also have a well-defined mapping $\Phi_{q_0}:{\cY}^{*}\to \cQ$, where $\cY^*$ is the set of all finite-length sequences over $\cY$. Indeed, $\Phi_{q_0}(y^t)$ is the node reached by walking along the unique directed path of length $t$ starting from $q_0$ and labeled by $y^t = (y_1,y_2,\ldots,y_t)$. We will often drop the subscript from $\Phi_{q_0}$ for notational convenience, whenever there is no ambiguity in doing so.

\begin{figure}[tb]
\centering
    \psfrag{Q}[][][0.9]{$\;Y=1$}
    \psfrag{E}[][][0.9]{$\;\;Y=0$}
    \psfrag{L}[][][0.9]{$Q=1$}
    \psfrag{H}[][][0.9]{$Q=0$}
    \includegraphics[scale = 0.5]{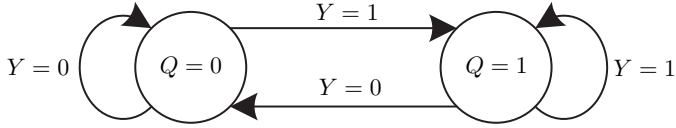}
    \caption{A $1$st-order Markov $Q$-graph for channel output alphabet $\mathcal{Y}=\{0,1\}$.}
    \label{fig:1Markov}
\end{figure}

Fix a $Q$-graph on the set of nodes $\cQ$, with a distinguished initial node $q_0$. A graph-based test distribution, $R_{Y|Q}$, is a collection of probability distributions $R_{Y|Q=q}$ on $\cY$, defined for each $q \in \cQ$. This defines a test distribution on channel output sequences as follows:
\begin{align}\label{eq:test_dist}
	R_{Y^n|q_0}(y^n) = \prod_{t=1}^n R_{Y|Q}(y_t|q_{t-1}),
\end{align}
where $q_{t-1}=\Phi(y^{t-1})$ for $t > 1$. It can be noted from \eqref{eq:test_dist} that, since $|\mathcal Q|<\infty$, the induced process is a finite-state process.

A special case of a $Q$-graph is a \emph{$k$th-order Markov $Q$-graph}, which is defined on the set of nodes $\cQ = \cY^k$, and for each node $q = (y_1,y_2,\ldots,y_k)$, the outgoing edge labeled $y \in \cY$ goes to the node $(y_2,\ldots,y_k,y)$. For instance, Fig.~\ref{fig:1Markov} shows a Markov $Q$-graph with $\mathcal{Y} = \{ 0, 1 \}$ and $k=1$. Note that test distributions $R_{Y | Q}$ on a $k$th-order Markov $Q$-graph correspond to $k$th-order stationary Markov processes.

$Q$-graph-based test distributions grant us an added layer of generality over Markov distributions of finite order. There is value to this added generality, as we will see in Section~\ref{sec:DEC}.
Moreover, the dual capacity upper bound obtained from any such test distribution is actually computable (at least numerically) for certain classes of FSCs.   
Indeed, our main result is that, for unifilar and input-driven FSCs, the dual capacity upper bound obtained from any $Q$-graph based test distribution can be formulated as a DP problem, and hence, is computable.

\subsection{Summary of main results}
In this section we summarize our main contributions for unifilar and input-driven FSCs.\\
Specifically, our main contributions are as follows:   
\begin{itemize}
\item In Section \ref{DP_Main}, we derive the duality upper bounds for unifilar FSCs and input-driven FSCs in Theorem \ref{main_bound} and Theorem \ref{main_bound2}, respectively. The duality bounds hold for any choice of a graph-based test distribution and are given by multi-letter expressions, i.e., they depend on a limiting blocklength.
\item In Section \ref{DP_Main}, we show the computability of the bounds by formulating them as a DP. Specifically, when the FSC is unifilar, we show that the dual capacity upper bound in Theorem \ref{main_bound} can be formulated as a dynamic program with $\cal{P}(\mathcal{S}\times\mathcal{Q})$ being the state space and $\mathcal{X}$ being the action space.
\item Similarly, if the channel is an input-driven FSC, then the dual capacity upper bound in Theorem \ref{main_bound2} can be formulated as a dynamic program with $\cal{P}(\mathcal{S})\times \cal{P}(\mathcal{Q})$ being the state space and $\mathcal{X}$ being the action space.
\item In Section~\ref{sec:analytic}, we apply the developed framework to several examples and derive novel upper bounds on the capacity of the well-known trapdoor and Ising channels that outperform previously reported upper bounds. Further, we provide an alternative converse proof for the capacity of the POST channel.
\item Lastly, in Section~\ref{sec:analytic}, we demonstrate the superiority of the graph-based test distribution compared to simple Markovian test distributions by comparing the duality upper bound for the DEC. 
\end{itemize}

In the next section, we introduce the DP framework and formally define the DP formulations stated above. The DP formulations are useful as we can then use known DP algorithms to numerically compute upper bounds on capacity. Moreover, the numerical results can sometimes be converted to explicit analytical upper bounds, as we do for the examples presented in Section~\ref{sec:analytic}.

The dual capacity bounding technique has been utilized in several works, e.g., for amplitude-constrained additive white Gaussian noise channels \cite{Duality_Kremer,Rassouli_dual}. In \cite{Loeliger_FSCs, FSC_num_bounds}, the authors derive bounds on the capacity of channels with memory and provide numerical methods to approximate the bounds. Our work is closest in spirit to that in \cite{Dual_Andrew_J} and \cite{Andrew_ISI}, in which the dual capacity bounding technique is applied to binary-input memoryless channels with a runlength constrained input and to single-tap binary-input Gaussian channels with intersymbol interference. Using Markov test distributions, the authors of \cite{Dual_Andrew_J} and \cite{Andrew_ISI} are able to derive, in some specific cases, explicit expressions for the resulting upper bounds on channel capacity.

The main novelty in our work is the DP formulation of the dual capacity upper bound and the use of graph-based test distributions. On the one hand, our formulation is restricted to channels with finite input, output, and state alphabets, but on the other hand, it allows us to use the powerful machinery of DP to at least numerically evaluate the bounds for a large class of FSCs. In some cases, as we will see in Section~\ref{sec:analytic}, we are even able to convert the numerically evaluated bounds to analytical expressions.

\section{Upper bounds via DP} \label{DP_Main} 
In this section, we first introduce DP and the Bellman equation. 
Then, for a fixed graph-based test distribution, we present a DP formulation of the dual capacity upper bound for unifilar and input-driven FSCs. Additionally, we present a simplified DP formulation for the case of unifilar input-driven FSCs, where the state evolves according to $s_t=f(x_t,s_{t-1})$.

\subsection{DP and the Bellman equation}
Here we introduce a formulation for a deterministic\footnote{The DP formulation we consider is \emph{deterministic}, in the sense that we do not introduce a (random) disturbance in the formulation.} average-reward dynamic program. Each DP problem is defined by a quintuple $\left(\mathcal{Z},\mathcal{U},F,P_Z,g\right)$. 
Each action, $u_t$, takes a value in a compact subset $\mathcal{U}$ of a Borel space. We consider a discrete-time dynamical system that evolves according to 
\begin{align*}
z_t = F(z_{t-1},u_t), \;\;\; t = 1,2,3,\dots,
\end{align*}
where each DP state, $z_t$, takes values in a Borel space $\mathcal{Z}$. The initial state $z_0$ is drawn according to the distribution $P_Z$.  
The action $u_t$ is selected by a deterministic function $\mu_t$ that maps the initial DP state, $z_0$, into actions. Specifically, given a policy $\pi=\{\mu_1,\mu_2,...\}$, actions are generated according to $u_t = \mu_t(z_0)$. Accordingly, in this setup, the only randomness is in $z_0$.

Given a bounded reward function, $g:\mathcal{Z}\times\mathcal{U}\rightarrow\mathbb{R}$, we aim to maximize the average reward. The average reward for a policy $\pi$ is defined by $\rho_\pi = \liminf_{n\to\infty}\frac{1}{n}\mathbb{E}_\pi\left[\sum_{t=0}^{n-1}g\left(Z_t,\mu_{t+1}(z_0)\right)\right]$, where the subscript $\pi$ indicates that actions are generated by the policy $\pi=(\mu_1,\mu_2,...)$. The optimal average reward is given by $\rho^*=\sup_{\pi}\rho_\pi$.   

The following theorem, an immediate consequence of Theorem $6.1$ in \cite{Arapos93_average_cose_survey}, encapsulates the Bellman equation, which provides a sufficient condition for the optimality of an average reward and a policy.

\begin{theorem}[Bellman equation]
\label{Theorem:Bellman}
Given a DP problem as above, if a scalar $\rho\in\mathbb{R}$ and a bounded function $h:\mathcal{Z}\rightarrow\mathbb{R}$ satisfy
\begin{align*}
    \rho+h(z) = \sup_{u\in\mathcal{U}}\pr{g\pr{z,u}+h\pr{F\pr{z,u}}},\;\; \forall z\in\mathcal{Z}
\end{align*}
then $\rho=\rho^{*}$.
\end{theorem}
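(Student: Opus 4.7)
The plan is to establish $\rho = \rho^*$ by proving the two inequalities separately: first $\rho \geq \rho_\pi$ for \emph{every} admissible policy $\pi$, and then constructing a sequence of policies whose average rewards approach $\rho$. Both halves will rest on telescoping the pointwise Bellman (in)equality along the deterministic trajectory and then using boundedness of $h$ to kill the boundary terms in the $n \to \infty$ limit.

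For the upper bound $\rho \geq \rho^*$, I would fix an arbitrary policy $\pi=(\mu_1,\mu_2,\ldots)$ and the induced trajectory $z_t = F(z_{t-1},\mu_t(z_0))$. Specializing the Bellman equation at the particular action $u = \mu_t(z_0)$ yields
\begin{align*}
\rho + h(z_{t-1}) \geq g(z_{t-1},\mu_t(z_0)) + h(z_t).
\end{align*}
Summing over $t=1,\ldots,n$ telescopes the $h$-terms into
\begin{align*}
n\rho + h(z_0) - h(z_n) \geq \sum_{t=0}^{n-1} g(z_t,\mu_{t+1}(z_0)).
\end{align*}
I would then take expectation over $z_0 \sim P_Z$, divide by $n$, and use boundedness of $h$ so that $\tfrac{1}{n}\mathbb{E}[h(z_0)-h(z_n)] \to 0$. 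Taking $\liminf$ in $n$ gives $\rho \geq \rho_\pi$ for all $\pi$, hence $\rho \geq \rho^*$.

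For the reverse inequality, I would fix $\epsilon > 0$ and use a measurable selection to produce $\mu_\epsilon : \mathcal{Z} \to \mathcal{U}$ satisfying
\begin{align*}
g(z,\mu_\epsilon(z)) + h\!\left(F(z,\mu_\epsilon(z))\right) \geq \rho + h(z) - \epsilon, \quad \forall z \in \mathcal{Z}.
\end{align*}
Iterating this stationary feedback rule along the deterministic trajectory starting from $z_0$ defines an admissible policy $\pi_\epsilon$: since the dynamics are deterministic, $z_{t-1}$ is a function of $z_0$, so the actions can be written as $\mu^\epsilon_t(z_0) = \mu_\epsilon(z_{t-1}(z_0))$ in the required open-loop form. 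Telescoping the near-equality, now with total slack $n\epsilon$, and dividing by $n$ would give $\rho_{\pi_\epsilon} \geq \rho - \epsilon$, so that letting $\epsilon \downarrow 0$ completes the direction $\rho^* \geq \rho$.

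The step I expect to be the main obstacle is the measurable $\epsilon$-selection: I would need $\mu_\epsilon$ to be Borel-measurable so that $z_0 \mapsto \mu_\epsilon(z_{t-1}(z_0))$ is itself a valid measurable policy. This is the standard ``measurable maximum'' issue, dischargeable via classical selection theorems such as Kuratowski--Ryll-Nardzewski under the mild regularity (compactness of $\mathcal{U}$, measurability of $F$ and $g$) implicit in the cited framework of \cite{Arapos93_average_cose_survey}. A subsidiary subtlety, which is precisely why the $\epsilon$-approximation cannot be avoided, is that the $\sup$ in the Bellman equation need not be attained at any single action; everything else in the argument is routine telescoping and an application of dominated convergence via boundedness of $h$.
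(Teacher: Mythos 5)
Your proposal is correct, but it is worth noting that the paper does not actually prove this theorem: it invokes it as ``an immediate consequence of Theorem 6.1 in \cite{Arapos93_average_cose_survey},'' a general verification result for average-cost control. What you supply is a self-contained proof of the verification theorem specialized to the paper's setting, and the specialization genuinely simplifies matters: because the dynamics $z_t=F(z_{t-1},u_t)$ are deterministic and the only randomness is in $z_0$, your telescoping of $\rho+h(z_{t-1})\ge g(z_{t-1},\mu_t(z_0))+h(z_t)$ is exact pathwise (no martingale/conditional-expectation bookkeeping as in the general stochastic MDP version), and the boundary terms $\tfrac{1}{n}(h(z_0)-h(z_n))$ vanish by the uniform bound $2\sup|h|/n$ alone --- you do not even need dominated convergence. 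Your two halves are the standard ones: every policy is dominated via the $\sup$, and an $\epsilon$-greedy stationary selector, unrolled into the paper's open-loop form $\mu^\epsilon_t(z_0)=\mu_\epsilon(z_{t-1}(z_0))$, achieves $\rho-\epsilon$. You are also right to flag measurable selection as the one nontrivial technical point (needed so that $z_0\mapsto g(z_{t-1}(z_0),\mu_\epsilon(z_{t-1}(z_0)))$ is integrable against $P_Z$), and Kuratowski--Ryll-Nardzewski under the compactness/Borel assumptions the paper imposes on $\mathcal{U}$ and $\mathcal{Z}$ is the standard way to discharge it; in the paper's actual applications $\mathcal{U}=\mathcal{X}$ is finite, so even this is vacuous. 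In short: the argument is sound and arguably more transparent than the black-box citation, at the cost of having to state the regularity hypotheses the survey handles once and for all.
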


Numerical methods for solving the DP problem, such as policy iteration and value iteration, provide very important insights into the solution of the Bellman equation. One may use the approximate solution obtained by these algorithms to generate a conjecture for the exact solution, and use the Bellman equation to verify its optimality.


\subsection{A DP formulation for unifilar FSCs} \label{DP_main1} 
In this section we introduce a DP formulation of the dual upper bound on the capacity of unifilar FSCs. First, let us present a definition that extends the idea of channel indecomposability (see Definition \ref{def:ind} and its subsequent paragraph) to the notion of a channel and a graph-based test distribution being jointly indecomposable.
\begin{definition}[Joint indecomposability] \label{def:sq_ind}
Fix an FSC and a graph-based test distribution on channel output sequences. If for some $n\in\mathbb{N}$ and each input sequence $x^n$, there exists a choice of $s_n$ and $q_n$ such that
\begin{align} \label{eq: Q_cond}
    P(s_n,q_n|x^n,s_0,q_0) > 0, \text{\quad for all $s_0,q_0$,}
\end{align}
then the FSC and test distribution are \emph{jointly indecomposable}. Note that $s_n$ and $q_n$ above are allowed to depend on $x^n$.
\end{definition}

The following theorem presents an upper bound on the capacity of unifilar FSCs, which is a simplification of the dual capacity upper bound for FSCs when choosing graph-based test distributions on channel outputs.
\begin{theorem} \label{main_bound}
For any unifilar FSC and a graph-based test distribution $R_{Y|Q} \succ 0$ that are jointly indecomposable, the channel capacity is bounded as
\begin{align} \label{eq: UB_unifilar}
	\mathsf{C} \leq \lim_{n\to\infty} \max_{x^n\in\mathcal{X}^n}&\frac{1}{n}\sum_{t=1}^n\sum_{q,s}z_{t-1}(q,s)\nn\\&\times D\left(P_{Y|X,S}(\cdot|x_t,s) \middle\| R_{Y|Q}(\cdot|q)\right),
\end{align}
for any $(s_0,q_0)$, where $$z_{t-1}(q,s) \triangleq P_{Q_{t-1},S_{t-1}|X^{t-1},S_0,Q_0}(q,s|x^{t-1},s_0,q_0).$$
\end{theorem}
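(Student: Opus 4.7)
The plan is to take the dual capacity bound \eqref{co.DB} applied to the $n$-letter mutual information in Theorem~\ref{FSC_Capacity}, and then exploit the product structure that unifilarity and the $Q$-graph jointly impose on the $n$-letter laws. I would first fix a starting pair $(s_0,q_0)$ and mimic \eqref{DB_step}, but conditioned on $S_0=s_0$ and with the test distribution $R_{Y^n\mid q_0}$ from \eqref{eq:test_dist}, to obtain
\begin{equation*}
\max_{P(x^n)} I(X^n;Y^n\mid S_0=s_0) \;\le\; \max_{x^n\in\mathcal X^n} D\bigl(P_{Y^n\mid X^n=x^n,\,S_0=s_0}\,\big\|\,R_{Y^n\mid q_0}\bigr),
\end{equation*}
where the hypothesis $R_{Y\mid Q}\succ 0$ ensures absolute continuity so the divergence is finite.

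The core step is to decompose this $n$-letter divergence into a sum of single-letter divergences. Unifilarity gives $P(y^n\mid x^n,s_0)=\prod_{t=1}^n P_{Y\mid X,S}(y_t\mid x_t,s_{t-1})$ with $s_{t-1}=f^{t-1}(x^{t-1},y^{t-1},s_0)$, while \eqref{eq:test_dist} gives $R_{Y^n\mid q_0}(y^n)=\prod_{t=1}^n R_{Y\mid Q}(y_t\mid q_{t-1})$ with $q_{t-1}=\Phi_{q_0}(y^{t-1})$. Expanding $\log(P/R)$ into a telescoping sum and then taking expectations under $P(\cdot\mid x^n,s_0)$ term by term, I observe that the $t$-th summand depends on $y^{t-1}$ only through the pair $(s_{t-1},q_{t-1})$; grouping the $y^{t-1}$-sum by this pair rewrites the $t$-th summand as
\begin{equation*}
\sum_{s,q} z_{t-1}(q,s)\, D\bigl(P_{Y\mid X,S}(\cdot\mid x_t,s)\,\big\|\,R_{Y\mid Q}(\cdot\mid q)\bigr),
\end{equation*}
precisely the expression appearing in \eqref{eq: UB_unifilar}. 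Summing over $t$, dividing by $n$, and passing to $n\to\infty$ then yields the claimed bound for the chosen pair $(s_0,q_0)$.

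The remaining technical point is the assertion ``for any $(s_0,q_0)$'', which requires that the right-hand side of \eqref{eq: UB_unifilar} be well-defined — i.e., the limit exists and is independent of the initialization. For this I would appeal to joint indecomposability (Definition~\ref{def:sq_ind}) applied to the coupled chain on $(S,Q)$ driven by an input sequence: the positivity condition \eqref{eq: Q_cond} supports the same merging-of-distributions argument as in \cite[Thm.~4.6.3]{Gallager68}, but now on the joint chain, so that for two initializations $(s_0,q_0)\ne(s_0',q_0')$ the marginals $z_{t-1}(\cdot,\cdot)$ differ by an amount that tends to zero uniformly in $x^{t-1}$. Since each single-letter divergence is uniformly bounded (again thanks to $R_{Y\mid Q}\succ 0$ on finite alphabets), this vanishing discrepancy transfers to the Cesàro averages and pins the limit to a common value. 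I expect this last step to be the main obstacle: because the maximization over $x^n$ is adversarial, one must argue that even a worst-case input sequence cannot exploit a difference in $(s_0,q_0)$ to produce a persistent gap between the two per-letter averages; the cleanest route is probably to absorb this uniform merging into the DP formulation of the next section, where the existence of a limit average reward is a standard consequence of the Bellman equation (Theorem~\ref{Theorem:Bellman}).
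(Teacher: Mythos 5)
Your single‑letter decomposition of the $n$‑letter divergence is exactly the paper's argument (Appendix~\ref{app: UB_uniflar}, steps $(a)$--$(e)$): expand by the chain rule, note that $(q_{t-1},s_{t-1})$ is a deterministic function of $(x^{t-1},y^{t-1},s_0,q_0)$, invoke unifilarity and the $Q$-graph structure so that the $t$-th term depends on $y^{t-1}$ only through $(s_{t-1},q_{t-1})$, and regroup. Your sketch of initial-state independence is also essentially what the paper's Lemma~\ref{lemma: initial_indep}/Lemma~\ref{dist_0} establish: joint indecomposability gives a merging of the joint $(S,Q)$ chains uniformly over inputs, and boundedness of the per-step reward (from $R_{Y|Q}\succ 0$ on finite alphabets) then transfers this to the Cesàro averages; the only extra care needed, which you flag, is that the maximizing $x^n$ may differ across initializations, which the paper handles by comparing $\overline{\mathsf{C}}_n$ and $\underline{\mathsf{C}}_n$ rather than values at a common $x^n$.

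The genuine gap is the \emph{existence} of the limit $\lim_{n\to\infty}\max_{x^n}\mathsf{c}(x^n,s_0,q_0)$, which the theorem asserts and your argument leaves unproved. Your proposed escape --- ``absorb this into the DP formulation, where existence of a limit average reward is a standard consequence of the Bellman equation'' --- is circular and also not sound on its own terms. It is circular because the paper's Theorem~\ref{theorem: formulation} (equality of the limit with $\rho^*$) is itself proved (Appendix~\ref{app: formulation_unifilar}, item 3) by invoking Fekete's lemma applied to the very sequence whose limit you are trying to establish; one cannot appeal to the DP equivalence to prove the limit exists when the DP equivalence presupposes it. It is also unsound because Theorem~\ref{Theorem:Bellman} is only a verification theorem: \emph{if} a bounded $h$ and scalar $\rho$ solve the Bellman equation, then $\rho=\rho^*$; it does not assert that a solution exists, and for a continuous state space $\mathcal{P}(\mathcal{S}\times\mathcal{Q})$ existence of such a pair is not automatic. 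The paper closes this gap directly: it defines $\underline{\mathsf{C}}_n=\max_{x^n}\min_{s_0,q_0}\mathsf{c}$, proves that $n\underline{\mathsf{C}}_n$ is super-additive (splitting $x^n$ into blocks of length $m$ and $k$, using the Markov chain $(Q_{t-1},S_{t-1})-(X_{m+1}^{t-1},Q_m,S_m)-(X^m,S_0,Q_0)$ from Lemma~\ref{lemma: MC1}), and invokes Fekete's lemma to get $\lim_n\underline{\mathsf{C}}_n$; it then sandwiches $\max_{x^n}\mathsf{c}(x^n,s_0,q_0)$ between $\underline{\mathsf{C}}_n$ and $\overline{\mathsf{C}}_n$, whose limits coincide by the merging argument. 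You need some version of this super-additivity step; without it the right-hand side of \eqref{eq: UB_unifilar} is not even known to be well-defined.
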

\begin{remark}\label{remark: UB_restriction}
In the statement of Theorem \ref{main_bound}, the condition  that $R_{Y|Q}$ be strictly positive is imposed to simplify the presentation of the result. Indeed, without this restriction, the bound may be infinite, which is still a valid upper bound. However, more importantly, $R_{Y|Q} \succ 0$ ensures that the condition $P_{Y_t|Y^{t-1}=y^{t-1},X^t=x^t}  \ll R_{Y_t|Q_{t-1}=q_{t-1}}$ holds for any $q_0$, $x^t$ and $y^{t-1}$, where $q_{t-1}=\Phi_{q_0}(y^{t-1})$. The latter condition ensures that the upper bound does not depend on the choice of the initial state. This point will be addressed precisely in the proof of Theorem \ref{main_bound} --- see Appendix~\ref{app:subsec_initialstate}.
\end{remark}

The proof of Theorem \ref{main_bound} is given in Appendix \ref{app: UB_uniflar}. 
We now present the DP formulation of the upper bound in Theorem \ref{main_bound}.

\begin{table}[t]
\caption{DP Formulation for Unifilar FSCs}
\label{table: main1}
\begin{center}
\scalebox{1.2}{
 \begin{tabular}{|c | c |} 
 \hline
 DP notations & Upper bound on capacity \\ [0.5ex] 
 \hline\hline
The DP state & $z_t = P_{Q_t,S_t|X^t=x^t,s_0,q_0}$ \\ [0.3ex]
 \hline
 The action & $u_t = x_t$ \\[0.3ex]
 \hline
The reward & Eq. \eqref{reward_1}\\[0.3ex]
 \hline
 DP state evolution & Eq. \eqref{Next_state} \\[0.3ex]
 \hline
\end{tabular} }
\end{center}
\end{table}
Throughout the derivations, a $Q$-graph and a test distribution, $R_{Y|Q}$, are fixed. At each time $t$, let the action be the current channel input $u_t \triangleq x_t$, which takes values in $\mathcal{X}$. The DP state, $z_{t-1}$, is defined as appears in Theorem \ref{main_bound}.
The reward function is defined by
\begin{align} \label{reward_1}
	g(z_{t-1},u_t) \triangleq &\sum_{q,s}z_{t-1}(q,s) D\left(P_{Y|X,S}(\cdot|u_t,s)\middle\|R_{Y|Q}(\cdot|q)\right).
\end{align}
The above formulation is summarized in Table \ref{table: main1}. We show in Appendix~\ref{app: formulation_unifilar}, as part of the proof of Theorem \ref{theorem: formulation}, that this constitutes a valid DP. The infinite-horizon average reward of this DP is given by
\begin{align} \label{eq: optimal_reward_1}
    &\rho^*=\sup_{\{x_i\}_{i=1}^{\infty}}\liminf_{n\to\infty}\frac{1}{n}\sum_{t=0}^{n}\sum_{q,s}z_{t-1}(q,s)\nn\\&\qquad\qquad\qquad\qquad\times D\left(P_{Y|X,S}(\cdot|x_t,s)\middle\|R_{Y|Q}(\cdot|q)\right). 
\end{align}
The following theorem summarizes the relation between the upper bound in Theorem \ref{main_bound} and $\rho^\ast$.
\begin{theorem}[DP formulation of the upper bound]\label{theorem: formulation}
The upper bound in Theorem \ref{main_bound} is equal to the optimal average reward in \eqref{eq: optimal_reward_1}. That is, the capacity is upper bounded by $\rho^*$, the optimal average reward of the DP defined above.
\end{theorem}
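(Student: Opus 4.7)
The plan is to identify the multi-letter bound of Theorem~\ref{main_bound} with the optimal average reward $\rho^*$ defined in \eqref{eq: optimal_reward_1}. The argument breaks into three steps: (a) verify that the quintuple of Table~\ref{table: main1} defines a valid deterministic DP, (b) recognize the $n$-stage reward sum as exactly the quantity maximized inside \eqref{eq: UB_unifilar}, and (c) translate $\lim_n \max_{x^n}$ into the $\sup_{\{x_i\}}\liminf_n$ defining $\rho^*$. For (a), the key point is that $z_t(q,s) = P_{Q_t,S_t|X^t=x^t,s_0,q_0}(q,s)$ admits a recursion $z_t = F(z_{t-1},x_t)$: marginalizing over $y_t$ and invoking both the unifilar update $s_t = f(x_t,y_t,s_{t-1})$ and the $Q$-graph update $q_t = \phi(q_{t-1},y_t)$ yields
\begin{equation*}
z_t(q',s') = \sum_{q,s,y} z_{t-1}(q,s)\,P_{Y|X,S}(y|x_t,s)\,\mathbbm{1}\{q'=\phi(q,y)\}\,\mathbbm{1}\{s'=f(x_t,y,s)\},
\end{equation*}
which depends only on $z_{t-1}$ and $x_t$. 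Finiteness of the alphabets $\mathcal{X},\mathcal{Y},\mathcal{S},\mathcal{Q}$ together with $R_{Y|Q}\succ 0$ make every divergence in $g$ finite, so $g$ is a bounded continuous function on the compact state space $\mathcal{P}(\mathcal{S}\times\mathcal{Q})$ with finite action space $\mathcal{X}$; standard measurable-selection theorems then confirm that the DP is well posed.

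For (b), the summand inside the $\max_{x^n}$ in \eqref{eq: UB_unifilar} is, by construction, exactly $g(z_{t-1},x_t)$; consequently the bound of Theorem~\ref{main_bound} coincides with $\bar\rho := \lim_n \frac{1}{n}\max_{x^n}\sum_{t=1}^n g(z_{t-1},x_t)$, and the theorem reduces to $\bar\rho = \rho^*$. The direction $\rho^* \leq \bar\rho$ is immediate: for any infinite input sequence $\{x_i\}$, its $n$-prefix is feasible for the $n$-stage problem, so $\frac{1}{n}\sum_{t=1}^n g(z_{t-1},x_t) \leq \frac{1}{n}\max_{x'^n}\sum_{t=1}^n g(z'_{t-1},x'_t)$; taking $\liminf$ in $n$ and then the supremum over infinite sequences gives $\rho^* \leq \bar\rho$.

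For the reverse inequality $\bar\rho \leq \rho^*$, I would pick a subsequence $n_k\to\infty$ along which $V_{n_k}/n_k \to \bar\rho$, where $V_n := \max_{x^n}\sum_{t=1}^n g(z_{t-1},x_t)$, and construct an infinite admissible input sequence by concatenating, in successive blocks of length $n_k$, the optimal $n_k$-stage policy computed from the current DP state. The main obstacle is controlling the dependence of $V_n$ on the starting state: after one block the DP state has drifted from $z_0$ to some $z_{n_k}$, and one must know that $V_{n_k}(z_{n_k})/n_k$ is still close to $\bar\rho$. This is exactly what joint indecomposability (Definition~\ref{def:sq_ind}) provides: by that assumption one can, using a uniformly bounded number of extra inputs, steer the joint $(s,q)$ process into any prescribed pair with positive probability, so that any two DP states can be ``coupled'' in $O(1)$ steps. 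Combined with boundedness of $g$, this yields $|V_n(z) - V_n(z')| = O(1)$ uniformly in $n$, whence $V_n(z)/n \to \bar\rho$ uniformly in $z$. Each block therefore contributes average reward $\bar\rho - o(1)$, the concatenated sequence achieves $\liminf_n \frac{1}{n}\sum_{t=1}^n g \geq \bar\rho - \epsilon$ for every $\epsilon>0$, and $\rho^*\geq \bar\rho$ follows, completing the proof.
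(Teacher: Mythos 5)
Your parts (a) and (b), and the direction $\rho^* \leq \bar\rho$, are correct and essentially mirror the paper's Lemma~\ref{lemma: DP1} (items 1--2 and the first half of item~3). The gap is in the reverse direction $\bar\rho \leq \rho^*$, specifically in the justification of $|V_n(z) - V_n(z')| = O(1)$. You read joint indecomposability (Definition~\ref{def:sq_ind}) as permitting one to ``steer the joint $(s,q)$ process into any prescribed pair'' in $O(1)$ steps and thereby couple any two DP states. That misreads the definition twice over. It only guarantees that, for each $x^n$, there is \emph{some common} $(s_n,q_n)$ reachable with positive probability from \emph{every} initial $(s_0,q_0)$; it does not say an \emph{arbitrary prescribed} pair can be reached, which would be a controllability assumption the paper never makes. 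Moreover, the DP state $z$ is a belief distribution on $\mathcal{S}\times\mathcal{Q}$, not a single pair, so ``steering into a pair with positive probability'' still leaves a mixture, not the coupled Dirac mass your argument needs. What joint indecomposability actually delivers, through Lemma~\ref{dist_0}, is a \emph{forgetting} property: under a common input, beliefs evolved from distinct initial Dirac masses converge. If that forgetting is quantified as geometric (which the Gallager-style proof underlying Lemma~\ref{dist_0} does give) and combined with the affinity of $g$ in $z$, one can indeed obtain $|V_n(z) - V_n(z')| = O(1)$ on the reachable belief set --- but that is a different mechanism from the one you wrote, and you would still need to supply the rate.

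The paper's proof of this direction (item~3 of Lemma~\ref{lemma: DP1}) sidesteps the estimate entirely. It fixes one $k$ with $\underline{\mathsf{C}}_k \geq \lim_n \underline{\mathsf{C}}_n - \epsilon$, takes the maximizing $\hat{x}^k$, and considers the \emph{periodic} infinite sequence obtained by repeating $\hat{x}^k$. Since the per-step reward is affine in the belief $z$, each length-$k$ block's contribution is a convex combination of block rewards started from Dirac masses at the block boundary, hence at least $\min_{s_0,q_0}\sum_{t=1}^k(\cdot) = k\underline{\mathsf{C}}_k$. No mid-trajectory re-optimization of policies, no coupling argument, and no uniform-in-$z$ bound on $V_n$ are required. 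If you wish to retain the concatenation-of-optimal-blocks scheme, you must (i) replace the coupling claim by the geometric-forgetting bound, (ii) establish the $O(1)$ estimate explicitly, and (iii) restrict attention to the forward-reachable belief states, which is where the forgetting estimate applies.
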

The proof of Theorem \ref{theorem: formulation} is given in Appendix~\ref{app: formulation_unifilar}.


\subsubsection*{\textbf{Special Case}}\label{S_DP}
We consider here a special case of the upper bound in Theorem \ref{main_bound} for which the DP formulation simplifies significantly. That is, it will be shown that both the DP state and action are discrete. 
Assume that the channel state is evaluated according to $s_t = f(x_{t},s_{t-1})$. This time we use a $k$th-order Markov $Q$-graph. For this case, for any $s_0\in\mathcal{S}$, the upper bound in Theorem \ref{main_bound} is simplified to
\begin{align}
	\mathsf{C} \leq \lim_{n\to\infty}\max_{x^n}&\frac{1}{n}\sum_{t=1}^n \sum_{y_{t-k}^{t-1}} \left(\prod_{i=t-k}^{t-1} P_{Y|X,S}(y_i|x_i,s_{i-1})\right)\nn\\&\times
	 D\left(P_{Y|X,S}(\cdot|x_t,s_{t-1})\middle\|R_{Y|Q}\left(\cdot|y_{t-k}^{t-1}\right)\right),\nn
\end{align}
where $R_{Y|Q}\left(\cdot|y_{t-k}^{t-1}\right)$ denotes a $k$th-order Markov distribution. Note that the simplification follows directly by considering $q=y_{t-k}^{t-1}$ (corresponding to a Markov $Q$-graph of $k$th-order) in Theorem \ref{main_bound}, and observing that $z_{t-1}(q,s)$ can be written as the product term within the parentheses. 

For this special case, the DP formulation is the same as that for the unifilar FSC, but the DP state simplifies to $z_{t-1} \triangleq (x_{t-k}^{t-1},s_{t-k-1})\in\mathcal{X}^k\times\mathcal{S}$ since the reward at time $t$ simplifies to
\begin{align}\label{reward_func_ising}
	g(z_{t-1},u_t) &\triangleq \sum_{y_{t-k}^{t-1}} \left(\prod_{i=t-k}^{t-1} P_{Y|X,S}(y_i|x_i,s_{i-1})\right)\nn\\&\times
	 D\left(P_{Y|X,S}(\cdot|u_t,s_{t-1})\middle\|R_{Y|Q}\left(\cdot|y_{t-k}^{t-1}\right)\right).
\end{align}
Specifically, from \eqref{reward_func_ising} and the assumption that $s_t=f(x_t,s_{t-1})$, it follows that the reward is a function of $(x_{t-k}^t,s_{t-k-1})$. Thus, it is a function of the previous DP state $z_{t-1}$ and the action $x_t$.

Note that in this formulation the DP state and the action take values from a finite set. Consequently, the numerical evaluation and the subsequent analytical derivation of the solution to the Bellman equation become more tractable.


\subsection{A DP formulation for input-driven FSCs} \label{DP_main2} 
The following theorem presents an upper bound on the capacity of an input-driven FSC, which is a simplification of the dual upper bound for FSCs when choosing graph-based test distributions.
\begin{theorem} \label{main_bound2}
For an input-driven FSC and a graph-based test distribution $R_{Y|Q} \succ 0$ that are jointly indecomposable, the channel capacity is bounded as
\begin{align*}
	\mathsf{C} \leq& \lim_{n\to\infty} \max_{x^n\in\mathcal{X}^n}\frac{1}{n}\sum_{t=1}^n\sum_{q\in \cQ}\beta_{t-1}(q) \nn\\&\times\infdiv[\Bigg]{\sum_{s_{t-1}}\gamma_{t-1}(s_{t-1})\cdot P_{Y|X,S}(\cdot|x_t,s_{t-1})}{R_{Y|Q}(\cdot|q)},
\end{align*}
for any $(s_0,q_0)$, where
\begin{align} 
	\beta_{t-1}(q) &=  P_{Q_{t-1}|X^{t-1},S_0,Q_0}(q|x^{t-1},s_0,q_0),  \label{eq: th10:state1}\\
	\gamma_{t-1}(s_{t-1}) &=  P_{S_{t-1}|X^{t-1},S_0}(s_{t-1}|x^{t-1},s_0). \label{eq: th10:state2}
\end{align}
\end{theorem}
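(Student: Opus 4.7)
The approach mirrors the proof of Theorem \ref{main_bound}, with the role of the unifilar recursion replaced by the factorization \eqref{eq:input_def} of input-driven FSCs. I would begin from the dual capacity inequality \eqref{co.DB} applied with the graph-based test distribution \eqref{eq:test_dist},
\begin{align*}
    \frac{1}{n} I(X^n;Y^n|S_0=s_0) \leq \frac{1}{n} \max_{x^n \in \mathcal{X}^n} D\bigl(P_{Y^n|X^n=x^n,S_0=s_0} \,\|\, R_{Y^n|Q_0=q_0}\bigr),
\end{align*}
and unfold the right-hand side via the chain rule for KL divergence into a sum over $t$ of per-step divergences averaged over $y^{t-1}$.

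Next, I would exploit the property specific to input-driven channels. Because \eqref{eq:input_def} decouples the state transition from the output conditionally on $(x_t,s_{t-1})$, the channel-state process, given the input sequence and $S_0$, evolves through $P_{S^+|X,S}$ alone. Its one-dimensional marginal is therefore exactly $\gamma_{t-1}(s_{t-1})$ as defined in \eqref{eq: th10:state2}, and the same factorization lets one recognize the mixture $\sum_{s_{t-1}} \gamma_{t-1}(s_{t-1}) P_{Y|X,S}(\cdot|x_t,s_{t-1})$ as the natural first argument of the per-step divergence. This is the crucial structural step, and it is precisely what distinguishes the proof from the unifilar case, where $S_{t-1}$ is a deterministic function of $(x^{t-1},y^{t-1})$ and cannot be separated from $Y^{t-1}$.

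Since the reference distribution $R_{Y|Q}\bigl(\cdot \mid \Phi_{q_0}(y^{t-1})\bigr)$ depends on $y^{t-1}$ only through the $Q$-state $q_{t-1}$, I would then group the outer sum over $y^{t-1}$ by the value of $Q_{t-1}$. The weight assigned to each $q \in \mathcal{Q}$ becomes
\begin{align*}
    \sum_{y^{t-1}:\,\Phi_{q_0}(y^{t-1})=q} P(y^{t-1}|x^{t-1},s_0) = \beta_{t-1}(q),
\end{align*}
so that the chain-rule expansion produces the per-step expression of the theorem. Dividing by $n$ and passing to the limit via Theorem \ref{FSC_Capacity} yields the stated bound.

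The principal technical obstacle, as in Theorem \ref{main_bound}, is showing that the limit does not depend on the chosen initial pair $(s_0, q_0)$. The hypothesis $R_{Y|Q} \succ 0$ supplies the absolute-continuity condition needed for the bound to be well-defined (cf. Remark \ref{remark: UB_restriction}), and joint indecomposability (Definition \ref{def:sq_ind}) guarantees that the initial-condition effects on both $\beta_{t-1}$ and $\gamma_{t-1}$ decay as $n \to \infty$. This step parallels the argument sketched for Theorem \ref{main_bound} in Appendix \ref{app:subsec_initialstate}, with the added subtlety that the $Q$-state filter and the channel-state filter must be controlled simultaneously --- which is precisely the reason joint indecomposability was formulated as in Definition \ref{def:sq_ind}.
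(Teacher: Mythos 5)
Your proposal follows the same route as the paper's proof in Appendix~\ref{proof_main2}: apply the dual bound, chain-rule the $n$-letter divergence into per-step terms, use the input-driven structure to write $P_{Y_t\mid Y^{t-1},X^t,S_0}$ as the mixture $\sum_{s_{t-1}}\gamma_{t-1}(s_{t-1})P_{Y|X,S}(\cdot\mid x_t,s_{t-1})$, group the outer sum over $y^{t-1}$ by the value of $q_{t-1}=\Phi_{q_0}(y^{t-1})$ to produce $\beta_{t-1}(q)$, and defer the limit-existence and initial-state-independence arguments to the unifilar treatment in Appendix~\ref{app: UB_uniflar}. The one point you gloss over is the precise conditional-independence identity underlying the paper's step $(c)$ --- namely that $P(s_{t-1}\mid y^{t-1},x^{t-1},s_0)$ can be replaced by $\gamma_{t-1}(s_{t-1})=P(s_{t-1}\mid x^{t-1},s_0)$, so that $P(y_t\mid y^{t-1},x^t,s_0)=\sum_{s_{t-1}}\gamma_{t-1}(s_{t-1})P_{Y|X,S}(y_t\mid x_t,s_{t-1})$; your phrase ``the same factorization lets one recognize the mixture'' gestures at this but a complete write-up should state and verify that equality explicitly, as the paper does.
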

The proof of Theorem \ref{main_bound2} is given in Appendix \ref{proof_main2}. Remark \ref{remark: UB_restriction} applies also to Theorem \ref{main_bound2}. We now present the DP formulation of the upper bound in Theorem \ref{main_bound2}, and it will be shown that this formulation satisfies the DP properties. 

\begin{table}[t]
\caption{DP Formulation for Input-driven FSCs}
\label{table: main2}
\begin{center}
\scalebox{1.08}{
 \begin{tabular}{|c | c |} 
 \hline
 DP notations & Upper bound on capacity \\ [0.5ex] 
 \hline\hline 
The DP state & $z_t = \left[P_{Q_t|X^t=x^t,s_0,q_0}, P_{S_t|X^t=x^t,s_0}\right]$ \\ [0.3ex]
 \hline
 The action & $u_t = x_t$ \\[0.5ex]
 \hline
 The reward function & Eq. \eqref{reward_2}\\[0.3ex]
 \hline
  DP state evolution & Eq. \eqref{DP4_state1}-\eqref{DP4_state2} \\[0.3ex]
 \hline
\end{tabular} }
\end{center}
\end{table}
Throughout the derivations, a $Q$-graph and a test distribution, $R_{Y|Q}$, are fixed. At each time $t$, let the action be the current channel input $u_t \triangleq x_t$. 
The DP state is defined as $z_{t-1} \triangleq (\beta_{t-1},\gamma_{t-1})$, where $\beta_{t-1}$ and $\gamma_{t-1}$ are defined in \eqref{eq: th10:state1}, \eqref{eq: th10:state2}. 
The reward function is defined by
\begin{align} \label{reward_2}
	&g(z_{t-1},u_t) \triangleq \sum_{q\in\mathcal{Q}}\beta_{t-1}(q) \nn\\&\times\infdiv[\Bigg]{\sum_{s_{t-1}}\gamma_{t-1}(s_{t-1})\cdot P_{Y|X,S}(\cdot|u_t,s_{t-1})}{R_{Y|Q}(\cdot|q)}.
\end{align}
The above formulation is summarized in Table \ref{table: main2}. Assuming that it is a valid DP, this DP formulation implies that the infinite-horizon average reward is:
\begin{align}\label{eq: optimal_reward_2}
    &\rho^*=\sup_{\{x_i\}_{i=1}^{\infty}}\liminf_{n\to\infty}\frac{1}{n}\sum_{t=0}^{n}\sum_{q\in\mathcal{Q}}\beta_{t-1}(q) \nn\\&\times\infdiv[\Bigg]{\sum_{s_{t-1}}\gamma_{t-1}(s_{t-1})\cdot P_{Y|X,S}(\cdot|x_t,s_{t-1})}{R_{Y|Q}(\cdot|q)}.
\end{align}
The following theorem provides the relation between the upper bound in Theorem \ref{main_bound2} and $\rho^\ast$.
\begin{theorem}[DP formulation of the upper bound]\label{theorem: formulation_input_dependent}
The upper bound in Theorem \ref{main_bound2} is equal to the optimal average reward in \eqref{eq: optimal_reward_2}. That is, the capacity is upper bounded by $\rho^*$, the average reward of the defined DP.
\end{theorem}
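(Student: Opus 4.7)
The plan is to mirror the proof of Theorem~\ref{theorem: formulation}: first verify that the quintuple in Table~\ref{table: main2} defines a valid deterministic DP, then match its per-stage reward to the integrand of the bound in Theorem~\ref{main_bound2}, and finally identify the resulting infinite-horizon average reward $\rho^*$ with that bound. Since $\mathcal{X}$ is finite, the DP state space $\mathcal{P}(\mathcal{Q})\times\mathcal{P}(\mathcal{S})$ is a compact Borel space, and the strict positivity $R_{Y|Q}\succ 0$ keeps every KL term bounded, so the DP regularity hypotheses (compactness of actions, boundedness of reward) hold automatically as soon as the state transition is shown to be a deterministic, continuous map $z_t = F(z_{t-1}, u_t)$.

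For DP validity, the update $\gamma_t(s) = \sum_{s'}\gamma_{t-1}(s')\,P_{S^+|X,S}(s\mid x_t, s')$ follows directly from the input-driven decomposition in \eqref{eq:input_def}. For $\beta_t$, one expands the marginal as a sum over predecessor-output pairs $(q',y)$ with $\phi(q',y) = q$, and must then verify that the inner conditional $P(Y_t = y \mid Q_{t-1}=q', x^t, s_0, q_0)$ is expressible using only $\gamma_{t-1}$ and $x_t$, so that the whole $\beta_t$ recursion closes within the marginal pair. Granting this closure, substitution into \eqref{reward_2} shows that the per-stage DP reward equals the $t$-th summand of the bound in Theorem~\ref{main_bound2}, and hence $\frac{1}{n}\sum_{t=1}^n g(z_{t-1}, x_t)$ equals the $\frac{1}{n}$-normalized quantity being maximized in that bound for every fixed $x^n$.

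For the identification step, observe that since the DP is deterministic with a finite action set and a fixed initial state $z_0$ (determined by $(s_0, q_0)$), each infinite input sequence $\{x_t\}_{t\ge 1}$ corresponds to a deterministic policy on $z_0$, and conversely; hence $\rho^* = \sup_{\{x_t\}}\liminf_n \frac{1}{n}\sum_{t=1}^n g(z_{t-1}, x_t)$. A Fekete-type super-additivity argument applied to $f(n) = \max_{x^n}\sum_{t=1}^n g(z_{t-1}, x_t)$ --- essentially super-additive modulo an $o(n)$ boundary correction --- yields convergence of $f(n)/n$ to its supremum. The joint indecomposability of the channel and test distribution (Definition~\ref{def:sq_ind}), combined with Remark~\ref{remark: UB_restriction}, further ensures that the dependence on the initial $(s_0, q_0)$ vanishes in the limit, matching the sup-liminf with the $\lim\max$ appearing in Theorem~\ref{main_bound2}.

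The main obstacle is the $\beta_t$-update closure in step (i). In the unifilar formulation of Theorem~\ref{theorem: formulation} the DP state is the full joint belief over $(Q_t, S_t)$, but in the present formulation only the two marginals are tracked; since $S_{t-1}$ and $Q_{t-1}$ need not be conditionally independent given $X^{t-1}, s_0, q_0$ in general, it is not immediate that the marginal $\beta_t$ is determined by the pair $(\beta_{t-1}, \gamma_{t-1})$. Resolving this requires carefully leveraging the structure of the Theorem~\ref{main_bound2} bound --- in which the state-averaged output distribution $\sum_{s}\gamma_{t-1}(s)\,P_{Y|X,S}(\cdot|x_t, s)$ already appears inside the KL divergence --- to see that the averaged recursion effectively decouples the state belief from the quantizer history. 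Once this closure is in hand, the remainder of the proof is a routine adaptation of the unifilar argument.
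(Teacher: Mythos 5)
Your overall architecture --- validate the DP, match the per-stage reward to the $t$-th summand of the bound in Theorem \ref{main_bound2}, then exchange the limit and the maximization via super-additivity plus joint indecomposability --- is exactly the paper's (Lemma \ref{lemma: DP2}, whose third item is proved by reduction to the unifilar argument of Appendix \ref{app: formulation_unifilar}). But you have left the decisive step open. You correctly isolate the fact that everything hinges on the $\beta_t$-recursion closing in terms of the two marginals $(\beta_{t-1},\gamma_{t-1})$ alone, and then write ``granting this closure'' and suggest it should follow from ``the structure of the bound.'' That is not a proof, and the suggested resolution is misdirected: the state-averaged output distribution inside the KL divergence concerns the \emph{reward} \eqref{reward_2}, which is a function of the marginals by construction; the difficulty sits entirely in the \emph{state transition}, and no manipulation of the reward removes it.

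What the paper actually does (Eq.~\eqref{DP4_state1}, step $(b)$) is invoke the conditional independence $S_{t-1}\perp (Q_{t-1},Q_0)\mid (X^{t-1},S_0)$, i.e.,
$P(q_{t-1},s_{t-1}\mid x^{t-1},s_0,q_0)=\beta_{t-1}(q_{t-1})\,\gamma_{t-1}(s_{t-1})$,
so that the joint belief needed to propagate $\beta_t$ factors into the two tracked marginals; together with \eqref{DP4_state2} this makes $z_t=F(z_{t-1},x_t)$ deterministic, and the remainder is carried over from the unifilar case. So the missing ingredient is precisely the independence you doubted, asserted as a Markov chain rather than circumvented. Your skepticism is not idle: for an input-driven FSC with genuinely stochastic state transitions, $Y^{t-1}$ and $S_{t-1}$ share the common ancestor $S_{t-2}$ given $(X^{t-1},S_0)$, so this factorization needs justification (it is immediate when the state evolution is deterministic, $s_t=f(x_t,s_{t-1})$, as in all of the paper's input-driven examples, since then $\gamma_{t-1}$ is a point mass determined by $(x^{t-1},s_0)$). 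To complete your proof you must either establish this conditional independence for the class of channels you claim, or enlarge the DP state to the joint belief on $\mathcal{Q}\times\mathcal{S}$ as in the unifilar formulation.
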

The proof of Theorem \ref{theorem: formulation_input_dependent} is given in Appendix \ref{App: formulation_input_dependent}.

\section{Examples}\label{sec:analytic}
In this section, we study four examples of FSCs. For all examples, the input and the state take values from the binary alphabet, i.e., $\mathcal{S}=\mathcal{X} = \{0,1\}$.

In Section \ref{DP_Main} we presented several DP formulations in which the action space is discrete, while the DP state space might be either discrete or continuous, depending on the channel state evolution and the choice of the test distribution. In the case where both the DP state space and the action space are discrete, numerical methods (such as policy iteration and value iteration) always yield sufficient insights to solve the Bellman equation and extract the function $h$ and the optimal reward $\rho$ analytically. Accordingly, in this case, analytic upper bounds can be easily derived. In general, however, there is no systematic way of analytically determining a $h$ and a $\rho$ that satisfy the Bellman equation. Nevertheless, in this section we present single-letter upper bounds on the capacity of several channels that were derived by solving a DP problem with a continuous DP state space while using the insights gained from the numerical methods.

In addition, one of the main challenges here is to choose $Q$-graphs that will result in tight bounds. To this end, following \cite{OronBasharfeedback2}, we create a pool of all valid $Q$-graphs up to a fixed size of nodes and choose the $Q$-graphs that result in the best upper bounds. A particular choice of a $Q$-graph is the $k$th-order Markov $Q$-graph which, as will be shown, in some cases, provides very good upper bounds.

\subsection{The Trapdoor Channel}
The trapdoor channel was introduced by David Blackwell in 1961 \cite{Blackwell_trapdoor}. Its operation proceeds as follows: at each time $t$, the channel input, $x_t$, is transmitted through the channel and the channel state is $s_{t-1}$. The channel output, $y_t$, is equal to $s_{t-1}$ or to $x_t$ with the same probability. The new channel state is evaluated according to $s_t=x_t\oplus y_t\oplus s_{t-1}$, where $\oplus$ denotes the XOR operation. Accordingly, the trapdoor channel is a unifilar FSC. An illustration of the trapdoor channel appears in Fig. \ref{fig:TrapdoorChannel}.

The zero-error capacity of the trapdoor channel was found by Ahlswede \textit{et al}. \cite{Ahl_kaspi87,Ahlswede98} and is equal to 0.5 bits per channel use. Furthermore, the feedback capacity of this channel was found in \cite{PermuterCuffVanRoyWeissman08} to be $\mathsf{C_{FB}}=\log_2{\phi}$, where $\phi$ is the golden ratio, $\frac{1+\sqrt{5}}{2}$. However, the trapdoor channel was originally introduced as a channel without feedback, and the capacity of this channel in the absence of feedback is still open. 
The best known lower and upper bounds obtained so far in the literature imply that
\begin{align} \label{eq: trap_previousBest}
    0.572\leq \mathsf{C_{trapdoor}}\leq 0.6610, 
\end{align}
where the lower bound is derived in \cite{kobayashi2003}, and the upper bound is derived in \cite{Trapdoor_Lutz}.
In the following theorem, we introduce a novel upper bound on the capacity of the trapdoor channel that significantly improves the upper bound in \eqref{eq: trap_previousBest}. 
\begin{theorem} \label{theorem: Trapdoor}
The capacity of the trapdoor channel is upper-bounded by
\begin{align*} 
\mathsf{C_{trapdoor}} \leq \log_2\left(\frac{3}{2}\right). 
\end{align*}
\end{theorem}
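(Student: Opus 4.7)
The strategy is to instantiate Theorem~\ref{main_bound} for the trapdoor channel (which is unifilar) with a carefully chosen graph-based test distribution $R_{Y|Q}$, convert the resulting multi-letter bound to the infinite-horizon average-reward DP of Theorem~\ref{theorem: formulation}, and then solve that DP analytically by exhibiting a value function $h$ and a scalar $\rho=\log_2(3/2)$ that together satisfy the Bellman equation of Theorem~\ref{Theorem:Bellman}.

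First I would fix a small $Q$-graph, motivated by the numerical value/policy iteration alluded to in the opening of Section~\ref{sec:analytic}; a natural candidate is the two-node first-order Markov $Q$-graph of Fig.~\ref{fig:1Markov} equipped with the symmetric, biased test distribution
\begin{equation*}
R_{Y|Q=0} = \pr{\tfrac{2}{3},\tfrac{1}{3}},\qquad R_{Y|Q=1} = \pr{\tfrac{1}{3},\tfrac{2}{3}}.
\end{equation*}
The bias parameter $2/3$ is engineered so that in the ``deterministic, agreeable'' case $x=s=q$ the relevant divergence $D\pr{P_{Y|X,S}(\cdot\mid x,s)\,\|\,R_{Y|Q}(\cdot\mid q)}$ is exactly $\log_2(3/2)$; the only other values that arise are $\log_2 3 = 1+\log_2(3/2)$ (when $x=s\neq q$) and $\tfrac12\log_2(9/8) = \log_2(3/2)-\tfrac12$ (when $x\neq s$). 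Joint indecomposability in the sense of Definition~\ref{def:sq_ind} is routine: a constant input drives $(s_t,q_t)$ to a fixed configuration from any initial condition with positive probability in a bounded number of steps.

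Next I would set up the DP of Table~\ref{table: main1}: the action is $u_t=x_t$, the DP state $z_{t-1}$ is the joint posterior of $(S_{t-1},Q_{t-1})$ given $x^{t-1}$, the reward $g(z,u)$ is the $z$-average of the three divergence values above, and the state update $F(z,u)$ is determined by the trapdoor law together with $s_t=x_t\oplus y_t\oplus s_{t-1}$ and $q_t=y_t$. A useful structural simplification is the identity $s_t\oplus q_t = x_t\oplus s_{t-1}$, immediate from these formulas, which forces the support of $z_t$ to lie in a two-element subset of $\cS\times\cQ$ dictated by $(x_t,s_{t-1})$ and effectively reduces the DP state to a scalar belief with an explicit closed-form update.

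The crux of the proof is to guess a bounded function $h$ on the simplex $\mathcal{P}(\cS\times\cQ)$ for which
\begin{equation*}
\rho + h(z) \;=\; \max_{u\in\{0,1\}}\bigl[\,g(z,u)+h(F(z,u))\,\bigr]\qquad\text{for all } z,
\end{equation*}
with $\rho=\log_2(3/2)$; by Theorems~\ref{Theorem:Bellman} and~\ref{theorem: formulation} this then yields $\mathsf{C_{trapdoor}}\leq\rho^\ast=\log_2(3/2)$. I would read the shape of $h$ off the value-iteration output and expect it to be piecewise log-affine in the scalar belief parameter, with the pieces delineated by the region of the simplex on which each of the two actions is optimal. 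Verification then reduces to a finite case analysis: on each region one expands $g(z,u)+h(F(z,u))$ for both $u\in\{0,1\}$ and uses the clean identities $\log_2 3 - \log_2(3/2) = 1$ and $\log_2(3/2) - \tfrac12\log_2(9/8) = \tfrac12$ to collect terms and confirm the Bellman identity with $\rho=\log_2(3/2)$ and the stated action attaining the maximum. Pinning down the correct partition of the simplex (equivalently, the region where each action is optimal) and pushing through the case analysis cleanly is where I expect the main technical bookkeeping to lie; the boundedness of the candidate $h$ (needed to apply Theorem~\ref{Theorem:Bellman}) must also be checked, which should follow from the fact that the belief update $F(z,u)$ maps the simplex into itself.
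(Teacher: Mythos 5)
Your proposal matches the paper's proof in all essentials: the same two-node first-order Markov $Q$-graph of Fig.~\ref{fig:1Markov}, the same test distribution $R_{Y|Q}(0|0)=R_{Y|Q}(1|1)=2/3$, the correct three divergence values $\log_2(3/2)$, $\log_2 3$, and $\tfrac{1}{2}\log_2(9/8)$, the DP of Table~\ref{table: main1}, and a Bellman-equation verification with $\rho^*=\log_2(3/2)$ and a two-region policy. Two small details are off, though neither derails the argument. First, the identity $s_t\oplus q_t = x_t\oplus s_{t-1}$ does \emph{not} force $z_t$ onto a two-element support: since the belief averages over $s_{t-1}$, the update zeroes out exactly one of the four coordinates (e.g.\ $z_{1,1}'=0$ for $u=0$, $z_{0,0}'=0$ for $u=1$), and in addition forces $z_{0,1}'=z_{1,0}'$, so the post-update belief is effectively one-dimensional but supported on \emph{three} atoms of $\cS\times\cQ$, not two. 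Second, the solution $h^*$ is not log-affine but piecewise \emph{linear}: the paper takes $h^*(\underline{z})=\max\{z_{0,1},z_{1,0}\}$, with the optimal action $u^*=0$ when $z_{0,1}\le z_{1,0}$ and $u^*=1$ otherwise, and verification on each region is a one-line linear computation using exactly the identities $\log_2 3 = 1+\log_2(3/2)$ and $\tfrac{1}{2}\log_2(9/8)=\log_2(3/2)-\tfrac{1}{2}$ that you already noted.
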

The value of $\log_2\left(\frac{3}{2}\right)$ is approximately $0.5849$, which concludes our new upper bound, $$0.572\le \mathsf{C_{trapdoor}}\le 0.5849.$$

The proof of Theorem \ref{theorem: Trapdoor} is presented in Appendix \ref{app. trapdoor}. It involves analytically solving the Bellman equation (Theorem \ref{Theorem:Bellman}) corresponding to the DP formulation of the bound in Theorem~\ref{main_bound} obtained from a graph-based test distribution defined on the $Q$-graph in Fig.~\ref{fig:1Markov}. The chosen test distribution, the function $h$, and the optimal average reward $\rho^*$ that are used to solve the Bellman equation are given in the appendix.

\subsection{The Ising Channel}
The Ising channel was introduced as an information-theoretic channel by Berger and Bonomi in $1990$ \cite{Berger90IsingChannel}. Resembling the well-known physical Ising model, it models a channel with intersymbol interference. The channel operates as follows. At each time $t$, the channel input, $x_t$, is transmitted through the channel while the channel state is $s_{t-1}$. The channel output, $y_t$, is equal to $s_{t-1}$ or to $x_t$ with probability $0.5$. The new channel state is $s_t=x_t$, and therefore, the channel is both unifilar and input-driven.

The feedback capacity of the Ising channel was shown in \cite{Ising_channel} to be approximately $0.5755$. In the absence of feedback, the capacity is still unknown, and the best known lower and upper bounds were recently derived in \cite{Ising_artyom_IT} and are given by
\begin{align} \label{eq: Ising_UB_old}
   0.5451\leq \mathsf{C_{Ising}}\leq 0.551.
\end{align}
In the following theorem, we introduce a novel upper bound on the capacity of the Ising channel that improves upon the upper bound in \eqref{eq: Ising_UB_old}.

\begin{theorem} \label{theorem: Ising}
The capacity of the Ising channel is upper-bounded by
\begin{align*} 
\mathsf{C_{Ising}}\leq \min \frac{1}{4}\log_2\left(\frac{1}{2acd(1-a)}\right), 
\end{align*}
where the minimum is over all $(a,b,c,d)\in(0,1)^4$ that satisfy:
\begin{align} \label{eq:Ising_constraints}
	0 &\leq 2d\bar{c}-a^2 \nn\\
	0 &\leq a^3-2\bar{a}cd \nn\\
	0 &\leq 4b\bar{c}^2\bar{d}-a^2\bar{a}c \nn\\
	0 &\leq 32b^2\bar{b}\bar{c}^2\bar{d}^2 - ac^2d^2\bar{a}^2.
\end{align}
\end{theorem}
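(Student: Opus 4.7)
The plan is to apply Theorem~\ref{main_bound} to the Ising channel with a graph-based test distribution defined on a second-order Markov $Q$-graph. This graph has four nodes, one per pair $(y_{t-2},y_{t-1})\in\{0,1\}^2$, and we parameterize the test distribution by letting $R_{Y=0|Q=q}$ equal one of $a,b,c,d\in(0,1)$, one value per node. Because the Ising channel is unifilar with $s_t=f(x_t,s_{t-1})=x_t$, we are exactly in the Special Case of Section~\ref{S_DP}: the DP state reduces to $z_{t-1}=(x_{t-2},x_{t-1},s_{t-3})\in\mathcal{X}^2\times\mathcal{S}$, which since $s_{t-3}=x_{t-3}$ is simply the shift register $(x_{t-3},x_{t-2},x_{t-1})\in\{0,1\}^3$; the action is the binary input $u_t=x_t$, and the state evolution $F(z_{t-1},u_t)$ is the shift $(x_{t-2},x_{t-1},x_t)$. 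The reward \eqref{reward_func_ising} is a closed-form polynomial expression in $a,b,c,d$ for each of the eight DP states and each of the two actions, obtained by combining the transition kernel $P_{Y|X,S}$ of the Ising channel with $R_{Y|Q}$.

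With the DP set up, I would write the Bellman equation
\begin{equation*}
\rho+h(z)=\max_{u\in\{0,1\}}\bigl(g(z,u)+h(F(z,u))\bigr),\qquad z\in\{0,1\}^3,
\end{equation*}
and use value iteration to conjecture the structure of its solution. The factor $\tfrac{1}{4}$ in the stated bound strongly suggests that the optimal trajectory is a length-four cycle in $\{0,1\}^3$; summing $g$ around that cycle yields $\log_2\!\bigl(\tfrac{1}{2acd(1-a)}\bigr)$, which singles out $\rho=\tfrac{1}{4}\log_2\!\bigl(\tfrac{1}{2acd(1-a)}\bigr)$ as the natural candidate optimal average reward. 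A candidate relative-value function $h$ is then obtained by fixing $h$ arbitrarily on one cycle state and propagating around the cycle using Bellman equality; $h$ on the four off-cycle states is recovered by imposing the Bellman equality together with the optimal action prescribed by the policy.

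Given the candidate pair $(\rho,h)$, the verification of optimality reduces to checking at each of the eight DP states that the non-cycle action does not beat the cycle action, i.e.,
\begin{equation*}
g(z,u_{\text{alt}})+h(F(z,u_{\text{alt}}))\le \rho+h(z).
\end{equation*}
My expectation is that, after using the obvious bit-flip symmetry of the Ising channel (under $x\mapsto 1-x$, $y\mapsto 1-y$, $s\mapsto 1-s$) to identify redundant constraints, the system of eight inequalities collapses to the four polynomial inequalities in \eqref{eq:Ising_constraints}. Note that $b$ does not appear in the reward cycle but does appear in $h$ via the off-cycle states, which explains why $b$ shows up only in the constraints and not in the bound itself. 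Once the candidate is certified, Theorem~\ref{Theorem:Bellman} gives $\rho^{*}=\tfrac{1}{4}\log_2\!\bigl(\tfrac{1}{2acd(1-a)}\bigr)$, and Theorem~\ref{theorem: formulation} converts this into an upper bound on $\mathsf{C_{Ising}}$ for every admissible $(a,b,c,d)$; minimizing over the feasible set delivers the claim.

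The hard part will be purely algebraic: correctly guessing the optimal cycle (and hence the exact analytic form of $\rho$) from value-iteration output, writing down the full list of Bellman inequalities, and then massaging them, using the bit-flip symmetry and algebraic cancellations induced by the cyclic optimal trajectory, into the clean polynomial inequalities displayed in \eqref{eq:Ising_constraints}. The conceptual content is in identifying the 4-cycle and the role of the auxiliary parameter $b$; the remainder is careful bookkeeping of polynomial expressions in $a,b,c,d$.
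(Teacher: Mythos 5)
Your overall framework is the right one --- Theorem~\ref{main_bound} with a Markov $Q$-graph, the Special Case DP from Section~\ref{S_DP}, conjecturing an optimal cycle from value iteration, and certifying via the Bellman equation --- but the $Q$-graph you fix is the wrong one, and this is not a detail: it is what determines the analytic form of the bound.

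You take a second-order Markov $Q$-graph, i.e.\ $k=2$, which gives $|\cQ|=4$ with one free parameter per node and DP state $(x_{t-3},x_{t-2},x_{t-1})\in\{0,1\}^3$. The paper instead uses $k=3$, i.e.\ an 8-node $Q$-graph with nodes $(y_{t-3},y_{t-2},y_{t-1})\in\{0,1\}^3$, and \emph{enforces bit-flip symmetry in the parameterization from the outset}: $R_{Y|Q}(0|0,0,0)=1-R_{Y|Q}(0|1,1,1)=a$, and similarly for $b,c,d$, so four free parameters on eight nodes. The DP state is then $(x_{t-4},x_{t-3},x_{t-2},x_{t-1})\in\{0,1\}^4$, a 16-element state space, and $h^*$ is specified on all 16 states. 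The optimal policy is $u^*(\underline z)=\bar z_0\bar z_2+z_3\cdot(z_0\oplus z_2)$, which produces a period-\emph{eight} cycle in $\{0,1\}^4$ (e.g.\ $0000\to0001\to0011\to0111\to1111\to1110\to1100\to1000\to0000$). The factor $\tfrac14$ does not come from a length-four cycle; it comes from the bit-flip symmetry pairing the eight per-step rewards into four distinct values, so that $\tfrac18\sum_{\text{cycle}}g=\tfrac14\sum_{\text{4 distinct}}g$. That symmetry is also why $a$ and $1-a$ both appear in the bound $\tfrac14\log_2\!\bigl(\tfrac{1}{2acd(1-a)}\bigr)$ (you pass through both the node labeled $a$ and the node labeled $1-a$), whereas $c$ and $d$ appear only once, and why the constraints in \eqref{eq:Ising_constraints} involve $\bar c$ and $\bar d$ in specific patterns. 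Your $k=2$ parameterization has no such built-in complementary pairing, so the cycle rewards and off-cycle Bellman inequalities will not collapse to the expressions in the theorem.

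Concretely: as written, your plan would give \emph{a} valid dual upper bound (for whatever $k=2$ test distribution you choose), but not the one claimed. To recover Theorem~\ref{theorem: Ising}, replace $k=2$ by $k=3$, impose the complement symmetry $R_{Y|Q}(0|q)=1-R_{Y|Q}(0|\bar q)$ when introducing $a,b,c,d$, take the DP state space to be $\{0,1\}^4$, and look for a period-eight trajectory rather than a period-four one. Your remaining reasoning --- propagating $h$ around the cycle, recovering $h$ on off-cycle states from Bellman equality under the conjectured policy, and reading the four inequalities in \eqref{eq:Ising_constraints} off the non-optimal-action comparisons after symmetry reduction, with $b$ appearing only through the off-cycle $h$ values --- is the correct strategy and matches the paper's proof.
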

Evaluation of the bound shows that it is equal to approximately $0.5482$. Thus, the lower bound in \cite{Ising_artyom_IT} is almost tight: $$0.5451\le \mathsf{C_{Ising}}\le 0.5482.$$

The proof of Theorem \ref{theorem: Ising} is presented in Appendix \ref{app. Ising}, and follows from analytically solving the Bellman equation (Theorem \ref{Theorem:Bellman}) while using a graph-based test distribution that is structured on a Markov $Q$-graph with $k=3$. The upper bound in Theorem \ref{main_bound} can also be evaluated for Markov graphs of higher order. However, given the elegant expression obtained by using Markov graphs of order $k=3$ and the minor improvement seen when $k$ is increased, we present only the case of $k=3$.


\subsection{The Dicode Erasure Channel} \label{sec:DEC}
The main objective of this example is to demonstrate that the notion of $Q$-graphs can indeed be useful in a search for good bounds. We will show that for a simple channel known as the dicode erasure channel (DEC), a small $Q$-graph outperforms all Markov test distributions up to order $2$. 

The DEC has been investigated in \cite{PfitserLDPC_memory_erasure, henry_dissertation, Sabag_UB_IT} and stands as a simplified version of the well-known dicode channel with white additive Gaussian noise (AWGN) used as a model in magnetic recording \cite{DEC_Magnetic}. Specifically, in response to the input sequence $(x_t)$, the DEC with parameter $\epsilon \in [0,1]$ produces as output the sequence $(y_t)$, where 
$$
y_t = \begin{cases} x_t - x_{t-1} & \text{ with probability } 1-\epsilon \\
? & \text{ with probability } \epsilon.
\end{cases}
$$
The channel state is the previous input, i.e., $s_{t-1}=x_{t-1}$, so that the channel is both unifilar and input-driven.

The feedback capacity of the DEC channel was derived in \cite{Sabag_UB_IT}. However, in the absence of feedback, the problem of determining the capacity is still open.
In the following theorem, we present an upper bound on the DEC capacity. 
\begin{theorem} \label{theorem: DEC_UB}
The capacity of the DEC with erasure probability $\epsilon \in [0,1]$ is upper-bounded by
\begin{align} 
    \mathsf{C_{DEC}}(\epsilon) \leq 1+\epsilon\log_2\left(\frac{1-p}{p}\right),
\end{align}
where $p\in(0,1)$ solves $p^\epsilon = 2\bar{p}$.
\end{theorem}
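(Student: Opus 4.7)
The plan is to apply the unifilar-FSC dual bound of Theorem~\ref{main_bound} with a carefully chosen graph-based test distribution on a small $Q$-graph, and then to solve the associated Bellman equation of Theorem~\ref{Theorem:Bellman} in closed form. Since the DEC has $s_t=x_t$ (so the next state is a deterministic function of the current input alone) and output alphabet $\cY=\{-1,0,+1,?\}$, the DP formulation of Theorem~\ref{theorem: formulation} simplifies considerably: the joint DP state $z_t(q,s)$ collapses to $\mathbbm{1}\{s=x_t\}\cdot P_{Q_t|X^t=x^t}(q)$, so one only needs to track the posterior over $\cQ$ together with the last input.

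First, I would adopt the small $Q$-graph alluded to in Section~\ref{sec:DEC} (the one that beats every Markov graph of order up to $2$), and choose a parametric family $R_{Y|Q}(\cdot\mid q)$ that respects the natural symmetries of the DEC, namely the $0\leftrightarrow 1$ input symmetry and the $-1\leftrightarrow +1$ output symmetry; by those symmetries the erasure mass at each node is a single free parameter, so $R_{Y|Q}$ is governed by essentially one scalar, which I would call $p$. Using this symmetry, I would reduce the DP state to a one-dimensional sufficient statistic (a scalar log-likelihood between the two $Q$-nodes that are non-trivially visited), as suggested by value iteration on the finite specialization described in Section~\ref{S_DP}.

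Next, I would propose an ansatz for the relative value function $h$ that is linear in this sufficient statistic, together with a candidate average reward $\rho$, and verify the Bellman equation
\begin{equation*}
\rho+h(z)\;=\;\max_{u\in\cX}\bigl\{g(z,u)+h\bigl(F(z,u)\bigr)\bigr\},\qquad\forall z,
\end{equation*}
by direct substitution. Because $g$ is a relative entropy between a channel transition row and a row of $R_{Y|Q}$, and the $Q$-graph has only a handful of node/output combinations, the Bellman identity reduces to a short algebraic system in $(p,\rho)$. Tightening the bound by minimizing $\rho$ over the parameter $p$ (as in \eqref{DB_MC}--\eqref{DB_step}) leads to the stationarity condition $p^{\epsilon}=2\bar p$, and substituting this $p$ back into the expression for $\rho$ gives $\rho^{*}=1+\epsilon\log_2\!\bigl(\tfrac{1-p}{p}\bigr)$, which matches the claim. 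Sanity checks at $\epsilon=0$ (giving $p=1/2$, bound $1$) and $\epsilon=1$ (giving $p=2/3$, bound $0$) are consistent with the trivial capacities of the noiseless and fully-erased channels.

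The main obstacle is twofold. The first is discovering the correct $Q$-graph and the right symmetry-reduced parametrization of $R_{Y|Q}$, since with a naive Markov test distribution one provably cannot reach the bound stated in the theorem; this is precisely the point of introducing $Q$-graphs in Section~\ref{sec:main_results}. The second is guessing the functional form of the optimal relative value $h$: although the DP state lives in the simplex over $\cQ$, numerical value iteration is needed to reveal that $h$ depends only on a one-dimensional sufficient statistic, after which the Bellman verification is tedious but mechanical. Checking that the proposed action is indeed a maximizer in Theorem~\ref{Theorem:Bellman} (rather than merely a critical point) requires a short finite case analysis on the DP states reachable from the symmetric start, and it is at this step that the equation $p^{\epsilon}=2\bar p$ emerges as the unique consistency condition on the parameter.
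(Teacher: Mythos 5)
Your plan follows the route the paper advertises for all of its examples (pick a $Q$-graph, pick $R_{Y|Q}$, solve the Bellman equation), but for this particular theorem the paper's actual proof is different and much shorter: it never verifies the Bellman equation. Instead, it writes down a one-parameter test distribution on the three-node $Q$-graph of Fig.~\ref{fig:DEC_Q} and then observes that the claimed expression $1+\epsilon\log_2(\bar{p}/p)$ with $p^{\epsilon}=2\bar{p}$ is exactly the known feedback capacity of the DEC, arising as the stationarity condition of $\max_{p}(1-\epsilon)\frac{p+\epsilon H_2(p)}{\epsilon+(1-\epsilon)p}$; since $\mathsf{C}\le\mathsf{C_{FB}}$ unconditionally, the bound follows with no DP work at all. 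That observation is the one idea your proposal is missing, and it is precisely what lets the authors legitimately skip the verification that you defer.

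As a standalone proof, your proposal therefore has a genuine gap: every load-bearing step is left unexecuted. You do not exhibit the $Q$-graph, the test distribution, the relative value function $h$, or the algebra that is supposed to reduce the Bellman identity to a system in $(p,\rho)$; the claim that $h$ is linear in a one-dimensional sufficient statistic is an unverified ansatz; and the assertion that $p^{\epsilon}=2\bar{p}$ ``emerges as the unique consistency condition'' is exactly the computation that must be displayed. There is also a technical point you would need to address on this route: the optimizing test distribution assigns probability zero to some outputs (e.g., to $Y=-1$ at the node where the previous input is known to be $0$), so the hypothesis $R_{Y|Q}\succ 0$ of Theorem~\ref{main_bound} fails and you must instead verify the weaker absolute-continuity condition of Remark~\ref{remark: UB_restriction}. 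Your sanity checks at $\epsilon=0$ and $\epsilon=1$ are correct and the direction of the argument is sound, but without the explicit $R_{Y|Q}$, $h$, and Bellman verification --- or the feedback-capacity shortcut --- the theorem is not yet proved.
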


\begin{figure}[tb]
\centering
    \psfrag{A}[][][1]{$Q=1$}
    \psfrag{B}[][][1]{$Q=2$}
    \psfrag{C}[][][1]{$Q=3$}
    \psfrag{Z}[][][0.85]{$y=0$}
    \psfrag{D}[][][0.85]{$y=-1$}
    \psfrag{E}[][][0.85]{$y=1$}
    \psfrag{F}[][][0.85]{$y=?$}
    \includegraphics[scale = 0.5]{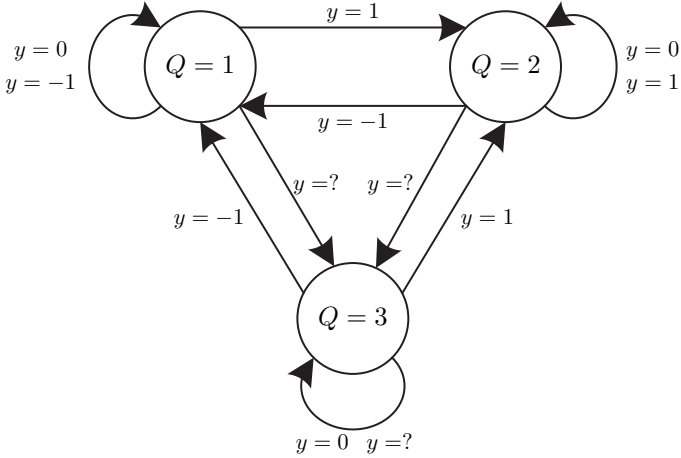}
    \caption{A $Q$-graph for the DEC. This $Q$-graph has a nice interpretation: the nodes $Q=1$ and $Q=2$ correspond to perfect knowledge of the channel state at the decoder, while $Q=3$ is associated with the decoder not knowing the channel state.}
    \label{fig:DEC_Q}
\end{figure}

The proof of Theorem~\ref{theorem: DEC_UB} is given in Appendix \ref{app. DEC_ub}. The bound is obtained by analytically solving the Bellman equation (Theorem \ref{Theorem:Bellman}) corresponding to the DP formulation of the bound in Theorem~\ref{main_bound} obtained from a graph-based test distribution defined on the $Q$-graph in Fig.~\ref{fig:DEC_Q}. Surprisingly, although the feedback capacity optimization problem is very different (the optimization is done over input distributions that are conditioned on past outputs), our upper bound coincides with the DEC feedback capacity. Of course, the feedback capacity is always an upper bound on the capacity without feedback, so the dual capacity method does not yield a better upper bound for this channel.

Nonetheless, our approach serves to illustrate another point. Fig.~\ref{fig:DEC} compares the upper bound in Theorem~\ref{theorem: DEC_UB} with those obtained by optimizing over  first- and second-order Markov test distributions. Since the output alphabet of the DEC is of size $4$ ($\cY = \{-1,0,1,?\}$), the Markov $Q$-graphs of order $k=1$ and $k=2$ have $4$ and $16$ nodes, respectively. Thus, the dual capacity bound obtained using the $Q$-graph on $3$ nodes (depicted in Fig.~\ref{fig:DEC_Q}) outperforms that obtained from Markov $Q$-graphs of larger size. Of course, it is possible that higher-order Markov test distributions may yield bounds that improve upon that in Theorem~\ref{theorem: DEC_UB}, but the problem of optimizing over such test distributions is considerably more complex than that of optimizing over test distributions defined on the $Q$-graph in Fig.~\ref{fig:DEC_Q}. Indeed, exploiting the symmetry between the states $Q=1$ and $Q=2$ in the latter $Q$-graph, the optimization problem over test distributions $R_{Y | Q}$ defined on this graph only involves two free parameters.

For the purpose of comparison, we present below a lower bound on the DEC capacity that is obtained by considering first-order Markov input processes.
\begin{figure}[t] 
\centering
        \psfrag{A}[][][1]{Channel parameter - $\epsilon$}
		\psfrag{B}[][][1]{Rate [bits/symbol]}
		\psfrag{C}[][][1]{DEC - Upper and Lower bounds}
    \includegraphics[scale=0.32]{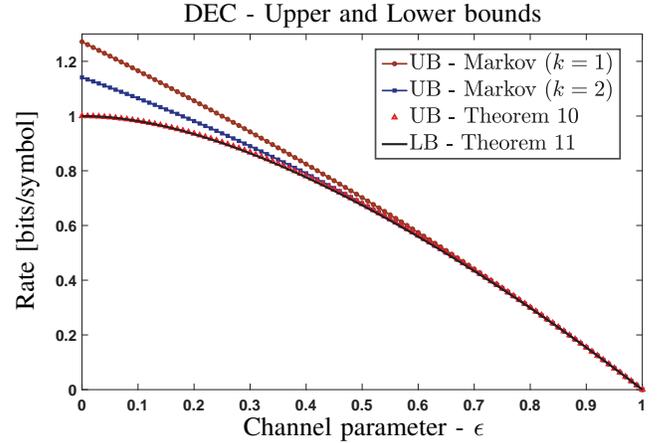} 
    \caption{A comparison between upper and lower bounds on the DEC capacity. The red triangles and the black line represent the upper and the lower bounds from Theorems \ref{theorem: DEC_UB} and \ref{theorem: DEC_LB}, respectively. They almost coincide: the maximum difference between them is $\sim10^{-3}$.}
    \label{fig:DEC}
\end{figure}

\begin{theorem}[\!\cite{henry_dissertation}, Ch. 4] \label{theorem: DEC_LB}
The capacity of the DEC with erasure probability $\epsilon\in[0,1]$ is lower-bounded by the maximum mutual information rate obtained from first-order Markov input processes, which is given by
$$ 
\max_{a\in[0,1]} \ \bar{\epsilon} H_2(a) + \frac{\bar{a}^2\bar{\epsilon}^2}{\epsilon}\sum_{q=0}^{\infty} \left(\frac{\epsilon}{1-a\bar{\epsilon}}\right)^{q+1}H_2\left(\frac{1-(2a-1)^q}{2}\right).
$$
\end{theorem}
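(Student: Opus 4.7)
The plan is to obtain the lower bound by computing $\lim_{n\to\infty} \frac{1}{n} I(X^n;Y^n)$ in closed form for a stationary, symmetric first-order binary Markov input $(X_t)$ with stay probability $a = P(X_t = X_{t-1})$, and then invoking the fact that any achievable rate is a lower bound on the DEC capacity; maximizing over $a \in [0,1]$ then yields the stated bound.

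First, I would decompose $I(X^n;Y^n) = H(Y^n) - H(Y^n|X^n)$. Since given $X^n$ the only remaining randomness in each output is the independent erasure, $H(Y^n|X^n) = n H_2(\epsilon)$. Letting $E_t = \mathbbm{1}[Y_t = {?}]$ be the erasure indicator, which is a function of $Y_t$ and is i.i.d.\ Bernoulli$(\epsilon)$ independent of $X^n$, one obtains $I(X^n;Y^n) = H(Y^n|E^n)$. Applying the chain rule $H(Y_t|Y^{t-1}) = H_2(\epsilon) + \bar\epsilon\, H(D_t|Y^{t-1})$, where $D_t = X_t - X_{t-1}$, and noting that $D_t = 0$ holds with marginal probability $a$ while conditional on $D_t \neq 0$ the sign of $D_t$ is a deterministic function of $X_{t-1}$, one can write $H(D_t|Y^{t-1}) = H_2(a) + \bar a\, H_2(P_{t-1})$, where $P_{t-1} = P(X_{t-1}=1 \mid Y^{t-1})$. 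Taking the per-symbol limit in the stationary regime gives
\begin{equation*}
\lim_{n\to\infty} \frac{1}{n} I(X^n;Y^n) = \bar\epsilon\, H_2(a) + \bar\epsilon\,\bar a\, \mathbb{E}[H_2(P)].
\end{equation*}

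Second, I would evaluate $\mathbb{E}[H_2(P)]$ by tracking the posterior through an auxiliary Markov chain. A Bayes computation yields the update rule: after observing $Y_t \in \{\pm 1\}$ the posterior collapses to $\{0,1\}$ (a ``pin-down''), after $Y_t = 0$ the posterior is unchanged, and after $Y_t = {?}$ the posterior maps by $P \mapsto \bar a + (2a-1)P$. Hence $H_2(P_t)$ is a function only of the counter $Q_t$ of erasures accumulated since the most recent pin-down, and iterating the erasure recursion from $\{0,1\}$ shows $H_2(P_t) = H_2\!\left(\frac{1-(2a-1)^{Q_t}}{2}\right)$. The chain $Q_t$ has transitions $Q \to Q+1$ (probability $\epsilon$), $Q \to Q$ (probability $a\bar\epsilon$), and $Q \to 0$ (probability $\bar a \bar\epsilon$), and its unique stationary distribution is the geometric
\begin{equation*}
\pi_q = \frac{\bar a \bar\epsilon}{1-a\bar\epsilon}\left(\frac{\epsilon}{1-a\bar\epsilon}\right)^{\!q},\qquad q \geq 0.
\end{equation*}

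Substituting $\mathbb{E}[H_2(P)] = \sum_{q \geq 0} \pi_q\, H_2\!\left(\frac{1-(2a-1)^q}{2}\right)$ (the $q=0$ term vanishes) into the rate formula and rearranging recovers exactly the expression stated in the theorem. The main technical obstacle is rigorously justifying that the counter $Q_t$ is a sufficient statistic for $H_2(P_t)$---equivalently, that observed zeros genuinely leave the posterior on $X_t$ equal to the posterior on $X_{t-1}$, so that only the number of intervening erasures matters. A secondary issue is passing from the Ces\`aro average $\frac{1}{n} I(X^n;Y^n)$ to the stationary expectation without initial-transient corrections, which is handled by a standard ergodic argument.
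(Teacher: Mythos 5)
Your proposal is correct and follows essentially the same route as the paper's proof in Appendix~\ref{app. DEC_LB}: both decompose $\frac{1}{n}I(X^n;Y^n)$ via $H(Y_t\mid Y^{t-1})$, track the posterior $P(X_t\mid Y^t)$ (pin-down on $\pm 1$, fixed point on $0$, affine update $P\mapsto\bar a+(2a-1)P$ on $?$), observe the resulting entropy depends only on the number of erasures since the last pin-down, and compute the geometric stationary distribution of that counter. Your step $H(Y_t\mid Y^{t-1})=H_2(\epsilon)+\bar\epsilon\bigl(H_2(a)+\bar a\,\mathbb{E}[H_2(P_{t-1})]\bigr)$ is simply the grouped form of the paper's $H_4(\epsilon,a\bar\epsilon,\bar a\bar\epsilon\alpha_q,\bar a\bar\epsilon\bar\alpha_q)$, and the two ``technical obstacles'' you flag are exactly what the paper's explicit $Q$-graph construction and stationary-distribution computation resolve.
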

The lower bound above is not explicitly stated in \cite[Ch. $4$]{henry_dissertation}, but it can be inferred from the derivations there. For completeness, an alternative proof of this result appears in Appendix~\ref{app. DEC_LB}.

%

\subsection{The POST Channel}
The POST channel was introduced in \cite{POSTchannel} as an example of a channel whose previous output serves as the next channel state. The channel inputs and outputs are related as follows. At each time $t$, if $x_t = y_{t-1}$, then $y_t = x_t$, otherwise, $y_t = x_t\oplus z_t$, where $z_t$ is distributed according to $\mathsf{Bern}(p)$. Accordingly, as illustrated in Fig. \ref{fig:POST}, when $y_{t-1} = 0$, the channel behaves as a $Z$ channel with parameter $p\in[0,1]$, and when $y_{t-1} = 1$, it behaves as an $S$ channel with the same parameter $p$. Here, the new channel state is the channel output, i.e., $s_t = y_t$ and therefore, the POST channel is a unifilar FSC. 

The capacity of the POST channel was found in \cite{POSTchannel}. Here we give an alternative proof of the converse, i.e., an upper bound matching the capacity expression given in \cite{POSTchannel}.

\begin{theorem} \label{Theorem: Post}
The capacity of the POST channel is upper-bounded by
\begin{align*}
\mathsf{C_{POST}} \leq \log_2\left(1+\bar{p}p^{p/\bar{p}}\right),
\end{align*}
for all values of the channel parameter $p\in[0,1]$.
\end{theorem}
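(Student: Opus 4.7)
The plan is to instantiate Theorem~\ref{main_bound} using the $1$st-order Markov $Q$-graph of Fig.~\ref{fig:1Markov} and to exploit a coincidence specific to POST: since $s_t = y_t$ and the Markov $Q$-graph satisfies $q_t = y_t$, one has $Q_t = S_t$ deterministically for every $t \ge 1$. Consequently, for $t \ge 2$, the joint distribution $z_{t-1}(q,s)$ is supported on the diagonal $\{(0,0),(1,1)\}$, so the double sum in Theorem~\ref{main_bound} collapses to a single sum over $s \in \{0,1\}$. The reward then reduces to a weighted combination of the four relative entropies $D_{x,s} := D\!\left(P_{Y|X,S}(\cdot|x,s)\,\|\,R_{Y|Q}(\cdot|s)\right)$ for $(x,s)\in\{0,1\}^2$.

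Next I would parametrize the test distribution by $a := R_{Y|Q}(0|0)$ and $b := R_{Y|Q}(0|1)$ with $a,b\in(0,1)$ so that $R_{Y|Q}\succ 0$, and compute the four divergences using the $Z$- and $S$-channel forms of $P_{Y|X,S}$; in particular $D_{0,0}=\log(1/a)$, $D_{1,1}=\log(1/\bar b)$, while $D_{1,0}$ and $D_{0,1}$ are the obvious mixtures involving $p,\bar p,a,b$. The POST channel is invariant under the simultaneous bit-flip $x\leftrightarrow\bar x$, $y\leftrightarrow \bar y$, $s\leftrightarrow\bar s$, which suggests imposing the matching symmetry on the test distribution by setting $b=\bar a$. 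This automatically yields $D_{1,1}=D_{0,0}$ and $D_{0,1}=D_{1,0}$, so the requirement that all four divergences equal a common value $c$ collapses to the single equation $D_{0,0}=D_{1,0}$. Solving $\log(1/a)=p\log(p/a)+\bar p\log(\bar p/\bar a)$ yields $\bar a/a=\bar p\,p^{p/\bar p}$, hence $a=1/(1+\bar p\,p^{p/\bar p})$ and common value $c=\log_2(1+\bar p\,p^{p/\bar p})$.

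With this choice every term in the DP reward associated with Theorem~\ref{main_bound} equals $c$, so the reward is constantly $c$ regardless of state and action; Theorem~\ref{theorem: formulation} then yields $\mathsf{C_{POST}}\le c$, as claimed. Two minor loose ends remain. First, the $t=1$ term may see $(Q_0,S_0)$ off the diagonal, but it contributes an additive $O(1)$ quantity that disappears after dividing by $n$ and passing to the limit. Second, joint indecomposability of POST with this $Q$-graph must be checked; for $p\in(0,1)$ this is immediate using $n=2$ and the identity $s_t=q_t=y_t$, since from any $(s_0,q_0)$ one can exhibit a $(s_2,q_2)$ reachable with positive probability for each input $x^2$. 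The extreme values $p\in\{0,1\}$ follow by continuity of the bound, with $p=1$ matching the degenerate channel. The only real intellectual step is the judicious choice of $Q$-graph that forces $Q_t=S_t$; once that is in place, the argument is pure algebra, and no DP needs to actually be solved.
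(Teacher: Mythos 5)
Your proposal is correct and instantiates the same ingredients as the paper: Theorem~\ref{main_bound} with the first-order Markov $Q$-graph of Fig.~\ref{fig:1Markov} and the test distribution $R_{Y|Q}(0|0)=R_{Y|Q}(1|1)=(1+\bar{p}p^{p/\bar p})^{-1}$, together with the observation that $Q_t=S_t=Y_t$ collapses the DP state to a pmf on $\{0,1\}$. Where you diverge is the endgame: the paper formally solves the Bellman equation, exhibiting $h^*(z)=z\log_2\bigl(p\bar K/(\bar p K)\bigr)+\tfrac{(1-z)\log_2 p}{1-p}$ and verifying optimality of $u^*=0$, whereas you note that your equalizing choice of $a$ makes all four divergences $D_{x,s}$ equal to $c=\log_2(1+\bar p p^{p/\bar p})$, so the reward $g(z,u)\equiv c$ is constant and $\rho^*=c$ without any Bellman analysis. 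These are in fact the same computation in disguise: using $\bar K/K=\bar p p^{p/\bar p}$ one checks that the two coefficients in the paper's $h^*$ coincide (both equal $\tfrac{1}{\bar p}\log_2 p$), so the paper's relative value function is a constant, which is exactly the signature of a constant reward. Your route makes the mechanism transparent --- the test distribution is chosen to equalize the per-step divergences, rendering the DP trivial --- at the cost of being specific to this channel; the paper's Bellman verification is more mechanical but is the template that generalizes to the other examples where the reward is genuinely state-dependent. Your handling of the loose ends is adequate (choosing $q_0=s_0$ makes even the $t=1$ term exactly $c$, and joint indecomposability holds already with $n=1$ for $p\in(0,1)$ since $y_1=x_1$ occurs with probability at least $\bar p>0$); just note that the endpoints $p\in\{0,1\}$ are better dispatched directly (the bound is the trivial $1$ at $p=0$, and at $p=1$ the output sequence is constant so the capacity is $0$) than by an appeal to continuity of the capacity in $p$, which is not itself established.
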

The bound is proved by solving the DP formulation of the upper bound in Theorem~\ref{main_bound} obtained from a graph-based test distribution defined on the Markov $Q$-graph depicted in Fig.~\ref{fig:1Markov}. The proof is given in Appendix~\ref{app. POST}. 

\begin{figure}[t]
\centering
\vspace{2mm} 
    \psfrag{A}[][][1]{$0$}
    \psfrag{B}[][][1]{$1$}
    \psfrag{C}[][][1]{$1-p$}
    \psfrag{D}[][][1]{$X_t$}
    \psfrag{E}[][][1]{$Y_t$}
    \psfrag{F}[][][1.1]{$y_{t-1}=0$}
    \psfrag{G}[][][1.1]{$y_{t-1}=1$}
    \psfrag{H}[][][1]{$p$}
\includegraphics[scale = 0.8]{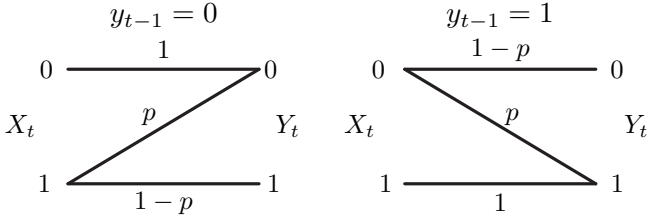}
\caption{POST channel: if $y_{t-1} = 0$ then the channel behaves as a $Z$ channel with parameter $p$, and if $y_{t-1} = 1$ it behaves as an $S$ channel with the same parameter $p$.}
\label{fig:POST}
\end{figure}

\section{Conclusions}\label{sec:conclusion}
In this paper, upper bounds on the capacity of FSCs are derived. First we used the dual capacity bounding technique with graph-based test distributions to derive a multi-letter upper bound expression on the capacity. Then it was shown that the derived upper bound can be formulated as a DP problem, and therefore, the bound is computable. For several channels, we were able to solve explicitly the DP problem, and we presented several results, including novel upper bounds on the capacity of the trapdoor and Ising channels. Further, our results for the DEC demonstrate the value of introducing $Q$-graphs and the accompanying graph-based distributions as a generalization of Markov distributions. An interesting future research direction is to address the complexity of finding a \emph{good} $Q$-graph using efficient reinforcement learning tools to evaluate the dynamic program \cite{AharoniSabagRL}. Such an integrated approach to computing upper bounds is not limited to the channels studied in this paper, and should work for any channel that admits a duality bound, e.g., for channels with feedback \cite{Sabag_DB_FB} and channels with constrained inputs \cite{Dual_Andrew_J}.

\begin{appendices}
\section{Derivation of the upper bound --- Proof of Theorem \ref{main_bound}} \label{app: UB_uniflar}
We provide here a complete proof of the upper bound in Theorem \ref{main_bound}. We start with the relative entropy term in the bound on $I(X^n;Y^n)$ in \eqref{co.DB}: For any initial pair $(s_0,q_0)$, consider
\begin{align}\label{eq: ub1_proof}
	&D\left(P_{Y^n|X^n=x^n,s_0}\middle\|R_{Y^n|Q_0=q_0}\right) \nn\\
	&=  \sum_{y^n} P(y^n|x^n,s_0)\log_2\left(\frac{P(y^n|x^n,s_0)}{R(y^n|q_0)}\right) \nn \\
	&= \sum_{y^n} P(y^n|x^n,s_0) \sum_{j=1}^n\log_2\left(\frac{P(y_j|x^n,y^{j-1},s_0)}{R(y_j|y^{j-1},q_0)}\right)\nn\\
	&\stackrel{(a)}= \sum_{j=1}^n \sum_{y^{j-1}} P(y^{j-1}|x^{j-1},s_0) \sum_{y_j}P(y_j|x^j,y^{j-1},s_0) \nn\\&\times\log_2\left(\frac{P(y_j|x^j,y^{j-1},s_0)}{R(y_j|y^{j-1},q_0)}\right)\nn\\
	&\stackrel{(b)}= \sum_{j=1}^n \sum_{y^{j-1}} P(y^{j-1}|x^{j-1},s_0) \nn\\&\times\infdiv[\big]{P_{Y_j|X^j,Y^{j-1},S_0}(\cdot|x^j,y^{j-1},s_0)}{R(Y_j|y^{j-1},q_0)}\nn\\
    &\stackrel{(c)}= \sum_{j=1}^n \sum_{q_{j-1},s_{j-1}}\sum_{y^{j-1}} P(q_{j-1},s_{j-1},y^{j-1}|x^{j-1},s_0,q_0) \nn\\  &\times \infdiv[\big]{P_{Y_j|X^j,Y^{j-1},S_0}(\cdot|x^j,y^{j-1},s_0)}{R(Y_j|y^{j-1},q_0)}\nn\\
    &\stackrel{(d)}= \sum_{j=1}^n \sum_{q_{j-1},s_{j-1}}\sum_{y^{j-1}} P(q_{j-1},s_{j-1},y^{j-1}|x^{j-1},s_0,q_0)   \nn\\&\times\infdiv[\big]{P_{Y|X,S}(\cdot|x_j,s_{j-1})}{R_{Y|Q}(\cdot|q_{j-1})}\nn\\
	&\stackrel{(e)}= \sum_{j=1}^n \sum_{q,s}z_{j-1}(q,s) \cdot \infdiv[\big]{P_{Y|X,S}(\cdot|x_j,s)}{R_{Y|Q}(\cdot|q)},
\end{align}
where $(a)$ follows by exchanging the order of summation and computing the marginal distributions, $(b)$ follows by identifying the relative entropy, $(c)$  follows from the fact that the pair $(q_{j-1},s_{j-1})$ is a deterministic function of $(x^{j-1},y^{j-1},s_0,q_0)$, $(d)$ follows from the unifilar property and the fact that $q_{j-1}$ is a deterministic function of $(y^{j-1},q_0)$ and, finally, $(e)$ follows since the divergence does not depend on $y^{j-1}$.

By taking the maximum over $x^n$ and dividing the term in \eqref{eq: ub1_proof} by $n$, we obtain, by way of \eqref{co.DB}, 
\begin{align} \label{proof:ub1}
	\mathsf{C} \leq
	\lim_{n\to\infty}\max_{x^n}&\frac{1}{n}\sum_{j=1}^n \sum_{q,s}z_{j-1}(q,s)\nn\\&\times\infdiv[\big]{P_{Y|X,S}(\cdot|x_j,s)}{R_{Y|Q}(\cdot|q)}, 
\end{align}
where the existence of the limit, for any $(s_0, q_0)$, is shown next.

Let us define the quantity 
\begin{align}\label{c_quant}
&\mathsf{c}(x^n,s_0,q_0) \nn\\ &\triangleq \frac{1}{n}\sum\limits_{j=1}^n \sum\limits_{q,s}z_{j-1}(q,s)\infdiv[\big]{P_{Y|X,S}(\cdot|x_j,s)}{R_{Y|Q}(\cdot|q)}.
\end{align}
We will argue that $\lim\limits_{n \to \infty} \max\limits_{x^n} \mathsf{c}(x^n,s_0,q_0)$ exists for any $(s_0,q_0)$, and, in fact, the limit does not depend on the particular choice of $(s_0,q_0)$. To this end, define $\underline{\mathsf{C}}_n$ and $\overline{\mathsf{C}}_n$ to be $\max_{x^n} \min_{s_0,q_0} \mathsf{c}(x^n,s_0,q_0)$ and $\max_{x^n} \max_{s_0,q_0} \mathsf{c}(x^n,s_0,q_0)$, respectively, i.e.,
\begin{align}
\underline{\mathsf{C}}_n \triangleq \frac{1}{n}\max_{x^n\in\mathcal{X}^n}&\min_{s_0,q_0}\sum_{j=1}^n\sum_{q,s}z_{j-1}(q,s) \nn\\&\times\infdiv[\big]{P_{Y|X,S}(\cdot|x_t,s)}{R_{Y|Q}(\cdot|q)}, \label{def:Cn_under} \\
\overline{\mathsf{C}}_n \triangleq \frac{1}{n}\max_{x^n\in\mathcal{X}^n}&\max_{s_0,q_0}\sum_{j=1}^n\sum_{q,s}z_{j-1}(q,s) \nn\\&\times\infdiv[\big]{P_{Y|X,S}(\cdot|x_t,s)}{R_{Y|Q}(\cdot|q)}. \label{def:Cn_over}
\end{align}
For any fixed choice of $(s_0,q_0)$, we clearly have $\underline{\mathsf{C}}_n \le \max_{x^n} \mathsf{c}(x^n,s_0,q_0) \le \overline{\mathsf{C}}_n$. In Appendix~\ref{app:subsec_limexists}, we show that $\lim\limits_{n \to \infty} \underline{\mathsf{C}}_n$ exists, and in Appendix~\ref{app:subsec_initialstate}, we show that this limit in fact equals $\lim\limits_{n \to \infty} \overline{\mathsf{C}}_n$. The desired conclusion follows by a sandwich argument.

\subsection{Existence of $\lim_n \underline{\mathsf{C}}_n$}\label{app:subsec_limexists}
We want to show that $\lim\limits_{n\to\infty} \underline{\mathsf{C}}_n$ exists. The basic idea of the proof is to show that the sequence $n\underline{\mathsf{C}}_n$ is super-additive. A sequence is super-additive if, for any two positive integers $m$, $k$, it satisfies the inequality $a_{m+k}\geq a_m+a_k$. By Fekete's lemma \cite{Fekete1923}, for such a sequence, the limit $\lim\limits_{n\to\infty} \frac{a_n}{n}$ exists, and is equal to $\sup_n \frac{a_n}{n}$.

Let $m$ and $k$ be two positive integers such that $m+k = n$. Let $\hat{x}^m$ and $\hat{x}^k$ be the input sequences that achieve the maximum for $\underline{\mathsf{C}}_m$ and $\underline{\mathsf{C}}_k$, respectively. Now, let $\hat{x}^n$ be the concatenation of $\hat{x}^m$ and $\hat{x}^k$, and consider $x^n=\hat{x}^n$. Since, in general, $x^n$ is not necessarily the input sequence that achieves $n\underline{\mathsf{C}}_n$, we have
\begin{align}\label{limit2}
    &n\underline{\mathsf{C}}_n \nn\\&\geq \min_{s_0,q_0}\sum_{t=1}^n\sum_{q,s}z_{t-1}(q,s) \infdiv[\big]{P_{Y|X,S}(\cdot|x_t,s)}{R_{Y|Q}(\cdot|q)} \nn\\
    &\stackrel{(a)}\geq \min_{s_0,q_0}\sum_{t=1}^m\sum_{q,s}z_{t-1}(q,s) \infdiv[\big]{P_{Y|X,S}(\cdot|x_t,s)}{R_{Y|Q}(\cdot|q)} \nn\\& + \min_{s_0,q_0}\sum_{t=m+1}^n\sum_{q,s}z_{t-1}(q,s) \infdiv[\big]{P_{Y|X,S}(\cdot|x_t,s)}{R_{Y|Q}(\cdot|q)} \nn \\
    &= m\underline{\mathsf{C}}_m \nn\\&+ \min_{s_0,q_0}\sum_{t=m+1}^{m+k}\sum_{q,s}z_{t-1}(q,s) \infdiv[\big]{P_{Y|X,S}(\cdot|x_t,s)}{R_{Y|Q}(\cdot|q)},
\end{align}
where $(a)$ follows from $\min_t\left[f(t)+g(t)\right]\geq \min_t f(t)+\min_t g(t)$. We will now show that the second term in \eqref{limit2} is at least $k\underline{\mathsf{C}}_k$. That is,
\begin{align}
    &\sum_{t=m+1}^{m+k}\sum_{q,s}z_{t-1}(q,s) \infdiv[\big]{P_{Y|X,S}(\cdot|x_t,s)}{R_{Y|Q}(\cdot|q)} \nn \\
    &=\sum_{t=m+1}^{m+k}\sum_{q_{t-1},s_{t-1}}P(q_{t-1},s_{t-1}|x^{t-1},s_0,q_0) \nn\\&\qquad\times\infdiv[\big]{P_{Y|X,S}(\cdot|x_t,s_{t-1})}{R_{Y|Q}(\cdot|q_{t-1})}\nn \\
    & = \sum_{t=m+1}^{m+k}\sum_{q_{t-1},s_{t-1}}\sum_{q_m,s_m}P(q_m,s_m,q_{t-1},s_{t-1}|x^{t-1},s_0,q_0) \nn\\&\qquad\times\infdiv[\big]{P_{Y|X,S}(\cdot|x_t,s_{t-1})}{R_{Y|Q}(\cdot|q_{t-1})} \nn \\
    & \stackrel{(a)}= \sum_{t=m+1}^{m+k}\sum_{q_{t-1},s_{t-1}}\sum_{q_m,s_m}P(q_m,s_m|x^{t-1},s_0,q_0)\nn\\&\qquad\times P(q_{t-1},s_{t-1}|x_{m+1}^{t-1},q_m,s_m) \nn\\&\qquad\times \infdiv[\big]{P_{Y|X,S}(\cdot|x_t,s_{t-1})}{R_{Y|Q}(\cdot|q_{t-1})} \nn \\
    & \stackrel{(b)}= \sum_{t=m+1}^{m+k}\sum_{q_{t-1},s_{t-1}}\sum_{q_m,s_m}P(q_m,s_m|x^m,s_0,q_0)\nn\\&\qquad\times P(q_{t-1},s_{t-1}|x_{m+1}^{t-1},q_m,s_m) \nn\\&\qquad\times \infdiv[\big]{P_{Y|X,S}(\cdot|x_t,s_{t-1})}{R_{Y|Q}(\cdot|q_{t-1})} \nn \\
    & = \sum_{q_m,s_m}P(q_m,s_m|x^m,s_0,q_0)\nn\\&\qquad\times\sum_{t=m+1}^{m+k}\sum_{q_{t-1},s_{t-1}} P(q_{t-1},s_{t-1}|x_{m+1}^{t-1},q_m,s_m) \nn\\&\qquad\times \infdiv[\big]{P_{Y|X,S}(\cdot|x_t,s_{t-1})}{R_{Y|Q}(\cdot|q_{t-1})} \nn \\
    & \ge \min_{q_m,s_m}\sum_{t=m+1}^{m+k}\sum_{q_{t-1},s_{t-1}} P(q_{t-1},s_{t-1}|x_{m+1}^{t-1},q_m,s_m) \nn\\&\qquad\times\infdiv[\big]{P_{Y|X,S}(\cdot|x_t,s_{t-1})}{R_{Y|Q}(\cdot|q_{t-1})} \nn \\
    & =  k\underline{\mathsf{C}}_k, \nn
\end{align}
where $(a)$ follows from the the Markov chain $(Q_{t-1},S_{t-1})-(X_{m+1}^{t-1},Q_m,S_m)-(X^m,S_0,Q_0)$ (see Lemma \ref{lemma: MC1} in Appendix \ref{app:MarkovChain1}), and $(b)$ follows from the Markov chain $(Q_m,S_m)-(X^m,S_0,Q_0)-X_{m+1}^{t-1}$ (see Lemma \ref{lemma: MC1} in Appendix \ref{app:MarkovChain1}). Furthermore, since the minimum over $s_0$ and $q_0$ in \eqref{limit2} does not affect the inequality, we conclude that
\begin{align}
    n\underline{\mathsf{C}}_n &\ge m\underline{\mathsf{C}}_m + k\underline{\mathsf{C}}_k.\nn
\end{align}
Therefore, $n\underline{\mathsf{C}}_n$ is indeed a super-additive sequence, which implies that the limit $\lim_{n}\underline{\mathsf{C}}_n$ exists. 

\subsection{Equality of limits}\label{app:subsec_initialstate}
The following lemma is the main result of this section. 
\begin{lemma}\label{lemma: initial_indep}
If an FSC and a graph-based test distribution, $R_{Y|Q}\succ 0$, are jointly indecomposable, then
    $\lim\limits_{n\to\infty}\underline{\mathsf{C}}_n = \lim\limits_{n\to\infty}\overline{\mathsf{C}}_n$.
\end{lemma}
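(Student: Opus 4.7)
The plan is to exploit joint indecomposability to show that $\mathsf{c}(x^n,s_0,q_0)$ is \emph{uniformly} insensitive to the initial pair $(s_0,q_0)$ as $n\to\infty$. A one-line reduction based on the elementary identity $\max_x f(x) - \max_x g(x)\le \max_x(f(x)-g(x))$ (taking $f$ to be $\max_{s_0,q_0}\mathsf{c}$ and $g$ to be $\min_{s_0,q_0}\mathsf{c}$) gives
\begin{align*}
\overline{\mathsf{C}}_n - \underline{\mathsf{C}}_n \le \max_{x^n,\,(s_0,q_0),\,(s'_0,q'_0)}\bigl|\mathsf{c}(x^n,s_0,q_0) - \mathsf{c}(x^n,s'_0,q'_0)\bigr|.
\end{align*}
Combined with the existence of $\lim_n\underline{\mathsf{C}}_n$ from Appendix~\ref{app:subsec_limexists}, it suffices to prove the right-hand side tends to zero.

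The core of the proof is to view $\Sigma_t\triangleq(S_t,Q_t)$ as a Markov chain on $\mathcal{S}\times\mathcal{Q}$ controlled by the input sequence. Since the FSC property \eqref{eq:FSC} makes $(S_t,Y_t)$ conditionally independent of the past given $(X_t,S_{t-1})$, and $Q_t=\phi(Q_{t-1},Y_t)$, the conditional transition kernel is
\begin{align*}
T(x)\bigl[(s,q),(s',q')\bigr] = \sum_{y}P_{S^+,Y|X,S}(s',y|x,s)\,\mathbbm{1}\{\phi(q,y)=q'\},
\end{align*}
a genuine stochastic matrix. Joint indecomposability states precisely that there exists $n_0$ such that, for every $x^{n_0}\in\mathcal{X}^{n_0}$, the product $T(x_1)\cdots T(x_{n_0})$ possesses a column all of whose entries are strictly positive. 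Since $|\mathcal{X}^{n_0}|<\infty$, the smallest such positive entry across all $x^{n_0}$ is bounded below by some $\alpha>0$.

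A standard Dobrushin-coefficient argument now applies: any stochastic matrix with a column of entries $\ge\alpha$ has contraction coefficient $\le 1-\alpha$, because $1-\sum_k\min(P_{ik},P_{i'k})\le 1-\alpha$. Sub-multiplicativity of the Dobrushin coefficient under products of stochastic matrices (applied to chunks of length $n_0$) yields
\begin{align*}
\bigl\|P(\Sigma_n\!\in\!\cdot\mid x^n,s_0,q_0) - P(\Sigma_n\!\in\!\cdot\mid x^n,s'_0,q'_0)\bigr\|_{TV} \le (1-\alpha)^{\lfloor n/n_0\rfloor} \triangleq \epsilon_n,
\end{align*}
uniformly in $x^n$ and the two initial pairs, with $\epsilon_n\to 0$ geometrically. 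Finally, because $R_{Y|Q}\succ 0$ and the alphabets are finite, $M\triangleq\max_{x,s,q}D\bigl(P_{Y|X,S}(\cdot|x,s)\,\big\|\,R_{Y|Q}(\cdot|q)\bigr)<\infty$. Plugging the total-variation bound $\sum_{q,s}|z_{j-1}(q,s|s_0,q_0)-z_{j-1}(q,s|s'_0,q'_0)|\le 2\epsilon_{j-1}$ into the definition \eqref{c_quant} of $\mathsf{c}$ gives
\begin{align*}
\bigl|\mathsf{c}(x^n,s_0,q_0) - \mathsf{c}(x^n,s'_0,q'_0)\bigr| \le \frac{2M}{n}\sum_{j=1}^{n}\epsilon_{j-1},
\end{align*}
which vanishes as $n\to\infty$ by the Cesaro mean theorem, uniformly in $x^n$ and initial pairs. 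Combined with the first display, this proves $\overline{\mathsf{C}}_n-\underline{\mathsf{C}}_n\to 0$.

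The main obstacle is the Dobrushin-contraction step. Everything rests on correctly identifying $\Sigma_t$ as a Markov chain conditional on inputs and on carefully controlling the non-homogeneity: the matrices $T(x)$ change from step to step, so one cannot quote a classical ergodic theorem off the shelf, but must invoke sub-multiplicativity of $\eta$ together with the \emph{uniform} positive-column lower bound afforded by the finiteness of $\mathcal{X}^{n_0}$. Once that uniformity is in place, the rest is bookkeeping.
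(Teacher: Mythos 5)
Your proof is correct, and it takes a genuinely different route from the paper's. The paper follows Gallager's template: it writes $n\overline{\mathsf{C}}_n$ and $n\underline{\mathsf{C}}_n$ as relative entropies, splits each by the chain rule into a head of length $k$ and a tail, bounds the head by $kM_1$ (using $R_{Y|Q}\succ 0$), swaps in conditioning on $S_k$ at a cost of $\log_2|\mathcal{S}|$ via the auxiliary bound $|D(P_{Y|Z}\|R_Y|P_Z)-D(P_Y\|R_Y)|\le\log_2|\mathcal{Z}|$, and then reduces everything to the convergence $\overline{d}_k\to 0$ of the conditional state--node distributions, which it justifies only by pointing to Gallager's Theorem 4.6.3 ``with an appropriate modification.'' You instead work directly with the per-letter form $\mathsf{c}(x^n,s_0,q_0)=\frac1n\sum_j\sum_{q,s}z_{j-1}(q,s)D(\cdot\|\cdot)$ already available from \eqref{eq: ub1_proof}, identify $(S_t,Q_t)$ as an input-controlled Markov chain (this is exactly the content of Lemma~\ref{lemma: MC1}, which you should cite or re-derive to justify the product formula $z_n=z_0T(x_1)\cdots T(x_n)$ --- in particular that $z_{t-1}$ does not depend on future inputs), and prove the analogue of Lemma~\ref{dist_0} yourself via a Dobrushin-coefficient contraction, with the uniform positive-column constant $\alpha>0$ extracted from joint indecomposability and the finiteness of $\mathcal{X}^{n_0}$. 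Your route buys two things the paper's does not: it is self-contained (it replaces the hand-wave to Gallager with an explicit sub-multiplicativity argument), and it is quantitative, giving $\|z_{j-1}-z'_{j-1}\|_1\le 2(1-\alpha)^{\lfloor (j-1)/n_0\rfloor}$ and hence $\overline{\mathsf{C}}_n-\underline{\mathsf{C}}_n\le \tfrac{2Mn_0}{\alpha n}=O(1/n)$, rather than merely $\limsup_n(\overline{\mathsf{C}}_n-\underline{\mathsf{C}}_n)\le\epsilon M_2$ for every $\epsilon$. The paper's approach, on the other hand, needs only the single conditioning point $k$ and so generalizes more readily to settings where a per-letter decomposition of the divergence is unavailable; here, where \eqref{eq: ub1_proof} is already in hand, your argument is the cleaner one.
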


Before providing the proof of Lemma \ref{lemma: initial_indep}, we present a technical result.
\begin{lemma} \label{lemma: diff} 
Let $Y\in\mathcal{Y}$ and $Z\in\mathcal{Z}$ be two arbitrary random variables such that, for any $z\in\mathcal{Z}$, $P_{Y|Z=z}\ll R_Y$. Then,
\begin{align}
    \left|D\left(P_{Y|Z}\|R_Y|P_Z\right)-D\left(P_Y\|R_Y\right)\right|\le \log_2 (|\mathcal{Z}|).
\end{align}
\end{lemma}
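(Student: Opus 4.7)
The plan is to show that the expression inside the absolute value is in fact non-negative and equal to the mutual information $I(Y;Z)$, which is then trivially bounded by $\log_2|\mathcal{Z}|$.

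First I would expand the conditional relative entropy and the unconditional one, under the absolute continuity assumption which guarantees that all the logarithms involved are well-defined. Writing
\begin{align*}
D(P_{Y|Z}\|R_Y|P_Z) &= \sum_{z,y} P_{Y,Z}(y,z) \log_2 \frac{P_{Y|Z}(y|z)}{R_Y(y)},\\
D(P_Y\|R_Y) &= \sum_{y} P_Y(y) \log_2 \frac{P_Y(y)}{R_Y(y)},
\end{align*}
the $R_Y$ terms can be combined using $\sum_z P_{Y,Z}(y,z)=P_Y(y)$, and they cancel. What remains, after recognizing $-H(Y|Z)$ and $H(Y)$, simplifies to
\begin{align*}
D(P_{Y|Z}\|R_Y|P_Z) - D(P_Y\|R_Y) = I(Y;Z).
\end{align*}

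From here the lemma is immediate. Since mutual information is non-negative, the quantity inside the absolute value is non-negative (this is also a direct consequence of the convexity of relative entropy in its first argument, via Jensen's inequality applied to $P_Y = \mathbb{E}_Z[P_{Y|Z}]$). Therefore the absolute value equals $I(Y;Z)$, and the standard bound $I(Y;Z) \le H(Z) \le \log_2|\mathcal{Z}|$ completes the argument.

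I do not anticipate any real obstacle here; the only subtlety is ensuring that the absolute continuity hypothesis $P_{Y|Z=z}\ll R_Y$ (which, together with the law of total probability, implies $P_Y\ll R_Y$) is invoked so that every logarithm in the manipulation is finite, allowing the $R_Y$ terms to be rearranged and cancelled without ambiguity.
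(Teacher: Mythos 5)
Your proof is correct and follows essentially the same route as the paper's: both expand the two relative entropies, cancel the $R_Y$ terms via $\sum_z P_{Y,Z}(y,z)=P_Y(y)$, identify the difference as $I(Y;Z)$, and bound it by $\log_2|\mathcal{Z}|$. Your remarks on non-negativity and on the role of the absolute continuity hypothesis are accurate refinements of the same argument.
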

\begin{proof}[Proof of Lemma \ref{lemma: diff}]
We bound the difference as follows:
\begin{align}
    &\left|D\left(P_{Y|Z}\|R_Y|P_Z\right)-D\left(P_Y\|R_Y\right)\right|
    \nn\\&=\left|\sum_{y,z}P(y,z)\log_2\left(\frac{P(y|z)}{R(y)}\right)-\sum_{y}P(y)\log_2\left(\frac{P(y)}{R(y)}\right)\right|\nn\\
    &=\left|-H(Y|Z) + H(Y)\right|\nn\\
    &=I(Y;Z)\nn\\
    &\le \log_2\left(|\mathcal{Z}|\right).
\end{align}
\end{proof}

\begin{proof}[Proof of Lemma \ref{lemma: initial_indep}]
This proof follows the main idea of Gallager's proof in \cite[Theorem $4.6.4$]{Gallager68}. 

From \eqref{eq: ub1_proof}, \eqref{def:Cn_under} and \eqref{def:Cn_over}, we note that for any $n$,
\begin{align*}
    \overline{\mathsf{C}}_n = \max_{x^n\in\mathcal{X}^n}\max_{s_0,q_0}\frac{1}{n}\infdiv[\big]{P_{Y^n|X^n=x^n,s_0}}{R_{Y^n|q_0}}, \\
    \underline{\mathsf{C}}_n = \max_{x^n\in\mathcal{X}^n}\min_{s_0,q_0}\frac{1}{n}\infdiv[\big]{P_{Y^n|X^n=x^n,s_0}}{R_{Y^n|q_0}}.
\end{align*}
Let $x^n$ and $(s_0,q_0)$ be the input sequence, the initial state, and the initial node that maximize $\infdiv[\big]{P_{Y^n|X^n=x^n,s_0}}{R_{Y^n|q_0}}$. Let $(\tilde{s}_0,\tilde{q}_0)$ denote the initial state and the initial node that minimize it for the input sequence $x^n$. Therefore, by the definition of $\overline{\mathsf{C}}_n$ and $\underline{\mathsf{C}}_n$ it follows that
\begin{align}
    \overline{\mathsf{C}}_n &= \frac{1}{n}\infdiv[\big]{P_{Y^n|X^n=x^n,s_0}}{R_{Y^n|q_0}}, \nn\\
    \underline{\mathsf{C}}_n &\ge \frac{1}{n}\infdiv[\big]{P_{Y^n|X^n=x^n,\tilde{s}_0}}{R_{Y^n|\tilde{q}_0}}.\nn
\end{align}
Let $m$ and $k$ be two positive integers such that $m+k = n$. By using the chain rule for relative entropy we have
\begin{align} \label{eq: ub_terms}
    \overline{\mathsf{C}}_n &= \frac{1}{n}\bigg[\infdiv[\big]{P_{Y^k|x^n,s_0}}{R_{Y^k|q_0}} \nn\\&+ \infdiv[\big]{P_{Y_{k+1}^n|Y^k,x^n,s_0}}{R_{Y_{k+1}^n|Y^k,q_0}\big{|}P_{Y^k|x^n,s_0,q_0}}\bigg].
\end{align}


Now, the condition $R_{Y|Q}\succ 0$ assumed in the statement of the lemma ensures that $P_{Y_i|Y^{i-1}=y^{i-1},X^i=x^i}  \ll R_{Y_i|Q_{i-1}=q_{i-1}}$ for any $x^i,y^{i-1}$, where $q_{i-1}=\Phi(y^{i-1})$. There is thus a constant $M_1$ such that
\begin{align*}
    \infdiv[\big]{P_{Y_i|Y^{i-1},x^n,s_0}}{R_{Y_i|Y^{i-1},q_0}\big{|}P_{Y^{i-1}|x^n,s_0,q_0}} \le M_1
\end{align*}
for any $1\le i\le n$. As a consequence, the first relative entropy term in \eqref{eq: ub_terms} is bounded by $kM_1$:
\begin{align} \label{eq: KL_maxx}
    &\infdiv[\big]{P_{Y^k|x^n,s_0}}{R_{Y^k|s_0,q_0}} \nn\\&= \sum_{i=1}^k \infdiv[\big]{P_{Y_i|Y^{i-1},x^n,s_0}}{R_{Y_i|Y^{i-1},q_0}\big{|}P_{Y^{i-1}|x^n,s_0,q_0}} \nn\\
    &\le kM_1. 
\end{align}

Furthermore, by Lemma \ref{lemma: diff}, the second relative entropy term in \eqref{eq: ub_terms} is changed by at most $\log_2(|\mathcal{S}|)$ when conditioning on $S_k$. Therefore, 
\begin{align*}
    \overline{\mathsf{C}}_n &\le \frac{1}{n}\bigg[kM_1 + \log_2(|\mathcal{S}|)\nn\\&+\infdiv[\Big]{P_{Y_{k+1}^n|Y^k,S_k,x^n,s_0}}{R_{Y_{k+1}^n|Y^k,q_0}\Big{|}P_{Y^k,S_k|x^n,s_0,q_0}}\bigg].
\end{align*}
In a similar manner, $\underline{\mathsf{C}}_n$ can be written as in \eqref{eq: ub_terms}, where this time we consider $(\tilde{s}_0,\tilde{q}_0)$ instead of $(s_0,q_0)$. The first term is lower bounded by $0$, and here too, by Lemma \ref{lemma: diff}, we have
\begin{align*} 
    \underline{\mathsf{C}}_n &\ge \frac{1}{n}\bigg[-\log_2(|\mathcal{S}|)\nn\\&+\infdiv[\Big]{P_{Y_{k+1}^n|Y^k,S_k,x^n,\tilde{s}_0}}{R_{Y_{k+1}^n|Y^k,\tilde{q}_0}\Big{|}P_{Y^k,S_k|x^n,\tilde{s}_0,\tilde{q}_0}}\bigg].
\end{align*}

Therefore,
\begin{align}
    &\overline{\mathsf{C}}_n - \underline{\mathsf{C}}_n \nn\\&\le \frac{1}{n}\bigg[kM_1 + 2\log_2(|\mathcal{S}|)\nn\\&+\infdiv[\Big]{P_{Y_{k+1}^n|Y^k,S_k,x^n,s_0}}{R_{Y_{k+1}^n|Y^k,q_0}\Big{|}P_{Y^k,S_k|x^n,s_0,q_0}}\nn\\
     & -\infdiv[\Big]{P_{Y_{k+1}^n|Y^k,S_k,x^n,\tilde{s}_0}}{R_{Y_{k+1}^n|Y^k,\tilde{q}_0}\Big{|}P_{Y^k,S_k|x^n,\tilde{s}_0,\tilde{q}_0}}\bigg] \nn\\ 
    &\stackrel{(a)}= \frac{1}{n} \bigg[kM_1 + 2\log_2(|\mathcal{S}|) 
    \nn\\& + D\Big(P_{Y_{k+1}^n|Y^k,S_k,x^n,s_0}\Big{\|}R_{Y_{k+1}^n|Y^k,Q_k,q_0}\Big{|}P_{Y^k,S_k,Q_k|x^n,s_0,q_0}\Big)  
    \nn\\& - D\Big(P_{Y_{k+1}^n|Y^k,S_k,x^n,\tilde{s}_0}\Big{\|}R_{Y_{k+1}^n|Y^k,Q_k,\tilde{q}_0}\Big{|}P_{Y^k,S_k,Q_k|x^n,\tilde{s}_0,\tilde{q}_0}\Big)\hspace{-0.05cm}\bigg] \nn\\
    &\stackrel{(b)}= \frac{1}{n} \bigg[kM_1 + 2\log_2(|\mathcal{S}|) \nn\\&+ \infdiv[\Big]{P_{Y_{k+1}^n|S_k,x^n,s_0}}{R_{Y_{k+1}^n|Q_k,q_0}\Big{|}P_{S_k,Q_k|x^n,s_0,q_0}}\nn\\ &- \infdiv[\Big]{P_{Y_{k+1}^n|S_k,x^n,\tilde{s}_0}}{R_{Y_{k+1}^n|Q_k,\tilde{q}_0}\Big{|}P_{S_k,Q_k|x^n,\tilde{s}_0,\tilde{q}_0}}\bigg] \nn\\
    &= \frac{1}{n} \bigg[kM_1 + 2\log_2(|\mathcal{S}|)
    \nn\\&+ \sum_{s_k,q_k}\big[P(s_k,q_k|x^k,s_0,q_0)-P(s_k,q_k|x^k,\tilde{s}_0,\tilde{q}_0)\big] \nn
    \\& \times \infdiv[\Big]{P_{Y_{k+1}^n|s_k,x^n}}{R_{Y_{k+1}^n|q_k}}\bigg],
\end{align}
where $(a)$ follows since $Q_k = \Phi(Y^k)$ and $(b)$ follows by observing that the conditioning on $Y^k$ can be dropped due to the channel definition, and since $R(y_{k+1}^n|y^k,q_k,q_0) = R(y_{k+1}^n|q_k,q_0)$.

Here, too, there exists a finite integer $M_2$ such that
\begin{align*}
    \infdiv[\Big]{P_{Y_{k+1}^n|s_k,x^n}}{R_{Y_{k+1}^n|q_k}} \le (n-k)M_2.
\end{align*}
Now, let us denote
\begin{align*}
    \overline{d}_k \triangleq \max_{x^k}\sum_{s_k,q_k}\big|P(s_k,q_k|x^k,s_0,q_0)-P(s_k,q_k|x^k,\tilde{s}_0,\tilde{q}_0)\big|.
\end{align*}
Therefore,
\begin{align*}
    \overline{\mathsf{C}}_n - \underline{\mathsf{C}}_n &\le \frac{1}{n} \bigg[kM_1 + 2\log_2(|\mathcal{S}|) + \overline{d}_k(n-k)M_2\bigg].
\end{align*}

To further upper bound this, we will use Lemma \ref{dist_0} which implies that $\overline{d}_k$ tends to zero as $k$ grows. 
Accordingly, by Lemma \ref{dist_0}, for any $\epsilon > 0$, we can choose $k$ so that $\overline{d}_k\le \epsilon$. Therefore, for such a $k$,
\begin{align*}
    \lim_{n\to\infty}\overline{\mathsf{C}}_n - \underline{\mathsf{C}}_n &\le \epsilon M_2.
\end{align*}
Since $\epsilon > 0$ is arbitrary and $\overline{\mathsf{C}}_n \ge \underline{\mathsf{C}}_n$, the proof is completed.
\end{proof}

\begin{lemma}\label{dist_0}
Consider an FSC and a graph-based test distribution that are jointly indecomposable. Then, for any $\epsilon>0$, there exists an $N$, such that for $n\ge N$
\begin{align} \label{eq: initial_un}
    \big|P(s_n,q_n|x^n,s_0,q_0)-P(s_n,q_n|x^n,\tilde{s}_0,\tilde{q}_0)\big|\le \epsilon
\end{align}
for all $s^n$, $q_n$, $x^n$, $\tilde{s}_0$, $\tilde{q}_0$, $s_0$, and $q_0$.
\end{lemma}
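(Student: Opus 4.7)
The plan is to prove Lemma~\ref{dist_0} by a Doeblin minorization / coupling argument applied to the joint process $(S_t, Q_t)$ conditional on the input sequence, mirroring Gallager's proof of \cite[Thm.~4.6.4]{Gallager68} with the state chain $S_t$ replaced by $(S_t, Q_t)$. First, by the alternative characterization of joint indecomposability stated in Definition~\ref{def:sq_ind}, there exists $n_0 \in \mathbb{N}$ such that, for every $x^{n_0} \in \cX^{n_0}$, there is some pair $(s^\star(x^{n_0}), q^\star(x^{n_0}))$ with $P(s^\star, q^\star \mid x^{n_0}, s_0, q_0) > 0$ for all $(s_0, q_0)$. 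Finiteness of $\cX$, $\cS$ and $\cQ$ then lets me take a minimum to obtain a uniform constant
\[
\delta \triangleq \min_{x^{n_0}, s_0, q_0} P\!\left(s^\star(x^{n_0}), q^\star(x^{n_0}) \,\middle|\, x^{n_0}, s_0, q_0\right) > 0.
\]
By time-invariance of the channel and of the graph mapping $\Phi$, the same $\delta$ applies to any block of $n_0$ consecutive time steps starting from any time $\ell$, with the pair $(s^\star, q^\star)$ depending only on the input block in that window.

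Next, I would combine this uniform minorization with the elementary identity $d_{TV}(\mu, \tilde\mu) = 1 - \sum \min(\mu, \tilde\mu)$. Fix a block $x^{n_0}$ and two initial pairs $(s_0, q_0), (\tilde s_0, \tilde q_0)$, and let $\mu, \tilde\mu$ be the resulting conditional laws of $(S_{n_0}, Q_{n_0})$. Lower-bounding the $\min$ term at the common accessible pair $(s^\star, q^\star)$ by $\delta$ gives $d_{TV}(\mu, \tilde\mu) \leq 1 - \delta$. Because $(S_t, Q_t)$ is a Markov chain conditional on the input sequence, this single-block contraction composes across disjoint length-$n_0$ blocks: writing $n = k n_0 + r$ with $0 \leq r < n_0$ and iterating $k$ times yields
\[
d_{TV}\!\left(P_{S_n, Q_n \mid x^n, s_0, q_0},\, P_{S_n, Q_n \mid x^n, \tilde s_0, \tilde q_0}\right) \leq (1-\delta)^{k},
\]
uniformly in $x^n$ and in both initial pairs. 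Since the pointwise difference in \eqref{eq: initial_un} is at most twice the total variation, choosing $N$ with $2(1-\delta)^{\lfloor N/n_0 \rfloor} \leq \epsilon$ completes the proof.

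The conceptually delicate point — the main obstacle, I expect — is that the joint chain $(S_t, Q_t)$ is \emph{time-inhomogeneous}, since its one-step kernel depends on $x_t$, so a generic spectral-gap argument for homogeneous chains does not apply directly. The resolution is exactly the \emph{uniformity} of $\delta$ over all length-$n_0$ input blocks obtained in the first paragraph. This uniform-in-input minorization is precisely what joint indecomposability delivers via finiteness of the alphabets, and it is what allows the contraction factor $1 - \delta$ to be reused block by block in the iteration regardless of which inputs are actually applied.
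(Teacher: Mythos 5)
Your proof is correct and follows essentially the same route as the paper, which simply cites Gallager's proof of \cite[Theorem 4.6.3]{Gallager68} with the state $s$ replaced by the pair $(s,q)$; Gallager's original argument is exactly the uniform Doeblin-minorization/Dobrushin-contraction argument you spell out block by block. One small remark: with the convention $d_{TV}(\mu,\tilde\mu)=1-\sum\min(\mu,\tilde\mu)$ that you use, the pointwise difference in \eqref{eq: initial_un} is already bounded by $d_{TV}$, so the extra factor of $2$ in your final step is harmless but unnecessary.
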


\begin{proof}[Proof of Lemma \ref{dist_0}]
Since the FSC and the graph-based test distribution are jointly indecomposable, then, by Definition \ref{def:sq_ind}, for some fixed $n$ and each input sequence $x^n$, there exists a choice of $s_n$ and $q_n$, such that
\begin{align} \label{eq: suff_cond}
    P(s_n,q_n|x^n,s_0,q_0) > 0, \text{\;\; for all $s_0,q_0.$}
\end{align}
In \cite[Theorem $4.6.3$]{Gallager68}, Gallager provides a sufficient condition for verifying that an FSC is indecomposable, that is, a sufficient condition for verifying that property \eqref{eq: indec: property} holds. 
Following his proof with an appropriate modification we obtain that condition \eqref{eq: suff_cond} is sufficient for verifying that condition \eqref{eq: initial_un} holds. In particular, the modification is done by replacing the state $s_n$ by the pair $(s_n,q_n)$, and the initial state $s_0$ by the pair $(s_0,q_0)$.
\end{proof}
\subsection{Proof of the Markov chains}\label{app:MarkovChain1}
We now show the Markov chains that were required in the proof.

\begin{lemma}\label{lemma: MC1}
For any FSC, the following Markov chains hold:
\begin{align*}
    P(s_t,q_t|x^t,q_m,s_m,s_0,q_0) &= P(s_t,q_t|x_{m+1}^t,q_m,s_m),\\
P(s_m,q_m|x^{t-1},s_0,q_0) &= P(s_m,q_m|x^m,s_0,q_0),
\end{align*}
 for $t\ge m+1$.
\end{lemma}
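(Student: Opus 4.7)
Both Markov chains are direct consequences of the FSC factorization
\begin{align*}
P(s_1^n, y^n \mid x^n, s_0) = \prod_{i=1}^n P_{S^+,Y|X,S}(s_i, y_i \mid x_i, s_{i-1}),
\end{align*}
combined with the fact that the $Q$-graph update $q_t = \phi(q_{t-1}, y_t)$ is deterministic. The plan is to insert these two ingredients into the joint distribution and then perform the appropriate marginalizations.

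For the first Markov chain, the plan is to expand $P(s_t, q_t \mid x^t, q_m, s_m, s_0, q_0)$ by summing over the intermediate variables $s_{m+1}^{t-1}$ and $y_{m+1}^t$, while expressing $q_t$ as the deterministic function $q_t = \phi^{\,t-m}(q_m, y_{m+1}^t)$ obtained by iterating $\phi$. Applying the FSC factorization on the interval $(m,t]$ gives
\begin{align*}
P(s_{m+1}^t, y_{m+1}^t \mid x^t, s_m, s_0, q_0) \;=\; \prod_{i=m+1}^{t} P_{S^+,Y|X,S}(s_i, y_i \mid x_i, s_{i-1}),
\end{align*}
which depends only on $(x_{m+1}^t, s_m)$. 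The remaining indicator that pins $q_t$ to $\phi^{\,t-m}(q_m, y_{m+1}^t)$ depends only on $(q_m, y_{m+1}^t)$. Hence, after carrying out the sums over $s_{m+1}^{t-1}$ and $y_{m+1}^t$, the result depends only on $(x_{m+1}^t, q_m, s_m)$, which yields the first claimed Markov chain.

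For the second Markov chain, the plan is to write
\begin{align*}
P(s_m, q_m \mid x^{t-1}, s_0, q_0) \;=\; \sum_{y^m} P(s_m, y^m \mid x^{t-1}, s_0)\, \mathbbm{1}\{q_m = \Phi_{q_0}(y^m)\},
\end{align*}
since $q_m$ is a deterministic function of $(y^m, q_0)$ and therefore conditionally independent of $x^{t-1}$ given $y^m, q_0$. To compute $P(s_m, y^m \mid x^{t-1}, s_0)$, I marginalize the full joint $P(s_1^{t-1}, y^{t-1} \mid x^{t-1}, s_0)$, given by the FSC factorization, over $s_1^{m-1}$ and over $(s_{m+1}^{t-1}, y_{m+1}^{t-1})$. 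The key observation is that the latter double sum telescopes: iteratively summing the innermost factor $P_{S^+,Y|X,S}(s_i, y_i \mid x_i, s_{i-1})$ over $(s_i, y_i)$ yields $1$, so the entire tail contribution collapses to $1$, independently of $x_{m+1}^{t-1}$. This leaves
\begin{align*}
P(s_m, y^m \mid x^{t-1}, s_0) \;=\; \sum_{s_1^{m-1}} \prod_{i=1}^{m} P_{S^+,Y|X,S}(s_i, y_i \mid x_i, s_{i-1}) \;=\; P(s_m, y^m \mid x^m, s_0),
\end{align*}
and substituting back and summing over $y^m$ gives $P(s_m, q_m \mid x^{t-1}, s_0, q_0) = P(s_m, q_m \mid x^m, s_0, q_0)$.

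The proof is essentially mechanical, so the only real obstacle is notational bookkeeping: one must be careful that the $q$-variables are treated as deterministic functions of the output history together with $q_0$ rather than as independent random variables, and that the causal (no-feedback) FSC factorization is available on the full horizon $[1, t-1]$, so that the tail telescoping in Step~2 is valid. Once these points are handled, both identities follow immediately.
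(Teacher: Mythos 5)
Your proposal is correct and follows essentially the same route as the paper's proof: marginalize over the intermediate states and outputs, use the determinism of the $Q$-graph update to pull out the indicator pinning $q_t$ (resp.\ $q_m$), and invoke the FSC factorization so that the remaining product depends only on $(x_{m+1}^t,s_m)$ (resp.\ $(x^m,s_0)$). Your explicit telescoping of the tail sum over $(s_{m+1}^{t-1},y_{m+1}^{t-1})$ in the second chain is just a more careful spelling-out of the step the paper performs implicitly via the chain rule and the no-feedback FSC definition.
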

\begin{proof}[Proof of Lemma \ref{lemma: MC1}]
For the first Markov chain, consider
\begin{align}\label{eq: relation2}
    &P(s_t,q_t|x^t,q_m,s_m,s_0,q_0) \nn\\&= \sum_{y_{m+1}^t}\sum_{s_{m+1}^{t-1}}P(y_{m+1}^t,s_{m+1}^t,q_t|x^t,q_m,s_m,s_0,q_0)\nn\\
        &\stackrel{(a)}= \sum_{y_{m+1}^t}\sum_{s_{m+1}^{t-1}}P(y_{m+1}^t,s_{m+1}^t|x^t,q_m,s_m,s_0,q_0)\nn\\&\qquad\times P(q_t|q_m,y_{m+1}^t),
\end{align}
where $(a)$ follows since $q_t$ is determined by a deterministic function of $q_m$ and the output sequence $y_{m+1}^t$. Further,
\begin{align} \label{eq: Markov2_channel}
    &P(y_{m+1}^t,s_{m+1}^t|x^t,q_m,s_m,s_0,q_0) \nn\\&\stackrel{(a)}= \prod_{i=m+1}^t P(y_i,s_i|y_{m+1}^{i-1},s_{m+1}^{i-1},x^t,q_m,s_m,s_0,q_0) \nn\\
    &\stackrel{(b)}= \prod_{i=m+1}^t P(y_i,s_i|x_i,s_{i-1}),
\end{align}
where $(a)$ follows by the chain rule, and $(b)$ follows by the definition of an FSC.
From \eqref{eq: Markov2_channel}, we observe that $P(y_{m+1}^t,s_{m+1}^t|x^t,q_m,s_m,s_0,q_0)$ does not depend on $x^m$, $s_0$, $q_0$, and therefore, from \eqref{eq: relation2}, so does $P(s_t,q_t|x^t,q_m,s_m,s_0,q_0)$.

For the second Markov chain, consider
\begin{align}\label{eq: relation1}
    &P(s_m,q_m|x^{t-1},s_0,q_0) \nn\\&= \sum_{y^m}\sum_{s^{m-1}}P(y^m,s^m,q_m|x^{t-1},s_0,q_0) \nn\\
        &= \sum_{y^m}\sum_{s^{m-1}} P(y^m,s^m|x^{t-1},s_0,q_0)\mathbbm{1}\{\Phi_{q_0}(y^m)=q_m\}.
\end{align}
Further,
\begin{align}\label{eq: Markov_channel}
    P(y^m,s^m|x^{t-1},s_0,q_0) &\stackrel{(a)}= \prod_{i=1}^m P(y_i,s_i|y^{i-1},s^{i-1},x^{t-1},s_0,q_0)\nn\\
    &\stackrel{(b)}= \prod_{i=1}^m P(y_i,s_i|x_i,s_{i-1}),
\end{align}
where $(a)$ follows by the chain rule, and $(b)$ follows by the definition of an FSC.
From \eqref{eq: Markov_channel}, we observe that $P(y^m,s^m|x^{t-1},s_0,q_0)$ does not depend on $x_{m+1}^{t-1}$, and therefore, from \eqref{eq: relation1}, so does $P(y^m,s^m|x^{t-1},s_0,q_0)$.
\end{proof}

\section{Upper bound for the input-driven FSC (Theorem \ref{main_bound2})} \label{proof_main2}
\begin{proof}
The proof is based on the same main steps we used in the proof of Theorem \ref{main_bound}, but here we consider input-driven FSCs. Let us find an expression equivalent to the conditioned version of the relative entropy term in \eqref{co.DB}. For any initial pair $(s_0,q_0)$ we have, 
\begin{align}\label{eq: ub2_proof}
	&D\big{(}P_{Y^n|X^n=x^n,s_0}\big{\|}R_{Y^n|q_0}\big{)}\nn
	\\&=\sum_{y^n}P_{Y^n|X^n,S_0}(y^n|x^n,s_0)\log_2\left(\frac{P_{Y^n|X^n,S_0}(y^n|x^n,s_0)}{R_{Y^n|Q_0}(y^n|q_0)}\right)\nn
		\\&\stackrel{(a)}= \sum_{j=1}^n \sum_{y^{j-1}} P(y^{j-1}|x^{j-1},s_0)\nn\\&\times \infdiv[\big]{P_{Y_j|Y^{j-1},X^j,S_{0}}(\cdot|y^{j-1},x^j,s_{0})}{R(Y_j|y^{j-1},q_0)}\nn	
		\\&\stackrel{(b)}= \sum_{j=1}^n \sum_{q_{j-1}}\sum_{y^{j-1}} P(q_{j-1},y^{j-1}|x^{j-1},s_0,q_0)\nn 
		\\& \times \infdiv[\big]{P_{Y_j|Y^{j-1},X^j,S_{0}}(\cdot|y^{j-1},x^j,s_{0})}{R(Y_j|y^{j-1},q_0)}\nn
		\\&\stackrel{(c)}= \sum_{j=1}^n \sum_{q_{j-1}}\sum_{y^{j-1}} P(q_{j-1},y^{j-1}|x^{j-1},s_0,q_0)\nn 
		\\& \times D\Bigg(\hspace{-0.07cm}\sum_{s_{j-1}}P(s_{j-1}|x^{j-1},s_0) P_{Y|X,S}(\cdot|x_j,s_{j-1})\Bigg{\|}R(Y|q_{j-1})\hspace{-0.07cm}\Bigg)\nn
		\\&\stackrel{(d)}= \sum_{j=1}^n \sum_{q\in\mathcal{Q}}\beta_{j-1}(q)\nn\\&\times \infdiv[\Bigg]{\sum_{s_{j-1}}\gamma_{j-1}(s_{j-1})\cdot P_{Y|X,S}(\cdot|x_j,s_{j-1})}{R_{Y|Q}(\cdot|q)},
\end{align}
where step $(a)$ follows by computing the marginal distributions, exchanging the order of the summations and identifying the relative entropy, step $(b)$ follows from the fact that $q_{j-1}$ is a deterministic function of $(y^{j-1},q_0)$,
step $(c)$ follows by the input-driven FSC law, i.e.,
\begin{align*}
    &P(y_j|y^{j-1},x^j,s_{0}) \nn\\&=\sum_{s_{j-1}}P(s_{j-1}|x^{j-1},s_0) P_{Y|X,S}(y_j|x_j,s_{j-1}),
\end{align*}
and step $(d)$ follows since the divergence does not depend on $y^{j-1}$.

Therefore, for any $(s_0,q_0)$, we conclude that
\begin{align}
	\mathsf{C} 
	&\stackrel{(a)}\leq \lim_{n\to\infty}\max_{x^n}\frac{1}{n}\sum_{j=1}^n \sum_{q\in\mathcal{Q}}z_{j-1}(q)\nn\\&\times \infdiv[\Bigg]{\sum_{s_{j-1}}\gamma_{j-1}(s_{j-1})\cdot P_{Y|X,S}(\cdot|x_j,s_{j-1})}{R_{Y|Q}(\cdot|q)}\nn,
\end{align}
where $(a)$ follows from the dual upper bound for FSCs and \eqref{eq: ub2_proof}, and $z_{j-1}$, $\gamma_{j-1}$ are defined as
\begin{align*}
	z_{j-1}(q) &\triangleq P_{Q_{j-1}|X^{j-1},S_0,Q_0}(q|x^{j-1},s_0,q_0), \nn\\
	\gamma_{j-1}(s_{j-1}) &\triangleq P_{S_{j-1}|X^{j-1},S_0}(s_{j-1}|x^{j-1},s_0). \nn
\end{align*}
The proofs of the limit's existence and of the initial state independence are omitted as they follow from the same steps taken for the unifilar FSC in Appendix \ref{app: UB_uniflar}.
\end{proof}

\section{DP Formulation of the upper bounds} \label{app: formulation}
\subsection{DP formulation for unifilar FSCs (Theorem \ref{theorem: formulation})}\label{app: formulation_unifilar}
In this section, we prove Theorem \ref{theorem: formulation} on the formulation of the upper bound in Theorem \ref{main_bound} as a dynamic program. The proof has three technical parts: the first two parts are there to verify that the DP is well-defined, and the last part is there is order to relate the average reward of the DP and the upper bound in Theorem \ref{main_bound}. These are summarized in the following lemma.

\begin{lemma}\label{lemma: DP1}
\begin{enumerate}
    \item The reward is a time-invariant function of the DP state and action.
    \item The DP state is a deterministic function of the previous DP state and action.
    \item The limit and the maximization in the upper bound in Theorem \ref{main_bound} can be exchanged. Specifically,
    \begin{align} 
        &\lim_{n\to\infty}\max_{x^n\in\mathcal{X}^n}\min_{s_0,q_0} c(x^n,s_0,q_0)\nn
        \\&=\sup_{\{x_i\}_{i=1}^{\infty}}\liminf_{n\to\infty}\min_{s_0,q_0}c(x^n,s_0,q_0),\nn
    \end{align}
    where $c(x^n,s_0,q_0)$ is defined in \eqref{c_quant}.
\end{enumerate}
Since we showed that the upper bound is independent of the initial state, we can conclude from the third item that $C\le \rho^*$.


\end{lemma}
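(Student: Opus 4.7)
\textbf{Proof plan for Lemma \ref{lemma: DP1}.} My strategy is to dispose of parts (1) and (2) by direct verification, and to reduce part (3) to the super-additivity of $n\underline{\mathsf{C}}_n$ already established in Appendix~\ref{app:subsec_limexists} via a concatenation/periodic-policy argument. Parts (1) and (2) amount to checking that the quintuple $(\cP(\cS \times \cQ), \cX, F, \delta_{z_0}, g)$ is a legitimate deterministic average-reward DP; part (3) is the substantive step, and it is where I expect the main obstacle.

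For part (1), the reward $g(z_{t-1}, u_t)$ defined in \eqref{reward_1} is written solely in terms of the current state $z_{t-1}$ (an element of the compact simplex $\cP(\cS\times\cQ)$) and the action $u_t = x_t$, with no explicit time dependence; plugging in $z_{t-1} = P_{Q_{t-1},S_{t-1}|X^{t-1}=x^{t-1},s_0,q_0}$ recovers the $t$-th summand in the bound of Theorem~\ref{main_bound}, confirming that the reward is a time-invariant function of the state--action pair. For part (2), I would derive the state update by marginalizing over $y_t$ and invoking the unifilar property $s_t = f(x_t, y_t, s_{t-1})$ together with the $Q$-graph map $q_t = \phi(q_{t-1}, y_t)$:
\begin{align*}
z_t(q',s') = \sum_{q,s,y} z_{t-1}(q,s)\, P_{Y|X,S}(y|u_t,s)\, \mathbbm{1}\{s' = f(u_t,y,s)\}\, \mathbbm{1}\{q' = \phi(q,y)\}.
\end{align*}
The right-hand side is a deterministic, time-invariant function of $(z_{t-1}, u_t)$, which I denote by $F(z_{t-1}, u_t)$.

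Part (3) is the crux. I would phrase both sides in terms of $\underline{\mathsf{C}}_n$ from \eqref{def:Cn_under}. The easy direction ``RHS $\le$ LHS'' is immediate: for any input sequence $\{x_i\}_{i=1}^\infty$ and every $n$, $\min_{s_0,q_0} c(x^n, s_0, q_0) \le \underline{\mathsf{C}}_n$, so taking the liminf over $n$ and the supremum over policies is bounded above by $\lim_n \underline{\mathsf{C}}_n$, which exists by Appendix~\ref{app:subsec_limexists}. For the reverse direction, I would exploit the super-additivity of $n\underline{\mathsf{C}}_n$ proved in that appendix: Fekete's lemma yields $\lim_n \underline{\mathsf{C}}_n = \sup_n \underline{\mathsf{C}}_n$, so for any $\epsilon > 0$ I can pick an $N$ and a maximizing $\hat{x}^N$ with $\underline{\mathsf{C}}_N \ge \lim_n \underline{\mathsf{C}}_n - \epsilon$, and consider the deterministic policy $\pi$ that concatenates infinitely many copies of $\hat{x}^N$. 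Iterating the super-additive block-splitting argument of Appendix~\ref{app:subsec_limexists} (applied to each of the $k$ full blocks), together with the fact that each summand of $n\cdot c(\cdot)$ is a nonnegative relative entropy and hence the remaining $r$ terms can only increase the cumulative reward, gives, for $n = kN + r$ with $0 \le r < N$,
\begin{align*}
\min_{s_0,q_0} n \cdot c(x^n, s_0, q_0) \ge k N \underline{\mathsf{C}}_N,
\end{align*}
so that $\liminf_n \min_{s_0,q_0} c(x^n, s_0, q_0) \ge \underline{\mathsf{C}}_N \ge \lim_n \underline{\mathsf{C}}_n - \epsilon$. Sending $\epsilon \to 0$ completes the equality. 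The concluding bound $\mathsf{C} \le \rho^*$ then follows by combining Theorem~\ref{main_bound} (whose right-hand side equals $\lim_n \underline{\mathsf{C}}_n$ by the initial-state independence proved in Appendix~\ref{app:subsec_initialstate}) with part (3), after observing that the outer $\min_{s_0,q_0}$ on the policy-based side can be dropped by that same independence, yielding precisely the definition of $\rho^*$ in \eqref{eq: optimal_reward_1}.
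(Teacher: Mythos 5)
Your proposal is correct and follows essentially the same route as the paper: direct verification of the reward and of the deterministic state update $F(z_{t-1},u_t)$ for parts (1)–(2), and for part (3) the easy direction via $\min_{s_0,q_0}c(x^n,s_0,q_0)\le \underline{\mathsf{C}}_n$ together with Fekete's lemma, and the reverse direction via the periodic concatenation of a near-optimal block $\hat{x}^N$ combined with the block-splitting/super-additivity argument of Appendix~\ref{app:subsec_limexists}. The only (immaterial) deviation is that you discard the residual $r<N$ terms by nonnegativity of the relative entropy, whereas the paper lets them vanish using boundedness of the divergence; both are valid.
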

\begin{proof}[Proof of Lemma \ref{lemma: DP1}]
\begin{enumerate}
    \item Recall that the reward function is defined as
    \begin{align} 
	&g(z_{t-1},u_t) \nn\\&\triangleq \sum_{q,s}z_{t-1}(q,s) \infdiv[\big]{P_{Y|X,S}(\cdot|u_t,s)}{R_{Y|Q}(\cdot|q)}.\nn
    \end{align}
    Therefore, for a fixed FSC and test distribution, it can be easily noted that the reward is a function of the previous DP state $z_{t-1}$ and the action $u_t\triangleq x_t$. 
    \item Let us first derive a recursive relation between $z_t$ at the coordinates $(q_t,s_t)$ and the previous DP state $z_{t-1}$:
    \begin{align}\label{Next_state}
	    &z_{t}(q_t,s_t) \nn\\&= P(q_t,s_t|x^t,s_0,q_0)\nn
	    \\&\stackrel{(a)}= \sum_{q_{t-1},s_{t-1}}P(q_{t-1},s_{t-1}|x^{t-1},s_0,q_0)\nn\\&\qquad\times P(q_t,s_t|q_{t-1},s_{t-1},x^t,s_0,q_0)\nn\\
	    &\stackrel{(b)}= \sum_{q_{t-1},s_{t-1}}z_{t-1}(q_{t-1},s_{t-1})\sum_{y_t}P(y_t|x_t,s_{t-1})\nn\\
	    &\qquad\times\mathbbm{1}\{s_t=f(x_t,y_t,s_{t-1})\}\mathbbm{1}\{q_t=\phi(q_{t-1},y_t)\},
    \end{align}
    where $(a)$ follows from the Markov chain $(Q_{t-1},S_{t-1})-(X^{t-1},S_0,Q_0)-X_t$ that is proven in Appendix \ref{app:MarkovChain1}, and $(b)$ follows from the channel and the $Q$-graph definitions.
    From \eqref{Next_state}, it is clear that $z_t$ is a function of $z_{t-1}$ and the action $x_t$.
    \item The main idea is to show the equality by showing the corresponding two inequalities. The first inequality can be shown as follows:
    \begin{align}
        &\lim_{n\to\infty} \max_{x^n\in\mathcal{X}^n}\min_{s_0,q_0}\frac{1}{n}\sum_{t=1}^n\sum_{q,s}z_{t-1}(q,s) \nn\\&\qquad\qquad\times\infdiv[\big]{P_{Y|X,S}(\cdot|x_t,s)}{R_{Y|Q}(\cdot|q)} \nn \\
        &\ \stackrel{(a)}= \sup_n\max_{x^n\in\mathcal{X}^n}\min_{s_0,q_0}\frac{1}{n}\sum_{t=1}^n\sum_{q,s}z_{t-1}(q,s) \nn\\&\qquad\qquad\times\infdiv[\big]{P_{Y|X,S}(\cdot|x_t,s)}{R_{Y|Q}(\cdot|q)} \nn \\
        &\ =\sup_{\{x_i\}_{i=1}^{\infty}}\sup_n\min_{s_0,q_0}\frac{1}{n}\sum_{t=1}^n\sum_{q,s}z_{t-1}(q,s) \nn\\&\qquad\qquad\times\infdiv[\big]{P_{Y|X,S}(\cdot|x_t,s)}{R_{Y|Q}(\cdot|q)} \nn \\
        &\ \ge\sup_{\{x_i\}_{i=1}^{\infty}}\liminf_{n\to\infty}\min_{s_0,q_0}\frac{1}{n}\sum_{t=1}^n\sum_{q,s}z_{t-1}(q,s) \nn\\&\qquad\qquad\times\infdiv[\big]{P_{Y|X,S}(\cdot|x_t,s)}{R_{Y|Q}(\cdot|q)},
    \end{align}
    where $(a)$ follows by Fekete's lemma (see Appendix~\ref{app:subsec_limexists} where it is shown that the sequence $n\underline{\mathsf{C}}_n$ is supper additive).
    
    We now show the reverse inequality. Using the notation and the main result from Appendix~\ref{app:subsec_limexists}, the existence of $\lim\limits_{n\to\infty} \underline{\mathsf{C}}_n$ implies that, for any $\epsilon>0$, there exists an $N(\epsilon)$ such that for all $k>N(\epsilon)$
    \begin{align} \label{eq: n_eps}
        \underline{\mathsf{C}}_k \geq \lim_{n\to\infty} \underline{\mathsf{C}}_n -\epsilon.
    \end{align}
    
    Fix $k>N(\epsilon)$, and let $\hat{x}^k$ be the input sequence that achieves the maximum. Define $\tilde{x}^{\infty}=\{\tilde{x}_t\}_{t= 1}^{\infty}$ as an infinite sequence composed of identical concatenations of the sequence $\hat{x}^k$. Consider the following chain of inequalities
    \begin{align}
        &\sup_{\{x_i\}_{i=1}^{\infty}}\liminf_{n\to\infty}\min_{s_0,q_0}\frac{1}{n}\sum_{t=1}^n\sum_{q,s}z_{t-1}(q,s) \nn\\&\qquad\qquad\times\infdiv[\big]{P_{Y|X,S}(\cdot|x_t,s)}{R_{Y|Q}(\cdot|q)}\nn\\
        &\stackrel{(a)}\ge \liminf_{n\to\infty}\min_{s_0,q_0}\frac{1}{n}\sum_{t=1}^n\sum_{q,s}z_{t-1}(q,s) \nn\\&\qquad\qquad\times\infdiv[\big]{P_{Y|X,S}(\cdot|\tilde{x}_t,s)}{R_{Y|Q}(\cdot|q)} \nn\\
        &\stackrel{(b)}= \liminf_{n\to\infty}\min_{s_0,q_0}\frac{1}{n}\sum_{t=1}^{k\lfloor\frac{n}{k}\rfloor}\sum_{q,s}z_{t-1}(q,s) \nn\\&\qquad\qquad\times\infdiv[\big]{P_{Y|X,S}(\cdot|\tilde{x}_t,s)}{R_{Y|Q}(\cdot|q)} \nn\\
        &\stackrel{(c)}\ge \liminf_{n\to\infty}\frac{k}{n}\left\lfloor \frac{n}{k} \right\rfloor\bigg[\frac{1}{k}\cdot\min_{s_0,q_0}\sum_{t=1}^{k}\sum_{q,s}z_{t-1}(q,s) \nn\\&\qquad\qquad\times\infdiv[\big]{P_{Y|X,S}(\cdot|\tilde{x}_t,s)}{R_{Y|Q}(\cdot|q)}\bigg]\nn\\ 
        &\stackrel{(d)}\ge 
        \lim_{n\to\infty}\underline{\mathsf{C}}_n-\epsilon,
    \end{align}
    where $(a)$ follows by considering the sequence $\tilde{x}^{\infty}$, which is not necessarily the input sequence that achieves the maximum, $(b)$ follows from the fact that $k$ is fixed and the divergence is bounded, and therefore, when rounding $n$ to $k\lfloor n/k\rfloor$ the residual goes to zero, $(c)$ follows from taking the minimum at the beginning of each $k$th block, i.e., $\min_t\sum_if_i(t)\geq \sum_i\min_t f_i(t)$ and, the fact that $\tilde{x}^\infty$ is a repetition of the same sequence $\hat{x}^k$, and, finally, $(d)$ follows from \eqref{eq: n_eps}.
\end{enumerate}  
\end{proof}

\subsection{DP formulation for input-driven FSCs (Theorem \ref{theorem: formulation_input_dependent})}\label{App: formulation_input_dependent}
In this section, we prove Theorem \ref{theorem: formulation_input_dependent} on the formulation of the upper bound in Theorem \ref{main_bound2} as a dynamic program. Similarly to Lemma \ref{lemma: DP1}, the proof consists of three technical parts that are summarized in the following lemma.
\begin{lemma}\label{lemma: DP2}
\begin{enumerate}
    \item The reward is a time-invariant function of the DP state and action.
    \item The DP state is a deterministic function of the previous DP state and action.
    \item The limit and the maximization in the upper bound can be exchanged. Specifically,
    \begin{align} 
        &\lim_{n\to\infty} \max_{x^n\in\mathcal{X}^n}\frac{1}{n}\sum_{t=1}^n\sum_{q\in \cQ}\beta_{t-1}(q) \nn\\&\times D\Bigg(\sum_{s_{t-1}}\gamma_{t-1}(s_{t-1}) P_{Y|X,S}(\cdot|x_t,s_{t-1})\Bigg{\|} R_{Y|Q}(\cdot|q)\Bigg)\nn
        \\&=\sup_{\{x_i\}_{i=1}^{\infty}}\liminf_{n\to\infty}\frac{1}{n}\sum_{t=0}^{n}\sum_{q\in\mathcal{Q}}\beta_{t-1}(q) \nn\\&\times D\Bigg(\sum_{s_{t-1}}\gamma_{t-1}(s_{t-1}) P_{Y|X,S}(\cdot|x_t,s_{t-1})\Bigg{\|} R_{Y|Q}(\cdot|q)\Bigg).\nn
    \end{align}
\end{enumerate}
Here, also, the upper bound is independent of the initial state. Therefore, we can conclude from the third item that $C\le \rho^*$.
\end{lemma}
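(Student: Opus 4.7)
The plan is to mirror the three-part structure of Lemma~\ref{lemma: DP1}, adapting each step to the input-driven setting. The essential new feature is that the DP state naturally factorises into $\beta_{t-1}$, a belief over $Q$-graph nodes, and $\gamma_{t-1}$, a belief over channel states; because of \eqref{eq:input_def} these evolve under separate dynamics that nonetheless couple inside the reward through the mixture $\sum_{s_{t-1}} \gamma_{t-1}(s_{t-1}) P_{Y|X,S}(\cdot|x_t,s_{t-1})$. I would verify the three items in turn.

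Item~1 is the shortest: once $z_{t-1} = (\beta_{t-1}, \gamma_{t-1})$ is identified as the DP state, the expression \eqref{reward_2} depends only on the time-invariant kernel $P_{Y|X,S}$, the fixed graph-based test distribution $R_{Y|Q}$, and the action $u_t = x_t$, so the reward is manifestly a time-invariant function of $(z_{t-1}, u_t)$. For item~2, I would derive separate update rules. The input-driven decomposition immediately gives
\begin{align*}
\gamma_t(s_t) = \sum_{s_{t-1}} \gamma_{t-1}(s_{t-1})\, P_{S^+|X,S}(s_t|x_t, s_{t-1}),
\end{align*}
a function of $\gamma_{t-1}$ and $x_t$ alone. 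For $\beta_t$, I use that $Q_t = \phi(Q_{t-1}, Y_t)$ together with the fact that, under an input-driven FSC, $P(Y_t | X^t, Y^{t-1}, S_0) = \sum_{s_{t-1}} \gamma_{t-1}(s_{t-1}) P_{Y|X,S}(y_t|x_t, s_{t-1})$ does not depend on $Y^{t-1}$. Summing over pre-images of $q_t$ under $\phi$ yields an update in which $\beta_t$ is a deterministic function of $(\beta_{t-1}, \gamma_{t-1}, x_t) = (z_{t-1}, u_t)$. Hence $z_t = F(z_{t-1}, u_t)$, establishing \eqref{DP4_state1}--\eqref{DP4_state2}.

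For item~3, I would follow the two-inequality template used in the proof of Lemma~\ref{lemma: DP1}. Let $\underline{\mathsf C}_n$ denote the input-driven analogue of \eqref{def:Cn_under}. The direction ``LHS $\ge$ RHS'' follows from Fekete's lemma once $n\underline{\mathsf C}_n$ is shown to be super-additive: splitting the sum at time $m$ and using $\min[A+B] \ge \min A + \min B$, together with the observation that the pair $(\beta_m, \gamma_m)$ induced from $(s_0, q_0)$ is itself a valid initial belief for the tail of the program, I obtain $n\underline{\mathsf C}_n \ge m\underline{\mathsf C}_m + k\underline{\mathsf C}_k$. The reverse inequality uses the periodic-concatenation trick: fix $k$ large enough that $\underline{\mathsf C}_k$ is within $\epsilon$ of the limit, let $\hat x^k$ be its maximiser, form the $k$-periodic sequence $\tilde x^\infty$, round $n$ down to $k\lfloor n/k\rfloor$, and use the boundedness of the reward to absorb the residual terms into the $\liminf$.

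I expect the main obstacle to lie in the super-additivity step inside item~3. In the unifilar case this rested on the discrete Markov chains of Lemma~\ref{lemma: MC1} for the pair $(S,Q)$. In the input-driven setting the sufficient statistic at time $m$ is the continuous pair of beliefs $(\beta_m, \gamma_m)$, so the required fact is that, conditionally on $(\beta_m, \gamma_m)$ and the future inputs $x_{m+1}^n$, the continuation reward equals the reward of the DP started from $(\beta_m, \gamma_m)$ under the same action sequence. This identification is precisely the Markov-state property proved in item~2, and it is what allows the split $n\underline{\mathsf C}_n \ge m\underline{\mathsf C}_m + k\underline{\mathsf C}_k$ to go through. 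Initial-state independence, which is needed to equate the ``min over $(s_0,q_0)$'' and the ``max over $(s_0,q_0)$'' versions of $\mathsf C_n$, is inherited from the proof of Theorem~\ref{main_bound2} via the same sandwich argument used in Appendix~\ref{app:subsec_initialstate}, after which the remaining bookkeeping parallels Appendix~\ref{app: formulation_unifilar} line by line.
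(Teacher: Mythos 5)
Your proposal matches the paper's proof: items 1 and 2 are handled exactly as in Appendix~\ref{App: formulation_input_dependent} (the reward is time-invariant by inspection of \eqref{reward_2}, and the updates you derive for $\gamma_t$ and $\beta_t$ are precisely \eqref{DP4_state2} and \eqref{DP4_state1}, resting on the same conditional-independence structure of input-driven FSCs that underlies the paper's steps). For item 3 the paper simply states that the proof is omitted because it follows the same steps as the unifilar case in Appendix~\ref{app: formulation_unifilar}, and your sketch of the Fekete super-additivity argument together with the periodic-concatenation trick is a faithful --- indeed more explicit --- rendition of those steps.
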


\begin{proof}[Proof of Lemma \ref{lemma: DP2}]
\begin{enumerate}
    \item The reward function in Eq. \eqref{reward_2} is defined as 
    \begin{align*}
	    &g(z_{t-1},u_t) \triangleq \sum_{q}\beta_{t-1}(q)\nn\\&\times \infdiv[\Bigg]{\sum_{s}\gamma_{t-1}(s)\cdot P_{Y|X,S}(\cdot|u_t,s)}{R_{Y|Q}(\cdot|q)}.
    \end{align*}
    Accordingly, since $z_{t-1} = (\beta_{t-1},\gamma_{t-1})$, this item is deduced directly from the definition above.
    \item Let us first derive a recursive relation between $z_t = (\beta_t,\gamma_t)$ and the previous DP state $z_{t-1}$. In particular, $\beta_t$ is computed as
    \begin{align} \label{DP4_state1}
	    &\beta_{t}(q_t) \nn\\&\triangleq P(q_t|x^t,s_0,q_0) \nn\\
	    &\stackrel{(a)}= \sum_{q_{t-1},s_{t-1}} P(q_{t-1},s_{t-1}|x^{t-1},s_0,q_0)\nn\\&\quad \times P(q_t|q_{t-1},s_{t-1},x^t,s_0,q_0) \nn
	    \\
	    &\stackrel{(b)}= \sum_{q_{t-1}} P(q_{t-1}|x^{t-1},s_0,q_0)\sum_{s_{t-1}}P(s_{t-1}|x^{t-1},s_0)\nn\\&\quad \times P(q_t|q_{t-1},s_{t-1},x^t,s_0,q_0) \nn\\
	    &\stackrel{(c)}= \sum_{q_{t-1}}\beta_{t-1}(q_{t-1})\sum_{s_{t-1}}\gamma_{t-1}(s_{t-1})\nn\\&\quad \times \sum_{y_t} P_{Y|X,S}(y_t|x_t,s_{t-1})\mathbbm{1}\{q_t=\phi(q_{t-1},y_t)\}, 
    \end{align}
    where $(a)$ follows from the Markov chain $(Q_{t-1},S_{t-1})-(X^{t-1},S_0,Q_0)-X_t$ that is proven in Appendix \ref{app:MarkovChain1}, $(b)$ follows from the Markov chain $S_{t-1}-(X^{t-1},S_0)-(Q_{t-1},Q_0)$, and $(c)$ follows from the channel characteristics and the $Q$-graph definition. Furthermore, $\gamma_{t}$ is computed as
    \begin{align} \label{DP4_state2}
        \gamma_{t}(s_t) &= P(s_t|x^t,s_0) \nn\\
        &\stackrel{(a)}=\sum_{s_{t-1}}\gamma_{t-1}(s_{t-1})\;P(s_t|x_t,s_{t-1}),
    \end{align}
    where $(a)$ follows from the Markov chain $S_{t-1}-(X^{t-1},S_0)-X_t$ and the input-driven FSC definition in \eqref{eq:input_def}. From \eqref{DP4_state1} and \eqref{DP4_state2}, it is clear that $\beta_t$ and $\gamma_t$ are a function of the previous DP state $z_{t-1}$ and the action $x_t$.
    \item The proof of this item is omitted as it follows from the same steps taken for unifilar FSCs in Appendix \ref{app: formulation_unifilar}.
\end{enumerate}
\end{proof}

\section{Trapdoor Channel --- Proof of Theorem \ref{theorem: Trapdoor}} \label{app. trapdoor}
\begin{proof}
The proof is based on the Markov $Q$-graph from Fig. \ref{fig:1Markov} and on the following optimized graph-based test distribution:
\begin{align*}
	R_{Y|Q}(0|0) = R_{Y|Q}(1|1) = \frac{2}{3}.
\end{align*}

Since the trapdoor channel is a unifilar FSC, we define $\underline{z}$ as a pmf on $\mathcal{Q} \times \mathcal{S}$ that corresponds to the DP state in Section \ref{DP_main1}. In particular, $\underline{z}$ consist of four elements that are indexed as $z_{q,s}$ where $z_{q,s}=P(q,s)$. To simplify notation, we will consider in the calculation below the relation $z_{1,1} = 1-z_{0,0}-z_{0,1}-z_{1,0}$.

Recall that to solve the Bellman equation (Theorem \ref{Theorem:Bellman}), one should identify a scalar $\rho$ and a function $h:\mathcal{Z}\rightarrow\mathbb{R}$ such that
\begin{align} \label{Bellman_Trapdoor}
\rho+h(\underline{z}) = \max_u \left(g(\underline{z},u) + h\left(F(\underline{z},u)\right)\right).
\end{align}
In the following, we show that $\rho^* = \log_2\left(\frac{3}{2}\right)$ and the function 
\begin{align*}
	h^*(\underline{z})  
	&=\left\{\begin{array}{cc}
        z_{1,0}, & z_{0,1} \le z_{1,0}, \\
        z_{0,1}, & z_{0,1} > z_{1,0}. \end{array}\right.
\end{align*}
solves \eqref{Bellman_Trapdoor}.

The reward function can be computed as 
\begin{align*}
g(\underline{z},u) &= \begin{cases}         \log_2\left(\frac{3}{2}\right)+\frac{1}{2}\left(z_{0,0} + 3z_{1,0}-1\right), & u=0, \\
\log_2\left(\frac{3}{2}\right)+\frac{1}{2}\left(z_{1,1} + 3z_{0,1} -1\right), & u=1.
\end{cases}
\end{align*}
The next DP state, defined in Eq. \eqref{Next_state}, is given by
\begin{align*} \label{policy_Trapdoor}
    &F(\underline{z},u) \nn\\&= \begin{cases} \left[z_{0,0}+z_{1,0},\frac{1}{2}(z_{0,1}+z_{1,1}),\frac{1}{2}(z_{0,1}+z_{1,1})\right], & u=0, \\
    \left[0,\frac{1}{2}(z_{0,0}+z_{1,0}),\frac{1}{2}(z_{0,0}+z_{1,0})\right], & u=1.
    \end{cases}
\end{align*}
Let us assume that the optimal policy $u^*(\underline{z})$ is given by
\begin{align}
    u^*(\underline{z})=
    \begin{cases}
          0, & z_{0,1}\le z_{1,0}, \\
          1, & z_{0,1} > z_{1,0}.
    \end{cases}
\end{align}

Assuming \eqref{policy_Trapdoor}, then for $z_{0,1}\le z_{1,0}$, the left-hand side of the Bellman equation is equal to
\begin{align}
    \rho^*+h^*(\underline{z}) = \log_2\left(\frac{3}{2}\right)+z_{1,0}, \nn
\end{align}
while the right-hand side of the Bellman equation is
\begin{align*}
g(\underline{z},u=0) + h^*\left(F(\underline{z},u=0)\right) 
                  = \log_2\left(\frac{3}{2}\right) + z_{1,0}.
\end{align*}
Hence, assuming \eqref{policy_Trapdoor}, we showed that the Bellman equation is satisfied for $z_{0,1}\le z_{1,0}$. It can also be verified that the Bellman equation is satisfied when $z_{0,1}>z_{1,0}$.

We will now verify that the assumption we made in \eqref{policy_Trapdoor} holds. That is,
\begin{align*}
&\left(g(\underline{z},u=0) + \frac{1}{2}(z_{0,1}+z_{1,1})\right) \nn\\&- \left(g(\underline{z},u=1) + \frac{1}{2}(z_{0,0}+z_{1,0})\right) = z_{1,0}-z_{0,1}, 
\end{align*}
which is nonnegative for all $z_{1,0}\ge z_{0,1}$, and therefore, in this region, $u=0$ is indeed the optimal action.
Similarly, it can also be verified that, for all $z_{1,0} < z_{0,1}$, $u=1$ is the optimal action.

Therefore, we conclude that, $\rho^*=\log_2\left(\frac{3}{2}\right)$ is indeed the optimal average reward.
\end{proof}

\section{Ising Channel --- Proof of Theorem \ref{theorem: Ising}} \label{app. Ising}
\begin{proof}
The proof is based on a Markov $Q$-graph with $k=3$. Recall that for the Ising channel the state is evaluated according to $s_t = x_t$. Therefore, we can use the simplified DP formulation that is presented in Section \ref{S_DP}. The proof of the bound is based on the following graph-based test distribution:
\begin{align*}
	R_{Y|Q}(0|0,0,0) &= 1 - R_{Y|Q}(0|1,1,1) = a \\
	R_{Y|Q}(0|0,1,0) &= 1 - R_{Y|Q}(0|1,0,1) = b \\
	R_{Y|Q}(0|1,0,0) &= 1 - R_{Y|Q}(0|0,1,1) = c \\
	R_{Y|Q}(0|1,1,0) &= 1 - R_{Y|Q}(0|0,0,1) = d,
\end{align*}
where $[a,b,c,d]\in(0,1)^4$. Let $\underline{z} = [z_0,z_1,z_2,z_3]$ denote the DP state vector, where $\{z_i\}_{i=0}^3\in\{0,1\}$.
According to the DP formulation, the next DP state is computed as $F(\underline{z},u) = [z_1,z_2,z_3,u]$, and the reward function is defined as
\begin{align*}
	g(\underline{z},u) =& \sum_{y_1^3} \left(\prod_{i=1}^{3} P_{Y|X,S}(y_i|z_i,z_{i-1})\right)\nn\\ &\times \infdiv[\big]{P_{Y|X,S}(\cdot|u,z_3)}{R_{Y|Q}(\cdot|y_1^3)}.
\end{align*}

According to Theorem \ref{Theorem:Bellman}, if we identify a scalar $\rho$ and a bounded function $h(\underline{z})$ such that
\begin{align} \label{eq:Ising_bellman}
\rho+h(\underline{z}) &= \max_u \left[g(\underline{z},u)+h\left(F(\underline{z},u)\right)\right] ,\;\; \forall \underline{z}\in\mathcal{Z},
\end{align}
then $\rho=\rho^*$. 
In the following, we show that
 \begin{align}
	\rho^* &= \frac{1}{4}\log_2\left(\frac{1}{2acd(1-a)}\right) 
\end{align}
and the function $h^*(\underline{z})$ defined below solves \eqref{eq:Ising_bellman}. 
\begin{align}
	h^*(0,0,0,0) &= h^*(1,1,1,1) = \frac{1}{2}\log_2\left(\frac{1}{4a\bar{a}}\right) \nonumber \\
	h^*(0,0,0,1) &= h^*(1,1,1,0) = \frac{1}{4}\log_2\left(\frac{1}{2acd\bar{a}}\right) \nonumber \\
	h^*(0,0,1,0) &= h^*(1,1,0,1) = \frac{1}{8}\log_2\left(\frac{\bar{a}^3cd}{64ab^5\bar{b}\bar{c}^5\bar{d}^3}\right) \nonumber \\
	h^*(0,0,1,1) &= h^*(1,1,0,0) =\frac{1}{2}\log_2\left(\frac{1}{2ac}\right) \nonumber \\
	h^*(0,1,0,0) &= h^*(1,0,1,1) = \frac{1}{8}\log_2\left(\frac{\bar{a}d}{256ab^3c\bar{b}\bar{c}^3\bar{d}}\right) \nonumber \\
	h^*(0,1,0,1) &= h^*(1,0,1,0) =\frac{1}{4}\log_2\left(\frac{\bar{a}d}{8b^2\bar{b}\bar{c}^2\bar{d}}\right) \nonumber \\
	h^*(0,1,1,0) &= h^*(1,0,0,1) = \frac{1}{4}\log_2\left(\frac{1}{2abc\bar{c}}\right) \nonumber \\
	h^*(0,1,1,1) &= h^*(1,0,0,0) = \frac{1}{4}\log_2\left(\frac{d}{8a^3c\bar{a}}\right).
\end{align}

Let us assume that the optimal policy, under the constraints given in \eqref{eq:Ising_constraints}, is given by
\begin{align} \label{policy_Ising}
    u^*(\underline{z}) = \bar{z_0}\bar{z_2}+z_3\cdot(z_0\oplus z_2),
\end{align}
where $\oplus$ denotes the XOR operation. The policy in \eqref{policy_Ising} is obtained by optimizing the DP program and extracting the relation between the optimal policy and the DP state.
Assuming \eqref{policy_Ising}, it can now be verified that \eqref{eq:Ising_bellman} is satisfied with the above choice of $\rho^*$ and the function $h^*(\underline{z})$. Here, we will verify that it holds only for $\underline{z}=[0,0,0,0]$, and the verification for the other states can be done similarly. The left-hand side of the Bellman equation is
\begin{align}
    \rho^*+h^*(0,0,0,0) = \frac{1}{4}\log_2\left(\frac{1}{32a^3cd(1-a)^3}\right),\nn
\end{align}
while the right-hand side of the Bellman equation is
\begin{align}
&\max_{u} \left[g(0,0,0,0,u)+h(0,0,0,u)\right]\nn\\
    &\stackrel{(a)}= g(0,0,0,0,1)+h(0,0,0,1) \nonumber \\
    &= \frac{1}{2}\log_2\left(\frac{1}{4a(1-a)}\right) + \frac{1}{4}\log_2\left(\frac{1}{2acd(1-a)}\right) \nonumber \\
    &= \frac{1}{4}\log_2\left(\frac{1}{32a^3cd(1-a)^3}\right),\nn
\end{align}
where $(a)$ follows from \eqref{policy_Ising}, and therefore the Bellman equation holds for $\underline{z}=[0,0,0,0]$. It is now left to verify that the suggested policy in \eqref{policy_Ising} is indeed optimal under the constraints given in \eqref{eq:Ising_constraints}. Here, too, we will verify it only for $\underline{z}=[0,0,0,0]$ and the verification for the other states can be done similarly.
\begin{align} \label{eq:Ising_nonnegetive}
    &\left[g(0,0,0,0,1)+h(0,0,0,1)\right]-\left[g(0,0,0,0,0)+h(0,0,0,0)\right] \nn\\ &=\frac{1}{4}\log_2\left(\frac{1}{32a^3cd(1-a)^3}\right)-\frac{1}{4}\log_2\left(\frac{1}{16a^6\bar{a}^2}\right)\nn \\
    &= \frac{1}{4}\log_2\left(\frac{a^3}{2cd\bar{a}}\right),
\end{align}
where we note that \eqref{eq:Ising_nonnegetive} is nonnegative when $0\leq a^3-2\bar{a}cd$. Therefore, under the constraints in \eqref{eq:Ising_constraints}, $u=1$ is indeed the optimal action when $\underline{z}=[0,0,0,0]$.

\end{proof}
\section{DEC --- Proof of Theorem \ref{theorem: DEC_UB}} \label{app. DEC_ub}
\begin{proof}
The proof is based on the $Q$-graph depicted in Fig. \ref{fig:DEC_Q} and on the following graph-based test distribution:
\begin{align*}
	R_{Y|Q}  = \begin{bmatrix} 
0 & 0.5\bar{\epsilon} & 0.5\bar{\epsilon} & \epsilon \\
0.5\bar{\epsilon} & 0.5\bar{\epsilon} & 0 & \epsilon \\
0.5p\bar{\epsilon} & \bar{p}\bar{\epsilon} & 0.5p\bar{\epsilon} & \epsilon \\
\end{bmatrix},
\end{align*}
where $p\in[0,1]$, the rows correspond to $Q=1,2,3$ and the columns correspond to $Y=-1,0,1,?$ in that order.

Now, note that some of the test distribution entries are equal to zero, and therefore, the condition $R_{Y|Q}\succ 0$ in Theorem~\ref{main_bound} does not hold. However, it can be easily verified that the condition in Remark \ref{remark: UB_restriction} holds. This is mainly due to the fact that when $Q=1$ the previous state must be equal to $0$, and when $Q=2$ the previous state must be equal to $1$. We omit the details of this verification.

Using the above choice of a test distribution, one can show that the Bellman equation holds. However, since the upper bound is exactly equal to the feedback capacity, and $\mathsf{C}\le \mathsf{C_{FB}}$ (where $\mathsf{C_{FB}}$ denotes the feedback capacity), we will not provide here the proof that the Bellman equation holds. It will only be shown that the resultant upper bound expression in Theorem \ref{theorem: DEC_UB} is equal to the feedback capacity \cite{Sabag_UB_IT}.

The feedback capacity of the DEC is given by
\begin{align*}
    \mathsf{C_{FB}}= \max_{p\in[0,1]} (1-\epsilon)\frac{p+\epsilon H_2(p)}{\epsilon+(1-\epsilon)p}.
\end{align*}
Denote
\begin{align*}
    G(p,\epsilon) = (1-\epsilon)\frac{p+\epsilon H_2(p)}{\epsilon+(1-\epsilon)p}.
\end{align*}
Straightforward calculations show that the derivative of $G(p,\epsilon)$ (with respect to $p$) is equal to zero iff
\begin{align} \label{DEC_derv_0}
    G(p,\epsilon)= 1+\epsilon\log_2\left(\frac{1-p}{p}\right).
\end{align}
Therefore, $\mathsf{C_{FB}} = G(p^*,\epsilon)$ where $p^* = \argmax_{p\in[0,1]} G(p,\epsilon)$.
Using simple algebra, it can be further verified that \eqref{DEC_derv_0} holds iff $2\bar{p}=p^{\epsilon}$. Hence, $p^*$ is the solution $p$ of the equation $2\bar{p}=p^{\epsilon}$.
\end{proof}

\section{DEC --- Proof of Theorem \ref{theorem: DEC_LB}} \label{app. DEC_LB}
\begin{proof}
The basic idea of the lower bound proof is to consider input sequences that are restricted to a first-order Markov process, i.e.,
\begin{align}\label{eq: Markov1}
    P_{X^n}(x^n) = \prod_{i=1}^n P_{X|X^-}(x_i|x_{i-1}).
\end{align}
In the following we denote by $\cal{P_{\mathrm{markov}}}$ the set of all distributions satisfying \eqref{eq: Markov1}. The capacity of the DEC is then lower bounded by
\begin{align}\label{eq: DEC_Markov_obj}
    \mathsf{C_{DEC}} \ge \lim_{n\to\infty}\max_{P(x^n)\in\cal{P_{\mathrm{markov}}}}\frac{1}{n}I(X^n;Y^n|S_0=s_0)
\end{align}
for any $s_0\in\mathcal{S}$. Based on the channel symmetry, we consider the following input distribution:
\begin{align*}
	P_{X|X^-}(0|0) = P_{X|X^-}(1|1) = a,
\end{align*}
where $a\in[0,1]$.
In the following, we will find the mutual information in \eqref{eq: DEC_Markov_obj} explicitly:
\begin{align} \label{eq: DEC_tocalc}
    &I(X^n;Y^n|S_0=s_0) \nn\\&= H(Y^n|S_0=s_0) - H(Y^n|X^n,S_0=s_0) \nn \\
        &= \sum_{i=1}^n \left[H(Y_i|Y^{i-1},S_0=s_0) - H(Y_i|Y^{i-1},X^n,S_0=s_0)\right] \nn\\
        &\stackrel{(a)}= \sum_{i=1}^n H(Y_i|Y^{i-1},S_0=s_0) - nH_2(\epsilon),
\end{align}
where $(a)$ follows by the Markov chain $Y_i-(X_i,X_{i-1})-(X^{i-2},X_{i+1}^n,Y^{i-1})$ and the channel law. To find $H(Y_i|Y^{i-1},S_0=s_0)$, we will calculate the probabilities $P(y_i|y^{i-1},s_0)$.
First, let us find the distribution $P(x_i=0|y^i,s_0)$ for any possible output sequence $y^i$. We will show that this distribution induces the graph depicted in Fig. \ref{fig:DEC_Q_ext}. 
For any output sequence $y^{i-1}$,
\begin{align}
    P(x_i=0|y_i=-1,y^{i-1},s_0) = 1, \label{eq: DEC_y_s0}\\
    P(x_i=0|y_i=1,y^{i-1},s_0) = 0, \label{eq: DEC_y_s1}
\end{align}
where \eqref{eq: DEC_y_s0} follows since the channel output is $y_i=-1$ iff $x_i=0$, and \eqref{eq: DEC_y_s1} follows since the channel output is $y_i=1$ iff $x_i=1$. Further,
\begin{align}\label{eq: DEC_y0}
    &P(x_i=0|y_i=0,y^{i-1},s_0) \nn\\
    &= \frac{\sum_{x_{i-1}}P(x_{i-1}|y^{i-1},s_0)P(x_i=0|x_{i-1})P_{Y|X,S}(0|0,x_{i-1})}{\sum_{x_{i-1}^i}P(x_{i-1}|y^{i-1},s_0)P(x_i|x_{i-1})P_{Y|X,S}(0|x_i,x_{i-1})}\nn\\
    &= \frac{a\bar{\epsilon}P(x_{i-1}=0|y^{i-1},s_0)}{a\bar{\epsilon}P(x_{i-1}=0|y^{i-1},s_0) + a\bar{\epsilon}P(x_{i-1}=1|y^{i-1},s_0)}\nn\\
    &= P(x_{i-1}=0|y^{i-1},s_0),
\end{align}
and
\begin{align}\label{eq: DEC_y_?}
    &P(x_i=0|y_i=?, y^{i-1},s_0) 
    \nn\\&= \frac{\sum_{x_{i-1}}P(x_{i-1},x_i=0,y_i=?|y^{i-1},s_0)}{\sum_{x_{i-1}^i}P(x_{i-1},x_i,y_i=?|y^{i-1},s_0)}\nn\\
    &=\sum_{x_{i-1}}P(x_{i-1}|y^{i-1},s_0)P(x_i=0|x_{i-1})\nn\\
    &= aP(x_{i-1}=0|y^{i-1},s_0)+\bar{a}P(x_{i-1}=1|y^{i-1},s_0).
\end{align}
\begin{figure}[b!]
\centering
\psfrag{A}[][][0.7]{$Q=A_0$}
    \psfrag{B}[][][0.7]{$Q=B_0$}
    \psfrag{H}[][][0.7]{$Q=A_1$}
    \psfrag{I}[][][0.7]{$Q=B_1$}
    \psfrag{J}[][][0.7]{$Q=A_n$}
    \psfrag{K}[][][0.7]{$Q=B_n$}
    \psfrag{Q}[][][0.9]{$0/1$}
    \psfrag{R}[][][0.9]{$0/-1$}
    \psfrag{Z}[][][0.9]{$0$}
    \psfrag{D}[][][0.9]{$-1$}
    \psfrag{E}[][][0.9]{$1$}
    \psfrag{F}[][][0.9]{$?$}
    \includegraphics[scale = 0.42]{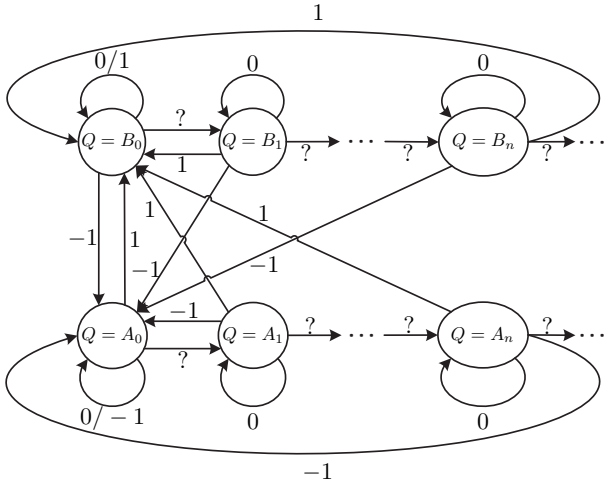}
    \caption{A $Q$-graph with $\mathcal{Y}=\{-1,0,1,?\}$. Each node on the graph has an outgoing edge labeled with $y=-1$ to the node $Q=A_0$ and an outgoing edge labeled with $y=1$ to the node $Q=B_0$. Further, each node has a self-loop labeled with $y=0$.}
    \label{fig:DEC_Q_ext}
\end{figure}
Based on \eqref{eq: DEC_y_s0}--\eqref{eq: DEC_y_?} we now show that the probability $P(x_i=0|y^i,s_0)$ induces the graph depicted in Fig \ref{fig:DEC_Q_ext}. Equations \eqref{eq: DEC_y_s0} and \eqref{eq: DEC_y_s1} imply that, for any possible output sequence $y^{i-1}$, if $y_i=1$ or $y_i=-1$, then $P(x_i=0|y^i)$ is equal to 0 or 1, respectively. Therefore, each node on the graph in Fig \ref{fig:DEC_Q_ext} has an outgoing edge labeled with $y=-1$ to $Q=A_0$ and an outgoing edge labeled with $y=1$ to $Q=B_0$.
Equation \eqref{eq: DEC_y0} implies that each possible node on the graph has a self-loop labeled with $y=0$. Finally, \eqref{eq: DEC_y_?} implies that, if the current output is $y_i=?$, then there is an outgoing edge labeled with $y=?$ to the next node on the graph, as depicted in Fig \ref{fig:DEC_Q_ext}. Note that the induced graph contains an infinite number of nodes.

To conclude, given an initial node $q_0$, there exists a unique mapping $\Phi_{q_0}:{\cY}^{i}\to \cQ$ from an output sequence $y^{i}$ to a unique node on the induced graph. Therefore, the equality $P(x_i=0|y^i,s_0) = P(x_i=0|q_i,s_0)$ holds where $q_i = \Phi(y^i)$.
Accordingly, using \eqref{eq: DEC_y_s0}--\eqref{eq: DEC_y_?}, it follows that for $q\in\mathbb{N}\cup\{0\}$
\begin{align}
    &P(x_i=0|y^i,s_0) \nn\\&= \begin{cases}
          1-\sum_{\emph{k}\le q\text{, \emph{k} odd}} \binom{q}{\emph{k}} \cdot (1-a)^\emph{k}a^{q-\emph{k}}, & \text{if $\Phi(y^i)= A_q$} \\
          \sum_{\emph{k}\le q\text{, \emph{k} odd}} \binom{q}{\emph{k}} \cdot (1-a)^\emph{k}a^{q-\emph{k}}, & \text{if $\Phi(y^i)= B_q$}
    \end{cases}\nn\\
    &\triangleq \begin{cases}
          \alpha_q, & \text{if $\Phi(y^i)= A_q$}, \\
          1-\alpha_q, &\text {if $\Phi(y^i)= B_q$},
    \end{cases}
\end{align}
where $\alpha_q = \frac{1+(2a-1)^q}{2}$.

We now calculate $P(y_i|y^{i-1},s_0)$ for any possible output sequence $y^i$:
\begin{align}
    P(y_i=0|y^{i-1},s_0) &= \sum_{x_{i-1}^i} P(x_{i-1}|y^{i-1},s_0)P(x_i|x_{i-1})\nn\\&\quad\times P(y_i=0|x_i,x_{i-1})\nn\\
    &= \bar{\epsilon}\big{[}P_{X|X^-}(0|0)P(x_{i-1}=0|y^{i-1},s_0)\nn\\&\quad+P_{X|X^-}(1|1)P(x_{i-1}=1|y^{i-1},s_0)\big{]}\nn\\
    &= a\bar{\epsilon}, \label{eq:y_dist1}\\
    P(y_i=?|y^{i-1},s_0)
    &= \epsilon,\label{eq:y_dist2}\\
    P(y_i=1|y^{i-1},s_0) &=  \bar{a}\bar{\epsilon}P(x_{i-1}=0|y^{i-1},s_0)\nn\\
    &= \begin{cases} \bar{a}\bar{\epsilon}\alpha_q, & \text{if $\Phi(y^{i-1})=A_q$},\\
          \bar{a}\bar{\epsilon}(1-\alpha_q), & \text{if $\Phi(y^{i-1})=B_q$}.\label{eq:y_dist3}
    \end{cases}
\end{align}
To find the stationary distribution induced by the graph, we first calculate the transition probability $P_{Q|Q^-}$ as follows:
\begin{align}
    P_{Q|Q^-}(q_t|q_{t-1}) &= \sum_{x_{t-1}^t,y_t}P(x_{t-1},x_t,y_t,q_t|q_{t-1})\nn\\
        &= \sum_{x_{t-1}^t,y_t}P(x_{t-1}|q_{t-1})\nn\\&\times P(x_t|x_{t-1})P(y_t|x_t,x_{t-1})P(q_t|q_{t-1},y_t).\nn
\end{align}

Based on the graph symmetry and by using simple algebra, it follows that the stationary distribution that is induced by the transition probability $P_{Q|Q^-}$ is 
\begin{align}
    P_Q(A_q) = P_Q(B_q) =
          k\left(\frac{\epsilon}{1-a\bar{\epsilon}}\right)^q,
\end{align}
where $q\in\mathbb{N}\cup\{0\}$ and $k$ is a constant in $[0,1]$. Recall that the entries of the stationary distribution must sum to $1$:
\begin{align}
    \sum_{q\in\mathcal{Q}} P_Q(q) &= \sum_{i=0}^\infty 2k\left(\frac{\epsilon}{1-a\bar{\epsilon}}\right)^i \nn\\
    &\stackrel{(a)}= \frac{2k(1-a\bar{\epsilon})}{\bar{a}\bar{\epsilon}}, \nn
\end{align}
where $(a)$ follows by using the formula of a geometric series with a common ratio $\frac{\epsilon}{1-a\bar{\epsilon}}$. Hence,
\begin{align} \label{eq: dec_kcond}
    k = \frac{\bar{a}\bar{\epsilon}}{2(1-a\bar{\epsilon})}.
\end{align}

We can now find explicitly the lower bound in \eqref{eq: DEC_Markov_obj}:
\begin{align}
    \mathsf{C_{DEC}}&\geq \lim_{n\to\infty}\max_{P(x^n)\in\cal{P_{\mathrm{markov}}}}\frac{1}{n}I(X^n;Y^n|S_0=s_0) \nn\\
    &= \max_{a\in[0,1]}\sum_{q\in\mathcal{Q}}P_Q(q)H(Y_i|Q=q,S_0=s_0) - H_2(\epsilon)\nn\\
    &\stackrel{(a)}= \max_{a\in[0,1]}\sum_{q=0}^\infty \left(P_Q(A_q)+P_Q(B_q)\right)\nn\\&\qquad\qquad\times H_4\left(\epsilon,a\bar{\epsilon},\bar{a}\bar{\epsilon}\alpha_q,\bar{a}\bar{\epsilon}\bar{\alpha_q}\right) - H_2(\epsilon)\nn\\
    &= \max_{a\in[0,1]}\sum_{q=0}^\infty \bar{a}\bar{\epsilon}\left(P_Q(A_q)+P_Q(B_q)\right)H_2\left(\alpha_q\right) +\bar{\epsilon} H_2(a)\nn\\
    &= \max_{a\in[0,1]} \bar{\epsilon}H_2(a) + \frac{\bar{a}^2\bar{\epsilon}^2}{\epsilon}\sum_{q=0}^{\infty} \left(\frac{\epsilon}{1-a\bar{\epsilon}}\right)^{q+1}\nn\\&\qquad\qquad\times H_2\left(\frac{1-(2a-1)^q}{2}\right),
\end{align}
where $H_4(a_1,a_2,a_3,a_4) = -\sum_{i=1}^4 a_i\log_2(a_i)$ and (a) follows from \eqref{eq:y_dist1}--\eqref{eq:y_dist3}.

\end{proof}
\section{POST Channel --- Proof of Theorem \ref{Theorem: Post}} \label{app. POST}
\begin{proof}
The proof is based on the Markov $Q$-graph depicted in Fig. \ref{fig:1Markov} and the following optimized graph-based test distribution:
\begin{align*}
	R_{Y|Q}(0|0) = R_{Y|Q}(1|1) = (1+\bar{p}p^{\frac{p}{\bar{p}}})^{-1}.
\end{align*}

Define $\underline{z}$ as the pmf on $\mathcal{Q}$ that corresponds to the DP state. To simplify the notation, we denote $K\triangleq (1+\bar{p}p^{\frac{p}{\bar{p}}})^{-1}$. Further, since the vector $\underline{z}$ consists of only two components that sum to one, we then consider the DP state to be only the first component and denote it by $z$.

According to the DP formulation, the next DP state is computed as
\begin{align*}
F(z,u) &=
\begin{cases} 
      z+\bar{p}\cdot\left(1-z\right), & u = 0, \\
      p\cdot z, & u = 1.
\end{cases} 
\end{align*}
We now calculate the reward function explicitly. When $u=0$ the reward is
\begin{align*}
&g(z,u=0) \nn\\&= z\log_2\left(\frac{1}{K}\right)+\bar{z}\cdot\left[\bar{p}\log_2\left(\frac{\bar{p}}{1-K}\right)+p\log_2\left(\frac{p}{K}\right)\right],
\end{align*}
and when $u=1$ the reward is
\begin{align*}
&g(z,u=1) \nn\\&=  z\cdot\left[p\log_2\left(\frac{p}{K}\right)+\bar{p}\log_2\left(\frac{\bar{p}}{1-K}\right)\right] + \bar{z}\log_2\left(\frac{1}{K}\right).
\end{align*}

Recall that to solve the Bellman equation, one should identify a scalar $\rho\in\mathbb{R}$ and a function $h:\mathcal{Z}\to\mathbb{R}$ such that
\begin{align} \label{POST_bellman}
&\rho+h(z) \nn\\&= \max_u g(z,u)+h(F(z,u)), \nn\\&= \max_u
\begin{cases} 
      g(z,u=0) + h(z+\bar{p}\bar{z}), & u = 0, \\
      g(z,u=1) + h(pz), & u = 1,
 \end{cases} 
\end{align}
for all $z\in\mathcal{Z}$ and $p\in[0,1]$. In the following, we show that $\rho^* = \log_2\left(\frac{1}{K}\right)$ and the function
\begin{align*}
h^*(z) &= z\log_2\left(\frac{p\bar{K}}{\bar{p}K}\right)+\frac{(1-z)\log_2(p)}{1-p}, 
\end{align*}
solves \eqref{POST_bellman}.

Let us assume that the optimal policy is given by $u^* = 0$ for all $z\in\mathcal{Z}$. Accordingly, by using simple algebra, it follows that the right-hand side of \eqref{POST_bellman} is
\begin{align} \label{eq. POST_reward0}
&g(z,0)+h(F(z,0)) \nn\\&= \log_2\left(\frac{1}{K}\right) + z\log_2\left(\frac{p\bar{K}}{\bar{p}K}\right)+ \frac{\bar{z}}{\bar{p}}\log_2(p).
\end{align}
Further, for any $z\in\mathcal{Z}$, we note that \eqref{eq. POST_reward0} is exactly equal to the left-hand side of the Bellman equation. Therefore, assuming $u^*=0$, the Bellman equation is satisfied.
We will now verify that $u^*=0$. Again, by using simple algebra, we get
\begin{align*}
&g(z,1)+h(F(z,1)) \nn\\&= \log_2\left(\frac{1}{K}\right) + \frac{\bar{z}}{\bar{p}}\log_2(p) + 2zp\log_2\left(\frac{p\bar{K}}{\bar{p}K}\right)\nn\\&\quad+z\log_2\left(\frac{p\bar{p}K}{\bar{K}}\right),
\end{align*}
and therefore,
\begin{align*}
&\left[g(z,0)+h(F(z,0))\right] - \left[g(z,1)+h(F(z,1))\right] \nn\\&= z\log_2\left(\frac{p\bar{K}}{\bar{p}K}\right) - 2zp\log_2\left(\frac{p\bar{K}}{\bar{p}K}\right) - z\log_2\left(\frac{p\bar{p}K}{\bar{K}}\right)\\
&= 0.
\end{align*}
This implies that, for any choice of the action, the right-hand side of \eqref{POST_bellman} is the same. Therefore, the assumption that $u^*=0$ holds.
\end{proof}

\section*{Acknowledgment}
The authors would like to thank the Associate Editor and the anonymous reviewers for their valuable and constructive comments, which helped to improve this paper.
\end{appendices}
\bibliography{ref}
\bibliographystyle{IEEEtran}
\begin{IEEEbiographynophoto}{Bashar Huleihel}
(Student Member, IEEE) received the B.Sc. and M.Sc. degrees in electrical and computer engineering from the Ben-Gurion University of the Negev, Israel, in 2017 and 2020, respectively. He is currently pursuing the Ph.D. degree in electrical and computer engineering at the same institution. His research interests include information theory and machine learning.
\end{IEEEbiographynophoto}

\begin{IEEEbiographynophoto}{Oron Sabag}
(Member, IEEE) received the B.Sc. (cum laude), the M.Sc. (summa cum laude) and the Ph.D. in Electrical and Computer Engineering from the Ben-Gurion University of the Negev, Israel, in 2013, 2016 and 2019, respectively. He is currently a postdoctoral fellow with the Department of Electrical Engineering at Caltech. His research interests include control theory, information theory and reinforcement learning. 

He is a recipient of several awards, among them are ISEF postdoctoral fellowship, Lachish Fellowship, ISIT-2017 best student paper award, SPCOM-2016 best student paper award, the Feder Family Award for outstanding research in communications and the Kaufman award.
\end{IEEEbiographynophoto}

\begin{IEEEbiographynophoto}{Haim Permuter}
(Senior Member, IEEE) received the B.Sc. and M.Sc. degrees (summa cum laude) in electrical and computer engineering from Ben-Gurion University of the Negev, Israel, in 1997 and 2003, respectively, and the Ph.D. degree in electrical engineering from Stanford University, Stanford, CA, USA, in 2008. From 1997 to 2004, he was an Officer with the Research and Development Unit of the Israeli Defense Forces. Since 2009, he has been with the Department of Electrical and Computer Engineering, Ben-Gurion University of the Negev, where he is currently a Professor and the Luck-Hille Chair of electrical engineering. He also serves as the Head of the communication track in his department. He was a recipient of several awards, among them the Fullbright Fellowship, the Stanford Graduate Fellowship (SGF), the Allon Fellowship, and the U.S.–Israel Binational Science Foundation Bergmann Memorial Award. He has served on the editorial boards for the IEEE TRANSACTIONS ON INFORMATION THEORY from 2013 to 2016.
\end{IEEEbiographynophoto}
\newpage
\begin{IEEEbiographynophoto}{Navin Kashyap}
(Senior Member, IEEE) received the B.Tech. degree in Electrical Engineering from the Indian Institute of Technology, Bombay, in 1995, the M.S. degree in Electrical Engineering from the University of Missouri-Rolla in 1997, and the M.S. degree in Mathematics and the Ph.D. degree in Electrical Engineering from the University of Michigan, Ann Arbor, in 2001. From November 2001 to November 2003, he was a postdoctoral research associate at the University of California, San Diego. From 2004 to 2010, he was on the faculty of the Department of Mathematics and Statistics at Queen’s University, Kingston, Ontario. In January 2011, he joined the Department of Electrical Communication Engineering at the Indian Institute of Science, where is currently a Professor. His research interests lie primarily in the application of combinatorial and probabilistic methods in information and coding theory.

Prof. Kashyap was appointed as a Distinguished Lecturer of the IEEE Information Theory Society for 2017–2018. He served on the editorial board of the IEEE TRANSACTIONS ON INFORMATION THEORY during the period 2009–2014. He is at present an Associate Editor for the SIAM Journal on Discrete Mathematics and for the journal Cryptography and Communications (Springer).
\end{IEEEbiographynophoto}
\vspace{20cm}
\begin{IEEEbiographynophoto}{Shlomo Shamai}
(Life Fellow, IEEE) is currently with the Department of Electrical Engineering, Technion–Israel Institute of Technology, where he is also a Technion Distinguished Professor, and holds the William Fondiller Chair of Telecommunications. He is also an URSI Fellow, a member of the Israeli Academy of Sciences and Humanities, and a Foreign Member of the U.S. National Academy of Engineering. He was a recipient of the 2011 Claude E. Shannon Award, the 2014 Rothschild Prize in Mathematics/Computer Sciences and Engineering, and the 2017 IEEE Richard W. Hamming Medal. He was a co-recipient of the 2018 Third Bell Labs Prize for Shaping the Future of Information and Communications Technology. He was also a recipient of numerous technical and paper awards and recognitions of the IEEE (Donald G. Fink Prize Paper Award), Information Theory, Communications and Signal Processing Societies, and EURASIP. He is listed as a Highly Cited Researcher (Computer Science) for the years 2004, 2005, 2006, 2007, 2008, and 2013. He has served as an Associate Editor for the Shannon Theory of the IEEE TRANSACTIONS ON INFORMATION THEORY. He has also served twice on the Board of Governors for the Information Theory Society. He has also served on the Executive Editorial Board for the IEEE TRANSACTIONS ON INFORMATION THEORY, the IEEE Information Theory Society Nominations and Appointments Committee, and the IEEE Information Theory Society, Shannon Award Committee.
\end{IEEEbiographynophoto}
\end{document}